\documentclass[a4paper]{article}
\usepackage[utf8]{inputenc}

\usepackage[margin=1.3in]{geometry}

\usepackage{amsthm,amssymb,amsfonts,amsmath}
\usepackage{amssymb,amsfonts,amsmath}
\usepackage{xcolor}
\usepackage{mathabx}
\newtheorem{theorem}{Theorem}
\newtheorem{lemma}[theorem]{Lemma}
\newtheorem{corollary}[theorem]{Corollary}

\newtheorem{conjecture}{Conjecture}

\usepackage{graphicx}
\graphicspath{{figures/}}

\usepackage{thm-restate}
\usepackage{url}
\usepackage{faktor}
\usepackage{tikz-cd}

\title{Equipartitions with Wedges and Cones}

\author{Patrick Schnider\thanks{Department of Mathematical Sciences,
        University of Copenhagen, Denmark. {\tt ps@math.ku.dk}. Has received funding from the European Research Council under the European Unions Seventh Framework Programme ERC Grant agreement ERC StG 716424 - CASe. Part of this work was done when the author was employed at ETH Z\"{u}rich.}}
        
\date{}

\begin{document}

\maketitle

\begin{abstract}
A famous result about mass partitions is the so called \emph{Ham-Sandwich theorem}.
It states that any $d$ mass distributions in $\mathbb{R}^d$ can be simultaneously bisected by a single hyperplane.
In this work, we study two related questions.

The first one is how many masses we can simultaneously partition with a $k$-fan, that is, $k$ half-hyperplanes in $\mathbb{R}^d$, emanating from a common $(d-2)$-dimensional apex.
This question was extensively studied in the plane, but in higher dimensions the only known results are for the case where $k$ is an odd prime.
We extend these results to a larger family of values of $k$.
We further present a new result for $k=2$, which generalizes to cones.

The second question considers bisections with double wedges or, equivalently, Ham-Sandwich cuts after projective transformations.
Here we prove that given $d$ families of $d+1$ point sets each, there is always a projective transformation such that after the transformation, each family has a Ham-Sandwich cut.
We further prove a result on partitions with parallel hyperplanes after a projective transformation.

%

All of our results are proved using topological methods.
We use some well-established techniques, but also some newer methods.
In particular, we introduce a Borsuk-Ulam theorem for flag manifolds, which we believe to be of independent interest.
\end{abstract}


\section{Introduction}
Equipartitions of point sets and mass distributions are essential problems in combinatorial geometry.
The general goal is the following: given a number of masses in some space, we want to find a dissection of this space into regions such that each mass is evenly distributed over all the regions, that is, each region contains the same amount of this mass.
Usually, the underlying space considered is a sphere or Euclidean space, and the regions are required to satisfy certain conditions, for example convexity.
Arguably the most fundamental result about equipartitions is the \emph{Ham-Sandwich theorem} (see e.g.\ \cite{Matousek, StoneTukey}, Chapter 21 in \cite{Handbook}), which states that any $d$ mass distributions in $\mathbb{R}^d$ can be simultaneously bisected by a single hyperplane.
A \emph{mass distribution} $\mu$ in $\mathbb{R}^d$ is a measure on $\mathbb{R}^d$ such that all open subsets of $\mathbb{R}^d$ are measurable, $0<\mu(\mathbb{R}^d)<\infty$ and $\mu(S)=0$ for every lower-dimensional subset $S$ of $\mathbb{R}^d$.
The result is tight in the sense that there are collections of $d+1$ masses in $\mathbb{R}^d$ that cannot be simultaneously bisected by a single hyperplane.

However, the restriction that the regions of the dissection should be half-spaces is quite strong, maybe we can bisect more masses by relaxing the conditions on the regions?
Also, the Ham-Sandwich theorem only provides a dissection of the space into two regions, and it is a natural question whether similar statements can be found for more than two regions.
Many variants of mass partitions have been studied, see \cite{SoberonSurvey} for a recent survey.

This paper is mainly motivated by partitions with two objects: fans and double wedges.
In the following, we briefly give an overview over these objects and our results.

\subsection{Fans}

A \emph{$k$-fan} in $\mathbb{R}^d$ is defined by a $(d-2)$ dimensional flat $a$, which we call \emph{apex}, and $k$ semi-hyperplanes emanating from it.
If the apex contains the origin, we call the fan a \emph{$k$-fan through the origin}.
Each $k$-fan partitions $\mathbb{R}^d$ into $k$ wedges, which can be given a cyclic order $W_1,\ldots, W_k$.
For $\alpha=(\alpha_1,\ldots,\alpha_k)$ with $\alpha_1+\ldots+\alpha_k=1$ we say that a $k$-fan \emph{simultaneously $\alpha$-partitions} the mass distributions $\mu_1,\ldots,\mu_{n}$ if $\mu_i(W_j)=\alpha_j\cdot\mu_i(\mathbb{R}^d)$ for every $i\in\{1,\ldots,n\}$ and $j\in\{1,\ldots,k\}$.
See Figure \ref{Fig:fans} for an illustration.

\begin{figure}
\centering
\includegraphics[scale=0.7]{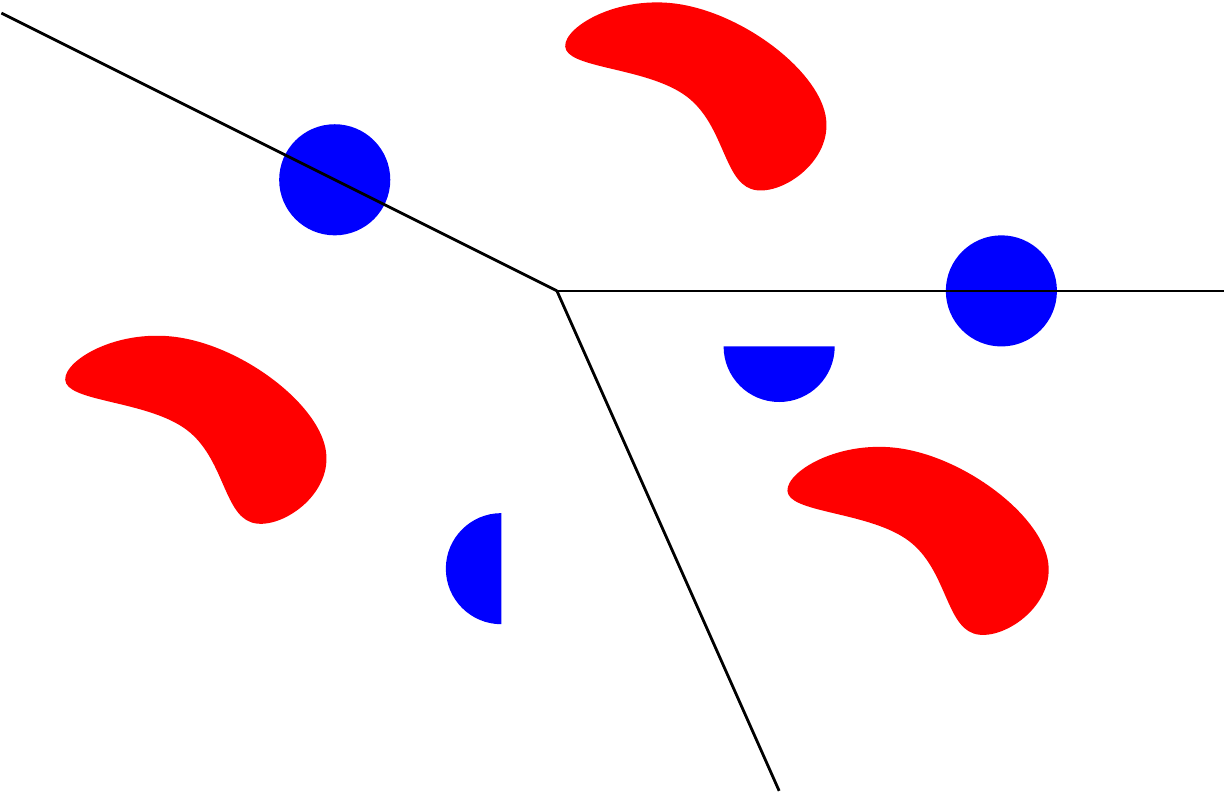}
\caption{Two masses simultaneously $(\frac{1}{3},\frac{1}{3},\frac{1}{3})$-partitioned by a 3-fan.}
\label{Fig:fans}
\end{figure}

Equipartitions with $k$-fans in the plane have been extensively studied, see \cite{Barany4, Barany3, Barany2, Barany, Bereg2, Blagojevic2, Blagojevic, ZivFans}.
In higher dimensions, to the author's knowledge, the only result is due to Makeev, who has shown that if $p$ is an odd prime and $2d-1\geq m(p-1)$, then any $m+1$ mass distributions $\mathbb{R}^d$ can be simultaneously $(\frac{1}{p},\ldots,\frac{1}{p})$-partitioned by a $p$-fan (see Theorem 57 in \cite{KarasevSurvey}).
As there does not seem to be an English proof of this result, we include one in Section \ref{sec:fans}\footnote{The proof presented there is based on ideas communicated to the author by Roman Karasev.}.
However, there seems to be nothing known about partitions with $k$-fans in higher dimensions when $k$ is a composite number.
We give a first such result by extending Makeevs arguments to products of distinct odd primes:

\begin{restatable}{theorem}{fans}
\label{Thm:fans_main}
Let $k=p_1p_2\cdots p_n$ be a product of pairwise distinct odd primes.
\begin{itemize}
\item Let $2d-3\geq m(k-1)$.
Then any $m+1$ mass distributions $\mathbb{R}^d$ can be simultaneously $(\frac{1}{k},\ldots,\frac{1}{k})$-partitioned by a $k$-fan through the origin.
\item Let $2d-1\geq m(k-1)$.
Then any $m+1$ mass distributions $\mathbb{R}^d$ can be simultaneously $(\frac{1}{k},\ldots,\frac{1}{k})$-partitioned by a $k$-fan.
\end{itemize}
\end{restatable}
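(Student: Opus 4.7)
The plan is to set up the standard topological obstruction for mass partitions---configuration space, equivariant test map, equivariant Borsuk-Ulam---and then to generalise Makeev's prime-case argument to $k = p_1 \cdots p_n$ by exploiting the Chinese remainder decomposition of $\mathbb{Z}/k$ and computing cohomological obstructions one prime at a time. For fans through the origin I parametrise
\[ C_k \;=\; \{(a, r_1, \ldots, r_k) : a \in G(d-2, d),\ r_j \in S^1 \subset a^\perp\}, \]
a smooth manifold of real dimension $2d-4+k$ with a cyclic $\mathbb{Z}/k$-action rotating the rays, free on the open dense stratum of pairwise distinct rays. For general $k$-fans I let $a$ be affine, adding $2$ to the dimension. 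Let $\widetilde{W} \cong \mathbb{R}^{k-1}$ denote the reduced regular representation of $\mathbb{Z}/k$. After smoothing the masses, define the continuous $\mathbb{Z}/k$-equivariant test map $f \colon C_k \to \widetilde{W}^{m+1}$ by
\[ f(F)_{i,j} \;=\; \mu_i(W_j(F)) - \tfrac{1}{k}\mu_i(\mathbb{R}^d), \qquad 1 \le i \le m+1,\ 1 \le j \le k, \]
whose zeros are exactly the simultaneous $(\tfrac{1}{k},\ldots,\tfrac{1}{k})$-partitions. The hypothesis is precisely $\dim C_k \ge \dim \widetilde{W}^{m+1}$, so the theorem reduces to showing that no $\mathbb{Z}/k$-equivariant map $C_k \to \widetilde{W}^{m+1} \setminus \{0\}$ exists.

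To handle composite $k$, I use the CRT isomorphism $\mathbb{Z}/k \cong \mathbb{Z}/p_1 \times \cdots \times \mathbb{Z}/p_n$ together with the orthogonal decomposition
\[ \widetilde{W} \;=\; \bigoplus_{\emptyset \neq S \subseteq [n]} W_S, \qquad \dim_\mathbb{R} W_S \;=\; \prod_{i \in S}(p_i - 1), \]
where $W_S$ collects those non-trivial real characters of $\mathbb{Z}/k$ on which $\mathbb{Z}/p_i$ acts non-trivially precisely for $i \in S$; the dimensions sum to $k-1$. For each prime $p_i$, the subgroup $\mathbb{Z}/p_i \le \mathbb{Z}/k$ acts freely on $C_k$, and on $\widetilde{W}^{m+1}$ its action is trivial on $\bigoplus_{S \not\ni i} W_S^{m+1}$ and faithful on the complement $V_i := \bigoplus_{S \ni i} W_S^{m+1}$, of real dimension $(m+1)(k - k/p_i)$. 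Restricted to each $\mathbb{Z}/p_i$, Makeev's Fadell-Husseini/Euler-class argument (the prime case treated in Section \ref{sec:fans}) applied to $V_i$ produces a non-vanishing equivariant Euler class in $H^*(B\mathbb{Z}/p_i;\mathbb{F}_{p_i})$ in degree $(m+1)(k - k/p_i)$.

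The main obstacle is then to combine these $n$ mod-$p_i$ obstructions into a single non-vanishing statement for the full $\mathbb{Z}/k$-equivariant Euler class of $\widetilde{W}^{m+1}$. Since $k$ is squarefree, $H^*(B\mathbb{Z}/k;\mathbb{F}_{p_i}) \cong H^*(B\mathbb{Z}/p_i;\mathbb{F}_{p_i})$, and the $\mathbb{Z}/k$-Euler class reduces mod $p_i$ to the Euler class of $V_i$ (the trivial summands contribute $0$); a Chinese-remainder assembly then yields a non-zero integral equivariant Euler class. The technical heart of the argument is verifying that the Fadell-Husseini index of $C_k$ for the $\mathbb{Z}/p_i$-action reaches the required top degree for every $p_i$. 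This demands a careful calculation of the mod-$p_i$ cohomology of the Borel construction $C_k \times_{\mathbb{Z}/p_i} E\mathbb{Z}/p_i$, using the description of $C_k$ as an iterated circle bundle over a (possibly affine) Grassmannian; once the index estimate holds for every prime, no equivariant map $C_k \to \widetilde{W}^{m+1}\setminus\{0\}$ can exist, and so $f$ must vanish somewhere, producing the desired $k$-fan equipartition.
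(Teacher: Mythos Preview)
Your proposal has the right general shape---configuration space, equivariant test map, topological obstruction---but the heart of the argument, the ``Chinese-remainder assembly'' of Euler classes, does not work.

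The central claim that ``the $\mathbb{Z}/k$-Euler class reduces mod $p_i$ to the Euler class of $V_i$ (the trivial summands contribute $0$)'' is false in the direction you need. When you restrict $\widetilde{W}^{m+1}$ to $\mathbb{Z}/p_i$, the summands on which $\mathbb{Z}/p_i$ acts trivially have Euler class equal to zero, and the Euler class is multiplicative; hence the mod-$p_i$ equivariant Euler class of the whole representation is $0$, not $e(V_i)$. Concretely, for $k=15$ the Euler class of $\widetilde W$ in $H^{14}(B\mathbb{Z}/15;\mathbb{Z})\cong\mathbb{Z}/15$ is $\bigl(\tfrac{k-1}{2}\bigr)!\,u^{(k-1)/2}=7!\,u^{7}\equiv 0$, since $15\mid 7!$. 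More conceptually: restricted to any single $\mathbb{Z}/p_i$, the unit sphere $S(\widetilde{W}^{m+1})$ has fixed points (the sphere in the $\mathbb{Z}/p_i$-trivial summand), so its Fadell--Husseini index for $\mathbb{Z}/p_i$ is the zero ideal and there is simply no obstruction visible one prime at a time. No CRT reassembly can recover a nonzero class from these zeros.

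The paper avoids this by working with the full $\mathbb{Z}_k$-action throughout and using a different invariant. On the source side it takes $V_2(\mathbb{R}^d)$ (not your $C_k$): one of the $m+1$ masses is spent to build, for each $2$-frame, a canonical $k$-fan equipartitioning it, which yields a \emph{free} $\mathbb{Z}_k$-action on $V_2(\mathbb{R}^d)$ and lets one compute Volovikov's numerical index $i_{\mathbb{Z}_k}(V_2(\mathbb{R}^d))=2d-2$ from the Leray--Serre spectral sequence (only $k$ odd is used here). On the target side, the $\mathbb{Z}_k$-action on $S^{m(k-1)-1}$ is only fixed-point-free, so the paper stratifies by isotropy type: for each proper $P\subset\{p_1,\dots,p_n\}$ the stratum $S_P$ is $\mathbb{Z}_k^{P}$-fixed while $\mathbb{Z}_k^{\bar P}$ acts freely on it, and then the product and union lemmas for the Fadell--Husseini index give $\text{Ind}_{\mathbb{Z}_k}(S^{m(k-1)-1})\neq 0$, hence $i_{\mathbb{Z}_k}(S^{m(k-1)-1})\le m(k-1)$. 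The inequality $2d-3\ge m(k-1)$ then forbids the equivariant map. This stratification-by-isotropy step is exactly what your prime-by-prime scheme is missing.

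A secondary issue: in your configuration space $C_k$ the rays $r_1,\dots,r_k$ are unconstrained, so the ``wedges'' $W_j(F)$ in your test map are not well defined off the cyclically-ordered locus, and the $\mathbb{Z}/k$-action is not free on the degenerate strata. This can be repaired, but the cleanest fix is precisely the paper's device of spending one mass to rigidify the fan and working on $V_2(\mathbb{R}^d)$.
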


Further, we also extend Makeevs proof to $(\frac{a_1}{p},\ldots,\frac{a_q}{p})$-partitions with $q$-fans for natural numbers $a_1,\ldots,a_q$ which sum up to $p$, where $p$ is an odd prime, extending planar results from \cite{Barany2}.

On the other hand, Makeevs proof only works for odd primes, and the statement for $p=2$ is actually false.
Indeed, there are examples of $d+2$ masses that cannot be simultaneously bisected by a $2$-fan (i.e., a \emph{wedge}), as we will see soon.
We will also fill in the gap on $2$-fans by showing that $d+1$ masses can always be simultaneously bisected by a wedge.
In fact, we prove a more general statement about bisections with \emph{$k$-cones}.

Recall the definition of a cone: a (spherical) cone in $\mathbb{R}^d$ is defined by an \emph{apex} $a\in\mathbb{R}^d$, a \emph{central axis} $\overrightarrow{\ell}$, which is a one-dimensional ray emanating from the apex $a$, and and angle $\alpha$.
The cone is now the set of all points that lie on a ray $\overrightarrow{r}$ emanating from $a$ such that the angle between $\overrightarrow{r}$ and $\overrightarrow{\ell}$ is at most $\alpha$.
Note that for $\alpha=90^{\circ}$, the cone is a half-space.
Also, for $\alpha>90^{\circ}$, the cone is not convex, which we explicitly allow.
Finally note that the complement of a cone is also a cone and that either a cone or its complement are convex.
Let now $H_k$ be some $k$-dimensional linear subspace of $\mathbb{R}^d$ and let $\pi:\mathbb{R}^d\rightarrow H_k$ be the natural projection.
A $k$-cone $C$ is now a set $\pi^{-1}(C_k)$, where $C_k$ is a cone in $H_k$.
The apex $a$ of $C$ is the set $\pi^{-1}(a_k)$, where $a_k$ is the apex of $C_k$.
It has dimension $d-k$.
Again, note that the complement of a $k$-cone is again a $k$-cone and that one of the two is convex.
Also, a $2$-cone is either the intersection or the union of two halfspaces, that is, a wedge\footnote{So, wedges are both $2$-cones and $2$-fans.}.
Further, a $d$-cone is just a spherical cone.
Alternatively, we could also define a $k$-cone by a $(d-k)$-dimensional apex and a $(d-k+1)$-dimensional half-hyperplane $h$ emanating from it.
The $k$-cone would then be the union of points on all $(d-k+1)$-dimensional half-hyperplanes emanating from the apex such that their angle with $h$ is at most $\alpha$.
We say that a $k$-cone $C$ \emph{simultaneously bisects} the mass distributions $\mu_1,\ldots,\mu_{n}$ if $\mu_i(C)=\frac{1}{2}\mu_i(\mathbb{R}^d)$ for all $i\in\{1,\ldots,n\}$.
See Figure \ref{Fig:wedge} for an illustration.
In Section \ref{sec:cones} we will prove the following:

\begin{restatable}{theorem}{cones}
\label{Thm:cones_general}
Let $\mu_1,\ldots,\mu_{d+1}$ be $d+1$ mass distributions in $\mathbb{R}^d$ and let $1\leq k\leq d$.
Then there exists a $k$-cone $C$ that simultaneously bisects $\mu_1,\ldots,\mu_{d+1}$.
\end{restatable}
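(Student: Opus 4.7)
The natural approach is the configuration space / test map scheme combined with an equivariant-topological argument. I parametrize each $k$-cone $C$ by a triple $(A, w, \alpha)$, where $A$ is the $(d{-}k)$-dimensional affine apex, $w \in S^{k-1}$ is the unit axis direction (living in the $k$-dimensional orthogonal direction to $A$), and $\alpha \in [0,\pi]$ is the opening angle. The key identification
\[
(w,\alpha) \;\longmapsto\; (\sin\alpha\cdot w,\,\cos\alpha) \in S^k
\]
converts the pair $(w,\alpha)\in S^{k-1}\times[0,\pi]$ into a point of $S^k$ and turns the involution $(w,\alpha)\mapsto(-w,\pi-\alpha)$ --- which geometrically replaces a cone by its complement while fixing the apex --- into the standard antipodal map on $S^k$. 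The configuration space $X$ of all $k$-cones is thus an $S^k$-bundle over the space of affine $(d{-}k)$-flats; more precisely, it is the unit sphere bundle of the rank-$(k{+}1)$ vector bundle $\mathcal{V}^\perp \oplus \underline{\mathbb{R}}$ over the affine Grassmannian, where $\mathcal{V}$ is the tautological $(d{-}k)$-plane bundle. It carries a natural fiberwise-free $\mathbb{Z}/2$-action.

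The test map is $f\colon X\to\mathbb{R}^{d+1}$, defined by $f(C)_i = \mu_i(C) - \tfrac12 \mu_i(\mathbb{R}^d)$. Since $\mu_i(\mathbb{R}^d\setminus C) = \mu_i(\mathbb{R}^d) - \mu_i(C)$, the map $f$ is $\mathbb{Z}/2$-equivariant with the antipodal action on the target, and a zero of $f$ is exactly a $k$-cone simultaneously bisecting all $d{+}1$ masses.

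To force $f$ to vanish, I plan to show that $X$, as a free $\mathbb{Z}/2$-space, has $\mathbb{Z}/2$-coindex at least $d{+}1$, so that any equivariant map $X\to\mathbb{R}^{d+1}\setminus\{0\}\simeq S^d$ is obstructed by Borsuk--Ulam. The fiber $S^k$ alone has coindex only $k\le d$, one short of what is needed, so the remaining units of coindex must come from the base. Concretely, I would compute the mod-$2$ cohomology of the quotient $X/\mathbb{Z}/2 = \mathbb{P}(\mathcal{V}^\perp\oplus\underline{\mathbb{R}})$ by Leray--Hirsch --- obtaining $H^*(\mathrm{AffGr}(d{-}k,d);\mathbb{Z}/2)[t]/(t^{k+1} + w_1(\mathcal{V}^\perp)t^k + \cdots)$ --- and read off the index from the Stiefel--Whitney classes of $\mathcal{V}^\perp$. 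A more hands-on alternative would be to exhibit an equivariantly embedded $S^{d+1}$ in $X$ directly, by sweeping the apex over a suitable $(d{-}k)$-dimensional family of translations and forming the join with the fiber $S^k$.

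The main obstacle is precisely this index computation and the simultaneous handling of the various degenerations at the boundary of the configuration space: the poles $\alpha\in\{0,\pi\}$, where the cone collapses to its apex or fills all of $\mathbb{R}^d$, and the "apex at infinity" limits in the affine Grassmannian. Verifying that the test map extends continuously across all these degenerations --- and that the resulting $\mathbb{Z}/2$-space carries the claimed coindex --- is where the essential topological content of the theorem lies.
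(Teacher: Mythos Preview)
Your proposal is a plausible strategy but is, by your own admission, a sketch rather than a proof: the two obstacles you flag at the end---the $\mathbb{Z}/2$-index computation for the sphere bundle over the affine Grassmannian, and the behaviour of the test map as the apex escapes to infinity---are exactly where the content lies, and you have not resolved either. In particular, the non-compactness of $\mathrm{AffGr}(d{-}k,d)$ is a genuine problem: even if one showed there is no equivariant map $X\to S^{d}$, one still needs a compact configuration space (or a properness argument) to conclude that the test map has a zero, and you give no compactification on which $f$ extends equivariantly.

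The paper avoids both obstacles with a single move. Via gnomonic projection it reduces to $k$-cones in $\mathbb{R}^{d+1}$ whose apex contains the origin; then the apex is a linear $(d{+}1{-}k)$-subspace and the configuration space becomes the compact flag manifold with oriented lines $\{0\subset\vec{\ell}\subset V_k\subset\mathbb{R}^{d+1}\}$. A second trick shaves one test function: for each flag one takes the unique opening angle making $C$ bisect the total mass $\mu_1+\cdots+\mu_{d+1}$, so only $d$ antipodal functions $f_i=\mu_i(C)-\mu_i(\overline{C})$ are needed (the last mass is forced). The required Borsuk--Ulam statement is then exactly Theorem~\ref{Thm:bu_flags} (any antipodal map from an oriented-line flag manifold in $\mathbb{R}^{d+1}$ to $\mathbb{R}^{d}$ has a zero), which the paper imports as a black box from~\cite{HSSoCG}. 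In short: where you propose to compute Stiefel--Whitney classes over a non-compact base and manage degenerations by hand, the paper replaces the base by a compact flag manifold and invokes a ready-made Borsuk--Ulam theorem for it.
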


\begin{figure}
\centering
\includegraphics[scale=0.7]{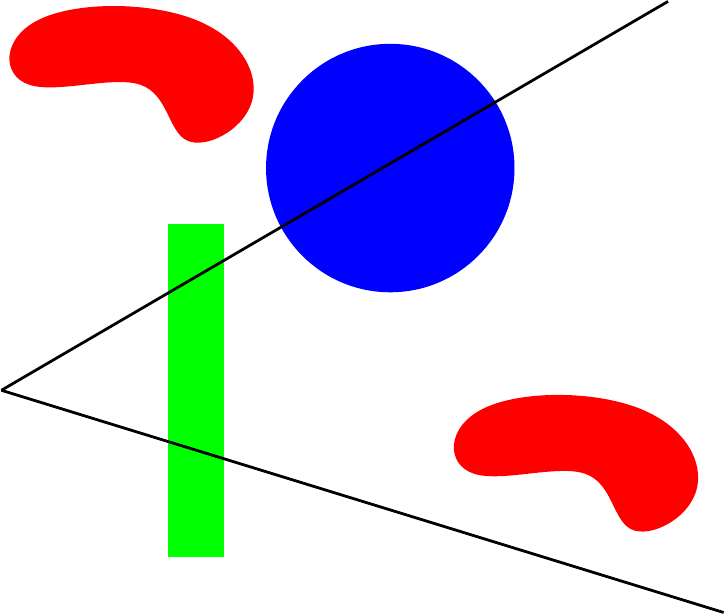}
\caption{Three masses simultaneously bisected by a 2-cone.}
\label{Fig:wedge}
\end{figure}

As mentioned above, this is tight in the sense that there is a family of $d+2$ mass distributions in $\mathbb{R}^d$ that cannot be simultaneously bisected: place $d+1$ point-like masses at the vertices of a $d$-dimensional simplex and a last point-like mass in the interior of the simplex.
Assume that there is a $k$-cone $C$ which simultaneously bisects all the masses and assume without loss of generality that $C$ is convex (otherwise, just consider the complement of $C$).
As it simultaneously bisects all masses, $C$ must now contain all vertices of the simplex, so by convexity $C$ also contains the interior of the simplex, and thus all of the last mass.
Hence $C$ cannot simultaneously bisect all masses.
Still, while we cannot hope to bisect more masses, in some cases we are able to enforce additional restrictions: for $d$-cones in odd dimensions, we can always enforce the apex to lie on a given line (Theorem \ref{Thm:cones_apex_line}).

\subsection{Double wedges}

Another relaxation of Ham-Sandwich cuts was introduced by Bereg et al. \cite{Bereg} and has received some attention lately (see \cite{pizza_cccg, pizza2, pizza1, HSSoCG}): instead of cutting with a single hyperplane, how many masses can we bisect if we cut with several hyperplanes?
In this setting, the masses are distributed into two parts according to a natural 2-coloring of the induced arrangement.
We will only consider bisections with two hyperplanes: let $h_1$ and $h_2$ be two (oriented) hyperplanes. Let $h_1^+$ and $h_1^-$ be the positive and negative side of $h_1$, respectively, and analogous for $h_2$.
We define the \emph{double wedge} $D=(h_1,h_2)$ as the union $(h_1^+\cap h_2^+)\cup (h_1^-\cap h_2^-)$.
Note that the complement of a double wedge is again a double wedge.
We say that a double wedge $D$ simultaneously bisects the mass distributions $\mu_1,\ldots,\mu_{n}$ if $\mu_i(D)=\frac{1}{2}\mu_i(\mathbb{R}^d)$ for all $i\in\{1,\ldots,n\}$.
See Figure \ref{Fig:double_wedge} for an illustration.

\begin{figure}
\centering
\includegraphics[scale=0.7]{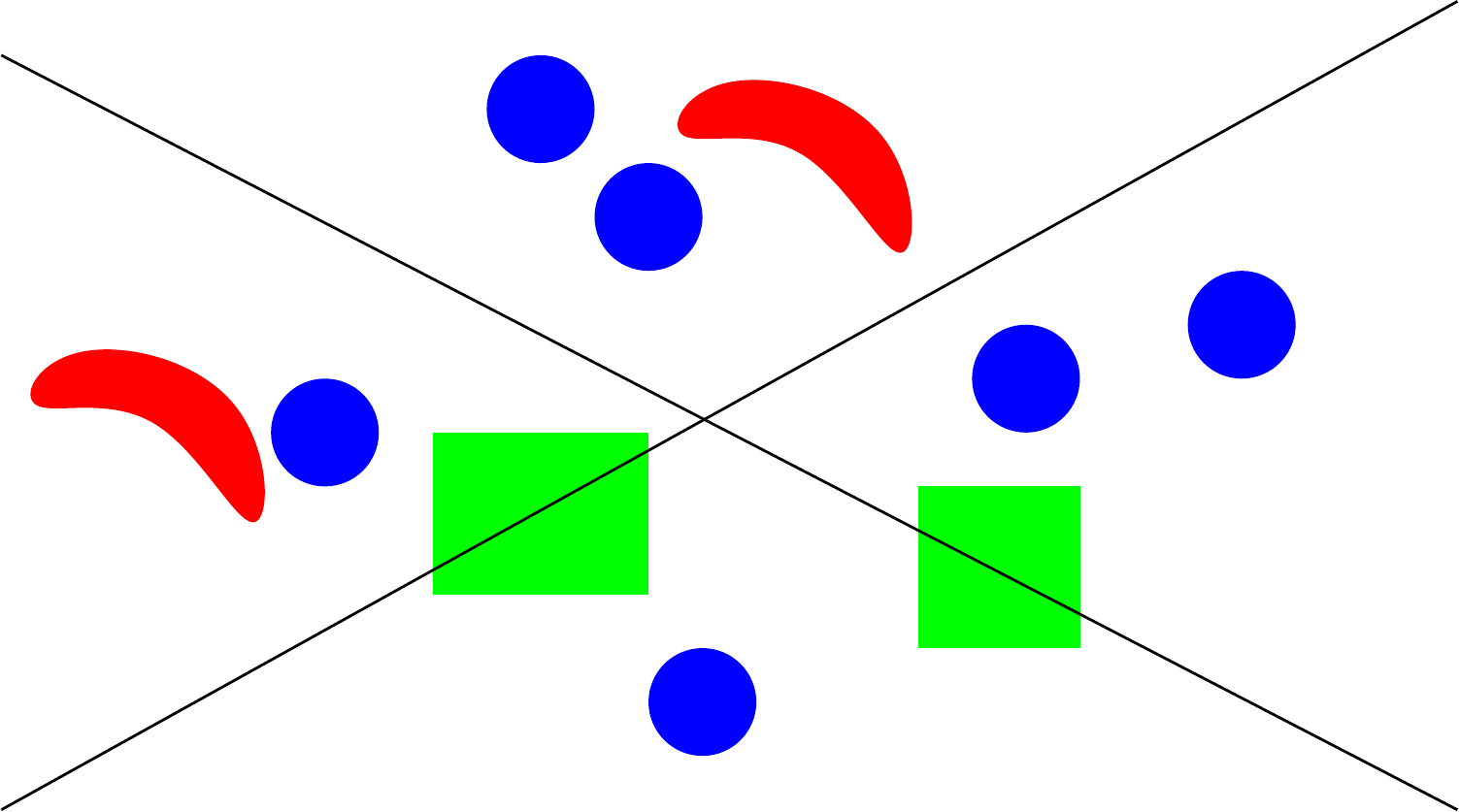}
\caption{Three masses simultaneously bisected by a double wedge.}
\label{Fig:double_wedge}
\end{figure}

It was conjectured by Langerman that any $2d$ masses in $\mathbb{R}^d$ can be simultaneously bisected by a double wedge\footnote{In fact, Langermans conjecture is even stronger, stating that any $nd$ masses in $\mathbb{R}^d$ can be simultaneously bisected by an arrangement of $n$ hyperplanes, see \cite{pizza_cccg}.}.
For $d=2$, this was confirmed by Barba, Pilz and Schnider \cite{pizza_cccg}, and the conjecture was later proven to be true for all $d$ which are a power of $2$ by Hubard and Karasev \cite{pizza1}.
For other dimensions, it is known that $2d-O(\log d)$ many masses can be simultaneously bisected by a double wedge \cite{Karasev_logd}.
Bounds for the number of masses that can be bisected by arrangements of $n$ hyperplanes were also studied in \cite{pizza2}.

Alternatively, bisections with double wedges can also be viewed as Ham-Sandwich cuts after a projective transformation: if $D=(h_1,h_2)$ is a bisecting double wedge we can find a projective transformation which sends $h_1$ to the hyperplane at infinity.
After this transformation, $h_2$ is a Ham-Sandwich cut of the transformed masses.
We extend Langermans conjecture on bisections with double wedges as follows:

\begin{conjecture}
\label{Con:projective_hs_masses}
Let $\mu_1^1,\ldots,\mu_{d+k}^1,\mu_1^2,\ldots,\mu_{d+k}^2,\ldots,\mu_{d+k}^{m}$ be $m=\lfloor\frac{d}{k}\rfloor$ families each containing $d+k$ mass distributions in $\mathbb{R}^d$.
Then there exists a projective transformation $\varphi$ such that $\varphi(\mu_1^{i}),\ldots,\varphi(\mu_{d+k}^{i})$ can be simultaneously bisected by a single hyperplane for every $i\in\{1,\ldots,m\}$.
\end{conjecture}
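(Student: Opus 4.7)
The natural plan is to translate the conjecture into the language of double wedges and then attempt a Borsuk-Ulam style argument along the lines of Hubard and Karasev's proof of Langerman's original conjecture for $d$ a power of $2$. Since a projective transformation $\varphi$ is determined, up to affine transformations that are irrelevant for the existence of a bisecting hyperplane, by the preimage $h_\infty=\varphi^{-1}(\text{hyperplane at infinity})$, finding $\varphi$ together with bisecting hyperplanes $H_1,\ldots,H_m$ is the same as finding hyperplanes $h_\infty,h_1,\ldots,h_m$ in $\mathbb{R}^d$ such that for every $i$ the double wedge $D_i=(h_\infty,h_i)$ simultaneously bisects $\mu_1^{i},\ldots,\mu_{d+k}^{i}$. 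A degree-of-freedom count is encouraging: $(m+1)d$ parameters against $m(d+k)$ bisection equations, which balance exactly when $k\mid d$, the extremal case of the conjecture.

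To produce such configurations I would parameterize oriented hyperplanes by the sphere $S^{d}$ in the standard Ham-Sandwich way and take the configuration space to be $X=(S^{d})^{m+1}$. Define a test map $\Phi:X\to\mathbb{R}^{m(d+k)}$ whose coordinates are the signed bisection defects $2\mu_j^{i}(D_i)-\mu_j^{i}(\mathbb{R}^d)$. The group $G=\mathbb{Z}_2^{m+1}$ acts on $X$ by reversing orientations, and $\Phi$ is $G$-equivariant for a natural representation on the target: flipping $h_\infty$ replaces every $D_i$ by its complement and therefore negates every coordinate of $\Phi$, while flipping $h_i$ replaces only $D_i$ by $D_i^c$ and so negates only the $d+k$ coordinates belonging to family $i$. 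Thus the representation on family $i$'s block is $(d+k)$ copies of the character on which both the $\infty$- and $i$-th generators act by $-1$. A zero of $\Phi$ would prove the conjecture, and its existence would follow from non-vanishing of the associated equivariant Euler class in $H^{*}_G(X)$; the flag-manifold Borsuk-Ulam theorem introduced later in this paper is a natural refinement, as it allows one to enlarge the symmetry group acting on each pair $(h_\infty,h_i)$.

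The main obstacle is the resulting equivariant index computation. For the original Langerman conjecture ($m=1$, $k=d$) precisely this kind of calculation restricts the proof of Hubard and Karasev to dimensions that are powers of $2$, the obstruction being the $2$-adic valuation of a single binomial coefficient. In the generalized setting the analogous Euler class is a product $\prod_{i=1}^{m}(t_\infty+t_i)^{d+k}$ inside a truncated polynomial ring of the form $\mathbb{F}_p[t_\infty,t_1,\ldots,t_m]/(t_\infty^{d+1},t_1^{d+1},\ldots,t_m^{d+1})$, and I would expect accessible cases to be those in which $d$ and $k$ share enough $p$-adic structure for the relevant multinomial coefficients to survive modulo some prime $p$. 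Pushing past such prime-power cases probably requires a genuinely new equivariant framework, perhaps arising from the toric symmetries of the flag manifold of $2$-flats containing the intersections $h_\infty\cap h_i$, and it is precisely the lack of such a framework in full generality that leaves the statement a conjecture rather than a theorem.
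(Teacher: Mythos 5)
The statement you were asked about is Conjecture~\ref{Con:projective_hs_masses}, which the paper itself does not prove; it is posed as an open problem, with only the two boundary cases partially settled elsewhere ($k=d$ is Langerman's conjecture, proved by Hubard and Karasev for $d$ a power of $2$, and $k=1$ is addressed in this paper for point sets via Theorem~\ref{Thm:projective_hs_points}). Your text is likewise not a proof and does not claim to be one: it is a reformulation plus a programme, ending with the concession that the general case remains open. So there is no gap to diagnose in the usual sense --- but it is worth being explicit that nothing here establishes the conjecture, even in special cases, since you never carry out the Euler class computation you describe.

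As a research plan, your reformulation matches the paper's: the translation between projective transformations and double wedges sharing the hyperplane $h_\infty$ is exactly Lemma~\ref{Lem:dw_projective}, and your degree-of-freedom count correctly identifies $k\mid d$ as the extremal regime. Where you diverge from the paper is in the proposed mechanism. You suggest a single global $\mathbb{Z}_2^{m+1}$-equivariant test map on $(S^d)^{m+1}$ and an Euler class computation (note the coefficients should be $\mathbb{F}_2$, not a general $\mathbb{F}_p$, since the relevant characteristic classes of real line bundles for a $2$-group live in mod $2$ cohomology); this is the natural generalization of Hubard--Karasev and plausibly inherits its arithmetic obstructions. The paper's actual progress on the $k=1$ case uses a different and in some ways more elementary two-stage argument: for each fixed $h_1$ one studies the degree of the map $h_2\mapsto g_{h_1}(h_2)$ on $S^{d-1}$, observes that $\deg(g_{-h_1})=-\deg(g_{h_1})$ when $d-1$ is even, and then feeds the resulting antipodal ``zero-detecting'' functions $t_i$ into a single application of the classical Borsuk--Ulam theorem on the sphere of choices of $h_1$. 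This sidesteps the product-of-Euler-classes computation entirely, at the cost of parity restrictions on $d$ (handled by lifting to $\mathbb{R}^{d+1}$ or by passing to $\varepsilon$-bisections). If you intend to attack intermediate $k$, it would be worth understanding why neither mechanism currently extends: the degree argument consumes exactly one auxiliary hyperplane's worth of freedom, and the Euler class approach runs into the same $2$-adic obstructions as in the $k=d$ case.
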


Langermans conjecture corresponds to the boundary case $k=d$.
In this paper, we consider the other boundary case $k=1$, which we settle for point sets instead of mass distributions:

\begin{restatable}{theorem}{projectivehs}
\label{Thm:projective_hs_points}
Let $P_1^1,\ldots,P_{d+1}^1,P_1^2,\ldots,P_{d+1}^2,\ldots,P_{d+1}^{d}$ be $d$ families each containing $d+1$ point sets in $\mathbb{R}^d$ such that their union is in general position.
Then there exists a projective transformation $\varphi$ such that $\varphi(P_1^{i}),\ldots,\varphi(P_{d+1}^{i})$ can be simultaneously bisected by a single hyperplane for every $i\in\{1,\ldots,d\}$.
\end{restatable}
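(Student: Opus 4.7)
The plan is to reduce the theorem to a configuration-space/test-map scheme and apply an equivariant Borsuk-Ulam theorem (the ``Borsuk-Ulam theorem for flag manifolds'' advertised in the introduction). As noted in the paper's discussion of double wedges, a hyperplane simultaneously bisecting $\varphi(P_1^i),\ldots,\varphi(P_{d+1}^i)$ corresponds to a double wedge $(h_0,h_i)$ simultaneously bisecting $P_1^i,\ldots,P_{d+1}^i$ in the original space, where $h_0$ is the hyperplane $\varphi$ sends to infinity. Hence the theorem reduces to finding $d+1$ hyperplanes $h_0,h_1,\ldots,h_d$ in $\mathbb{RP}^d$ such that for every $i\in\{1,\ldots,d\}$ the double wedge $(h_0,h_i)$ simultaneously bisects the $i$-th family, with $h_0$ playing the role of the common ``infinity hyperplane''.

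I would parameterize oriented hyperplanes in $\mathbb{RP}^d$ by unit vectors in $\mathbb{S}^d$ and define a test map
\[
\Phi\colon (\mathbb{S}^d)^{d+1}\longrightarrow \bigoplus_{i=1}^d \mathbb{R}^{d+1}
\]
whose $(i,j)$-component is the signed count $|P_j^i\cap W^+| - |P_j^i\cap W^-|$, with $W^{\pm}$ the two regions of the double wedge determined by $h(u_0)$ and $h(u_i)$. A zero of $\Phi$ is exactly the desired configuration. The map is equivariant under the natural action of $G=\mathbb{Z}_2^{d+1}$ generated by antipodal flips $\epsilon_k$ on each $\mathbb{S}^d$: flipping $u_0$ swaps $W^+$ and $W^-$ of every double wedge, so $\epsilon_0$ negates every target coordinate, while $\epsilon_i$ for $i\geq 1$ negates only the $i$-th block of $d+1$ coordinates. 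Source and target have matching real dimension $d(d+1)$, so this is a balanced Borsuk-Ulam situation.

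To conclude I need to rule out a nowhere-zero $G$-equivariant map from $(\mathbb{S}^d)^{d+1}$ to $V = \bigoplus_i\mathbb{R}^{d+1}$. The kernel of $G$ acting on $V$ is the order-two subgroup generated by $\epsilon_0\epsilon_1\cdots\epsilon_d$, which acts as the diagonal antipodal flip on the source and thus freely; passing to the quotient gives a free effective $\mathbb{Z}_2^d$-action on a flag-manifold-like quotient of $(\mathbb{S}^d)^{d+1}$, whose equivariant cohomology sits inside $\mathbb{F}_2[x_0,\ldots,x_d]/(x_k^{d+1})$. The relevant obstruction is the equivariant Euler class $\prod_{i=1}^d(x_0+x_i)^{d+1}$; nonvanishing of this class would give a zero of $\Phi$ by the standard index argument.

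The main obstacle will be showing this Euler class does not vanish. A direct expansion is not enough: individual factors $(x_0+x_i)^{d+1}$ vanish modulo $2$ for certain $d$ (for instance when $d+1$ is a power of two), so a naive cohomological argument fails. Here the general-position hypothesis on the point sets becomes essential: since $\Phi$ is integer-valued with a generically discrete zero set, the correct obstruction is a mod-$2$ intersection number of a generic perturbation of $\Phi$, which refines the Euler class and does not share its vanishing. This refinement is precisely what the paper's Borsuk-Ulam theorem for flag manifolds is designed to provide. Showing this count is odd — for instance by identifying it with a specific multinomial coefficient modulo $2$ arising from a Schubert-type calculation on the flag manifold — yields a zero of $\Phi$ and hence the required projective transformation.
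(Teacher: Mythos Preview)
Your reduction to double wedges sharing a common hyperplane $h_0$ is correct, and the $\mathbb{Z}_2^{d+1}$-equivariant test map is set up properly. The gap is in the obstruction step. As you yourself note, the equivariant Euler class $\prod_{i=1}^d (x_0+x_i)^{d+1}$ vanishes whenever $d+1$ is a power of $2$: in characteristic $2$ one then has $(x_0+x_i)^{d+1}=x_0^{d+1}+x_i^{d+1}=0$ modulo the relations $x_k^{d+1}=0$, so already each factor is zero (the first interesting failure is $d=3$). In a balanced configuration-space/test-map scheme the mod-$2$ intersection number of a generic equivariant perturbation \emph{is} the Euler class; there is no finer mod-$2$ count to fall back on. So for those $d$ your scheme gives no obstruction at all.

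The appeal to the paper's Borsuk--Ulam theorem for flag manifolds does not rescue this. That theorem concerns a single $\mathbb{Z}_2$-action (reversing the orientation of the distinguished line in a flag in $\mathbb{R}^{d+1}$) with target $\mathbb{R}^d$; the diagonal-antipodal quotient of $(S^d)^{d+1}$ is not a flag manifold, and the residual $\mathbb{Z}_2^d$-action is not the action that theorem treats. In the paper the flag-manifold Borsuk--Ulam is used only for the $k$-cone results; it plays no role in the proof of this theorem.

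The paper's actual argument is quite different and sidesteps the parity obstruction you ran into. One fixes the common hyperplane $h_1$ and, for each family separately, considers the resulting antipodal map $g_{h_1}\colon S^{d-1}\to\mathbb{R}^d$. When $g_{h_1}$ has no zero it normalizes to $S^{d-1}\to S^{d-1}$ of odd degree, and since $g_{-h_1}=-g_{h_1}$ one gets $\deg g_{-h_1}=-\deg g_{h_1}$ when $d-1$ is even. Hence the ``zero set'' $Z\subset S^{d-1}$ of $h_1$'s for which $g_{h_1}$ vanishes separates antipodes, and one manufactures an odd real-valued function on $S^{d-1}$ vanishing exactly on $Z$. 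Doing this for all $d-1$ families and applying the classical Borsuk--Ulam yields a common $h_1$. The opposite parity is handled by lifting one dimension via gnomonic projection together with an auxiliary point constraint; this only gives an $\varepsilon$-bisection for masses, which is precisely why the theorem is stated for point sets in general position (where $\varepsilon$-bisection for small $\varepsilon$ becomes exact).
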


We will actually prove a very similar statement for mass distributions, but in odd dimensions we need some technical restrictions.
The methods we use for this do not seem to extend to other $k>1$.
Also, the methods used for the proofs for the case $k=d$ likely do not extend to smaller $k$, so in order to settle the conjecture, new approaches might be needed.
It is plausible that successful approaches for Conjecture \ref{Con:projective_hs_masses} could also lead to new insights regarding bisections with hyperplane arrangements.

Finally, we combine fans and double wedges, to obtain the following result.

\begin{restatable}{theorem}{stripes}
\label{Thm:stripes}
Let $k=p_1p_2\cdots p_n$ be a product of pairwise distinct odd primes.
Let $2d-1\geq m(k-1)$ and let $\mu_1,\ldots,\mu_{m+1}$ be $m+1$ mass distributions in $\mathbb{R}^d$.
Then there exists a projective transformation $\varphi$ such that $\varphi(\mu_1),\ldots,\varphi(\mu_{m+1})$ can be simultaneously $(\frac{1}{k},\ldots,\frac{1}{k})$-partitioned by $k-1$ parallel hyperplanes.
\end{restatable}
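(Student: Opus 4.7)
The plan is to reformulate Theorem~\ref{Thm:stripes} as a projective analog of Theorem~\ref{Thm:fans_main} and prove it by essentially the same topological method. First I extend each $\mu_i$ trivially from $\mathbb{R}^d$ to $\mathbb{RP}^d$. The key reformulation is that a $(\tfrac{1}{k},\ldots,\tfrac{1}{k})$-partition by $k-1$ parallel hyperplanes in the target after a projective transformation $\varphi$ is in bijection with what I will call a \emph{projective $k$-fan} in the source $\mathbb{RP}^d$: a tuple $(\tilde{a}, L_1,\ldots, L_k)$ with $\tilde{a}$ a $(d-2)$-flat in $\mathbb{RP}^d$ and $L_1,\ldots,L_k$ distinct projective hyperplanes through $\tilde{a}$ (cyclically ordered on $\mathbb{RP}^1$), such that each of the $k$ resulting projective wedges carries mass $\tfrac{1}{k}\mu_i(\mathbb{R}^d)$ for every $i$. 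Indeed, given such a configuration, one picks any $L_j$ (say $L_k$), chooses $\varphi$ sending $L_k$ to the target hyperplane at infinity $H_\infty^{\text{tgt}}$, and notes that $\varphi(L_1),\ldots,\varphi(L_{k-1})$ all contain the $(d-2)$-flat at infinity $\varphi(\tilde{a})\subset H_\infty^{\text{tgt}}$ and are thus $k-1$ parallel affine hyperplanes in the target; the $k$ strips they create are precisely the $\varphi$-images of the $k$ projective wedges.

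It then suffices to prove the projective $k$-fan statement. The configuration space $\mathcal{C}$ of projective $k$-fans fibers over the Grassmannian $\mathrm{Gr}_{d-1}(\mathbb{R}^{d+1})$ of $(d-2)$-flats in $\mathbb{RP}^d$ (dimension $2d-2$), with fibers parameterizing $k$ cyclically ordered points on $\mathbb{RP}^1$ (dimension $k$), so $\dim\mathcal{C}=2d-2+k$. The test map $\Phi\colon\mathcal{C}\to V$ records the $k$ wedge-mass imbalances for each of the $m+1$ masses, modulo the sum-to-total constraint per mass, so $\dim V=(m+1)(k-1)$; the hypothesis $2d-1\geq m(k-1)$ rearranges exactly to $\dim\mathcal{C}\geq\dim V$. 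The cyclic group $\mathbb{Z}_k$ acts on $\mathcal{C}$ by rotating $L_1,\ldots,L_k$, on $V$ by the corresponding cyclic permutation of wedge-values, and $\Phi$ is $\mathbb{Z}_k$-equivariant. Since $\mathbb{RP}^1\cong S^1$, this action on the fiber is topologically identical to the $\mathbb{Z}_k$-action on the $k$ half-hyperplanes of a classical $k$-fan, so the same equivariant cohomological argument as in the proof of Theorem~\ref{Thm:fans_main} applies: for $k=p$ an odd prime one shows that the relevant $\mathbb{Z}_p$-equivariant obstruction class is nonzero under the dimension bound, and for $k=p_1\cdots p_n$ a product of distinct odd primes one combines these using $\mathbb{Z}_k\cong\mathbb{Z}_{p_1}\times\cdots\times\mathbb{Z}_{p_n}$ exactly as there.

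The main technical obstacle is verifying that the obstruction computation from Theorem~\ref{Thm:fans_main} really transfers to this projective setting. While the affine parameter space of $(d-2)$-flats in $\mathbb{R}^d$ used for affine $k$-fans differs from $\mathrm{Gr}_{d-1}(\mathbb{R}^{d+1})$, the $\mathbb{Z}_k$-action is trivial on the base and the nontrivial part of the obstruction lives entirely in the $(\mathbb{RP}^1)^k$-fiber, where the calculation agrees with the affine case and the base only contributes a pullback. Since $\mathcal{C}$ has the structure of a flag-type fiber bundle (base a Grassmannian, fiber a product of $\mathbb{RP}^1$'s), the Borsuk-Ulam theorem for flag manifolds introduced elsewhere in this paper is especially well suited for making this transfer precise; once it is established, a zero of $\Phi$ delivers the projective $k$-fan required in the first paragraph, and hence the sought-after $\varphi$ and parallel hyperplanes.
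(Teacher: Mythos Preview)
Your geometric reformulation is correct and essentially matches the paper's: what you call a projective $k$-fan in $\mathbb{RP}^d$ is exactly what the paper calls a \emph{fan of $k$ double wedges} in $\mathbb{R}^d$ (a $k$-fan of full hyperplanes, rather than half-hyperplanes, through a common $(d-2)$-apex), and sending one of the $k$ hyperplanes to infinity turns the remaining $k-1$ into parallel hyperplanes. So the first paragraph is fine.

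The gap is in the topological argument. You introduce a new configuration space $\mathcal{C}$ fibered over $\mathrm{Gr}_{d-1}(\mathbb{R}^{d+1})$, put all $m+1$ masses into the test map, observe that the dimensions match, and then assert that ``the same equivariant cohomological argument as in the proof of Theorem~\ref{Thm:fans_main} applies'' because ``the nontrivial part of the obstruction lives entirely in the fiber''. That assertion is precisely what would need to be proved, and nothing you write supports it. The proof of Theorem~\ref{Thm:fans_main} hinges on the explicit computation $i_{\mathbb{Z}_k}(V_2(\mathbb{R}^d))=2d-2$ (Lemma~\ref{lem:makeev} and Corollary~\ref{cor:makeev_general}), carried out via the Leray--Serre spectral sequence for the Borel fibration of the Stiefel manifold. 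Your $\mathcal{C}$ is not $V_2(\mathbb{R}^{d+1})$, so that computation does not transfer automatically; you would have to redo the spectral sequence for $\mathcal{C}$, and you give no indication how. Your appeal to the Borsuk--Ulam theorem for flag manifolds (Theorem~\ref{Thm:bu_flags}) is also misplaced: that result is about antipodal ($\mathbb{Z}_2$) maps, not $\mathbb{Z}_k$-equivariant ones for odd $k$, and it plays no role in the index computation needed here.

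The paper sidesteps all of this. It keeps the \emph{same} configuration space $V_2(\mathbb{R}^d)$ and the \emph{same} test-map scheme as in Theorem~\ref{Thm:fans_main}: to $(x,y)\in V_2(\mathbb{R}^d)$ one now assigns, instead of a $k$-fan of half-hyperplanes, a fan of $k$ full hyperplanes (equivalently $k$ lines through the origin in $\mathrm{span}(x,y)$) whose $k$ double wedges equipartition the first mass; the $\mathbb{Z}_k$-action is rotation by two sectors. Since the configuration space and the $\mathbb{Z}_k$-action are literally unchanged, Corollary~\ref{cor:makeev_general} and the target-space index estimate from the proof of Theorem~\ref{Thm:fans_main} apply verbatim, yielding the fan-of-double-wedges result with the bounds $2d-3\ge m(k-1)$ through the origin and $2d-1\ge m(k-1)$ in general via gnomonic projection. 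The projective transformation is then a one-line observation. The insight you are missing is that no new obstruction computation is required at all.
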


\subsection{Structure of the paper}

All our results are proved using topological methods.
In some cases, we use established techniques, such as the configuration space/test map scheme (see e.g.\ the excellent book by Matou\v{s}ek \cite{Matousek}) or degree arguments.
In other cases, we use newer tools, such as a recent result about sections in canonical line bundles of flag manifolds \cite{HSSoCG}, which we rephrase as an extension of the Borsuk-Ulam theorem to flag manifolds.
Finally, some proofs rely on computing ideal-valued indexes of spaces using Leray-Serre spectral sequences.
For this reason, we do not present the proofs in the order that they appear in the introduction, but rather in order of how much background is required to understand them.

In Section \ref{sec:preliminaries}, we discuss some underlying results that appear throughout the paper, namely the Borsuk-Ulam theorem for flag manifolds and gnomonic projection.

In Section \ref{sec:cones}, we apply the Borsuk-Ulam theorem for flag manifolds to prove Theorem \ref{Thm:cones_general}.
We then prove Theorem \ref{Thm:projective_hs_points} in Section \ref{sec:double_wedges} using a combination of degree arguments and the Borsuk-Ulam theorem.
We conclude the paper by proving Theorems \ref{Thm:fans_main} and \ref{Thm:stripes} in Section \ref{sec:fans}.
The proofs in this section require different indexes of $G$-spaces and Leray-Serre spectral sequences.
We briefly recall the definitions of the indexes in the beginning of the section.
As for spectral sequences, we refer the reader to the literature, e.g.~the work by McCleary \cite{spectral}.

\section{Preliminaries}
\label{sec:preliminaries}
In this section we will discuss two technical tools that we will use throughout the paper.
The first one is a generalization of the Borsuk-Ulam theorem to flag manifolds.
This result was already used in \cite{HSSoCG}, but phrased in a slightly different setting, which is why we will still prove that the version stated here actually follows from the more general statement in \cite{HSSoCG}.
After that we give a general argument called \emph{gnomonic projection}, which will allow us to only consider wedges and cones whose apex contain the origin in the following sections.

\subsection{A Borsuk-Ulam theorem for flag manifolds}

Recall the definition of a \emph{flag manifold}: a \emph{flag} $F$ in a vector space $V$ of dimension $n$ is an increasing sequence of subspaces of the form
\[ F=\{0\}=V_0\subset V_1\subset\cdots\subset V_k=V. \]
To each flag we can assign a signature vector of dimensions of the subspaces.
A flag is a \emph{complete flag} if $\text{dim}V_i=i$ for all $i$ (and thus $k=n$).
A flag manifold $\mathcal{F}$ is the set of all flags with the same signature vector.
We denote the complete flag manifold, that is, the manifold of complete flags, by $\tilde{V}_{n,n}$.
We will consider \emph{flag manifolds with oriented lines}: a flag manifold with oriented lines $\overrightarrow{\mathcal{F}}$ is a flag manifold where each flag contains a 1-dimensional subspace (that is, a line), and this line is oriented.
We denote the complete flag manifold with oriented lines by $\overrightarrow{V}_{n,n}$.
In particular, each flag manifold with oriented lines $\overrightarrow{\mathcal{F}}$ is a double cover of the underlying flag manifold $\mathcal{F}$.
Further, reorienting the line defines a natural antipodal action.

\begin{theorem}[Borsuk-Ulam for flag manifolds]
\label{Thm:bu_flags}
Let $\overrightarrow{\mathcal{F}}$ be a flag manifold with oriented lines in $\mathbb{R}^{d+1}$. Then every antipodal map $f: \overrightarrow{\mathcal{F}}\rightarrow \mathbb{R}^d$ has a zero.
\end{theorem}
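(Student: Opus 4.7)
The plan is to deduce Theorem~\ref{Thm:bu_flags} from the result on sections of canonical line bundles over flag manifolds in \cite{HSSoCG} by translating between antipodal maps and sections of an associated vector bundle. The correspondence is the standard one: the $\mathbb{Z}/2$-action on $\overrightarrow{\mathcal{F}}$ that reverses the orientation of the distinguished line, paired with the antipodal action on $\mathbb{R}^d$ by simultaneous negation in every coordinate, turns an antipodal map $f$ into a $\mathbb{Z}/2$-equivariant map. Every such map descends to a section $s$ of the associated bundle $E = \overrightarrow{\mathcal{F}} \times_{\mathbb{Z}/2} \mathbb{R}^d$ over the quotient $\mathcal{F}$. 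Since the representation on $\mathbb{R}^d$ decomposes into $d$ copies of the sign representation, $E$ splits as a Whitney sum $\mathcal{L}^{\oplus d}$, where $\mathcal{L}$ is the real line bundle over $\mathcal{F}$ associated with the double cover $\overrightarrow{\mathcal{F}}\to\mathcal{F}$, and zeros of $f$ correspond bijectively to zeros of $s$.

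The second step is to identify $\mathcal{L}$ with the tautological line bundle of $\mathcal{F}$ associated with the oriented $1$-dimensional subspace of the flag, so that $\mathcal{L}^{\oplus d}$ is precisely the type of bundle handled by the result in \cite{HSSoCG}. Invoking that result in the appropriate form produces a zero of $s$, which translates back into a zero of $f$. Having done the setup carefully, the theorem follows.

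The main obstacle is the careful bookkeeping needed to recast the statement of \cite{HSSoCG} in the antipodal-map language used here, since the two formulations use different primitive objects. The mathematical content of the reduction is the non-vanishing of $w_1(\mathcal{L})^d$ in $H^d(\mathcal{F};\mathbb{Z}/2)$, which is the top Stiefel-Whitney class of $\mathcal{L}^{\oplus d}$ and hence the obstruction to a nowhere-zero section. To verify this I would use the forgetful fibration $\pi:\mathcal{F}\to\mathbb{RP}^d$ that remembers only the distinguished line: then $\mathcal{L}=\pi^*\gamma$ for $\gamma$ the tautological line bundle over $\mathbb{RP}^d$, so $w_1(\mathcal{L})^d=\pi^*(\alpha^d)$ where $\alpha$ generates $H^*(\mathbb{RP}^d;\mathbb{Z}/2)=\mathbb{Z}/2[\alpha]/(\alpha^{d+1})$, and a Leray-Hirsch or spectral-sequence argument for the flag-manifold fiber gives injectivity of $\pi^*$ in degree $d$. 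In the degenerate case $\overrightarrow{\mathcal{F}}=S^d$ the map $\pi$ is an isomorphism and the argument collapses to the classical Borsuk-Ulam theorem, which is a good sanity check for the chosen framework.
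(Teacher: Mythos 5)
Your proposal is correct, and its first half --- translating an antipodal map into a section of the associated bundle $\mathcal{L}^{\oplus d}$ over the unoriented flag manifold, where $\mathcal{L}$ is the tautological line bundle of the distinguished $1$-dimensional subspace --- is exactly the translation the paper performs. The ``bookkeeping'' you worry about is resolved in the paper by a small trick: it first reduces to the complete flag manifold (composing $f$ with the projection $\overrightarrow{V}_{d+1,d+1}\rightarrow\overrightarrow{\mathcal{F}}$), writes $f=(f_1,\ldots,f_d)$ so that each component becomes a section of $\vartheta_1^{d+1}$, and then feeds these $d$ sections \emph{together with the zero section} into Lemma~\ref{Lem:Sections} with $l=1$, $m=d$; the flag where all $d+1$ sections coincide is then a common zero of the $f_i$. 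Where you genuinely diverge is in the second half: instead of citing \cite{HSSoCG} you propose to prove directly that $\mathcal{L}^{\oplus d}$ admits no nowhere-zero section, via $w_1(\mathcal{L})^d=\pi^*(\alpha^d)\neq 0$ using the forgetful fibration $\pi:\mathcal{F}\rightarrow\mathbb{RP}^d$ and Leray--Hirsch (which does apply with $\mathbb{Z}/2$ coefficients, since the cohomology of the flag-manifold fiber is generated by Stiefel--Whitney classes of tautological bundles that extend over $\mathcal{F}$); and since a nowhere-zero section would split off a trivial line summand and kill the top Stiefel--Whitney class, non-vanishing of $w_1(\mathcal{L})^d$ suffices even though the base has dimension larger than $d$. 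Carried out, this yields a self-contained proof that makes the appeal to Lemma~\ref{Lem:Sections} unnecessary (in effect you reprove its $l=1$ case) and handles arbitrary flag manifolds with oriented lines directly, without the reduction to complete flags; what it does not recover is the full strength of Lemma~\ref{Lem:Sections} for $l>1$, which the paper does not need here anyway. The one step you should spell out is the verification of the Leray--Hirsch hypothesis for the fiber; everything else is routine.
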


Let us briefly mention how this implies the Borsuk-Ulam theorem: the simplest flags containing a line are those of the form $F=\{0\}\subset \ell\subset\mathbb{R}^{d+1}$.
The corresponding flag manifold is the manifold of all lines through the origin in $\mathbb{R}^{d+1}$, that is, the projective space.
Orienting the lines, we retrieve the manifold of all oriented lines through the origin in $\mathbb{R}^{d+1}$, which is homeomorphic to the sphere $S^d$.
The above theorem now says that every antipodal map from $S^d$ to $\mathbb{R}^d$ has a zero, which is one of the versions of the Borsuk-Ulam theorem.

Note that any antipodal map $f: \overrightarrow{\mathcal{F}}\rightarrow \mathbb{R}^d$ can be extended to an antipodal map $g: \overrightarrow{V}_{d+1,d+1}\rightarrow \mathbb{R}^d$ by concatenating it with the natural projection from $\overrightarrow{V}_{d+1,d+1}$ to $\overrightarrow{\mathcal{F}}$, thus it is enough to show the statement for $\overrightarrow{V}_{d+1,d+1}$.

In \cite{HSSoCG} the above is phrased in terms of sections of canonical line bundles, so let us briefly recall some of the relevant terms.
A \emph{vector bundle} consists of a base space $B$, a total space $E$, and a continuous projection map $\pi: E\mapsto B$.
Furthermore, for each $b\in B$, the fiber $\pi^{-1}(b)$ over $b$ has the structure of a vector space over the real numbers.
Finally, a vector bundle satisfies the \emph{local triviality condition}, meaning that for each $b\in B$ there is a neighborhood $U\subset B$ containing $p$ such that $\pi^{-1}(U)$ is homeomorphic to $U\times\mathbb{R}^d$.
A \emph{section} of a vector bundle is a continuous mapping $s: B\mapsto E$ such that $\pi s$ equals the identity map, i.e., $s$ maps each point of $B$ to its fiber.
We can define a \emph{canonical bundle} for each $V_i$ in a complete flag, which we will denote by $\vartheta_i^d$.
The bundle $\vartheta_i^d$ has a total space $E$ consisting of all pairs $(F,v)$, where $F$ is a complete flag in $\mathbb{R}^d$ and $v$ is a vector in $V_i$, and a projection $\pi: E\mapsto \tilde{V}_{n,n}$ given by $\pi((F,v))=F$.
In \cite{HSSoCG}, the following is proved:

\begin{lemma}
\label{Lem:Sections}
Let $s_1,\ldots,s_{m+1}$ be $m+1$ sections of the canonical bundle $\vartheta_l^{m+l}$.
Then there is a flag $F\in\tilde{V}_{m+l,m+l}$ such that $s_1(F)=\ldots=s_{m+1}(F)$.
\end{lemma}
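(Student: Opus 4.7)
The plan is to convert the coincidence problem into a zero problem and then kill it with an Euler-class obstruction. First, set $t_i := s_i - s_{m+1}$ for $i = 1, \ldots, m$; each $t_i$ is again a section of $\vartheta_l^{m+l}$, and the flags $F$ where all $s_i$ agree are precisely the common zeros of the $t_i$. Bundling them together, $\sigma := (t_1, \ldots, t_m)$ is a section of the Whitney sum $\xi := (\vartheta_l^{m+l})^{\oplus m}$, a real vector bundle of rank $lm$ over the complete flag manifold $\tilde{V}_{m+l,m+l}$. A standard obstruction-theory fact is that if the mod $2$ Euler class (equivalently, the top Stiefel-Whitney class) $w_{lm}(\xi) \in H^{lm}(\tilde{V}_{m+l,m+l}; \mathbb{Z}/2)$ is nonzero, then every continuous section of $\xi$ must vanish somewhere, which is exactly what we need.

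The remaining work is therefore a characteristic-class computation. Over $\tilde{V}_{m+l,m+l}$ the bundle $\vartheta_l^{m+l}$ splits tautologically as $\ell_1 \oplus \cdots \oplus \ell_l$, where $\ell_i$ is the line bundle given by the $i$-th step of the flag; writing $x_i := w_1(\ell_i)$, the Whitney sum formula gives
\[
w_{lm}(\xi) \;=\; \bigl(w_l(\vartheta_l^{m+l})\bigr)^{m} \;=\; (x_1 x_2 \cdots x_l)^{m}.
\]
To prove this class is nonzero I would push the problem down to the Grassmannian via the forgetful map $\pi: \tilde{V}_{m+l,m+l} \to G_l(\mathbb{R}^{m+l})$. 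The flag bundle $\pi$ has fiber a product of two smaller complete flag manifolds and its Leray–Serre spectral sequence (mod $2$) degenerates at $E_2$, so $\pi^*$ is injective in all degrees. Since $(x_1 \cdots x_l)^m = \pi^*(w_l^m)$, it suffices to show $w_l^m \neq 0$ in $H^{lm}(G_l(\mathbb{R}^{m+l}); \mathbb{Z}/2)$; but $lm$ is exactly the real dimension of $G_l(\mathbb{R}^{m+l})$, so $w_l^m$ sits in top degree, and its non-vanishing is a classical fact (for instance, in the Borel presentation $\mathbb{Z}/2[w_1,\ldots,w_l]/(\bar w_{m+1}, \ldots, \bar w_{m+l})$ the monomial $w_l^m$ represents the generator of the top cohomology, matching the corresponding Schubert class $[pt]$).

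The main obstacle is precisely this last non-vanishing of $w_l^m$ on the Grassmannian: the Borel presentation yields the relations implicitly as coefficients of a formal inverse, so verifying directly that $w_l^m$ survives requires either a Schubert-calculus identification or an inductive unwinding of the relations. A possible alternative is to argue inside $\tilde{V}_{m+l,m+l}$ itself, working in the presentation $\mathbb{Z}/2[x_1,\ldots,x_{m+l}]/(e_1,\ldots,e_{m+l})$ and exhibiting an explicit monomial $M$ of complementary degree so that $(x_1\cdots x_l)^m \cdot M$ equals the fundamental class $x_1^{m+l-1} x_2^{m+l-2}\cdots x_{m+l-1}$; this avoids Schubert calculus but trades it for a symmetric-function calculation that grows increasingly delicate with $m$ and $l$.
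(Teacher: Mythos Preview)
The paper does not actually prove this lemma; it is quoted from \cite{HSSoCG} and used as a black box to deduce Theorem~\ref{Thm:bu_flags}. So there is no ``paper's own proof'' to compare against here.

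Your approach is correct and is essentially the standard one. The reduction via $t_i=s_i-s_{m+1}$ to a single section of $\xi=(\vartheta_l^{m+l})^{\oplus m}$, and the identification $w_{lm}(\xi)=(x_1\cdots x_l)^m=\pi^*(w_l^m)$, are both fine. The injectivity of $\pi^*$ follows from Leray--Hirsch exactly as you say, since the tautological line-bundle classes restrict to generators of the fibre cohomology.

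Regarding what you flag as the ``main obstacle'', the non-vanishing of $w_l^m$ in $H^{lm}(G_l(\mathbb{R}^{m+l});\mathbb{Z}/2)$: there is a clean geometric shortcut that avoids both Schubert calculus and the symmetric-function manipulation. Define a section of $\gamma_l^{\oplus m}$ over $G_l(\mathbb{R}^{m+l})$ by
\[
V\;\longmapsto\;\bigl(\mathrm{proj}_V(e_{l+1}),\ldots,\mathrm{proj}_V(e_{l+m})\bigr),
\]
where $e_1,\ldots,e_{m+l}$ is the standard basis. This section vanishes precisely when $V\perp\mathrm{span}(e_{l+1},\ldots,e_{l+m})$, i.e.\ at the single point $V=\mathrm{span}(e_1,\ldots,e_l)$, and the zero is transverse. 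Hence $w_l(\gamma_l)^m=w_{lm}(\gamma_l^{\oplus m})$ is the generator of the top cohomology, in particular nonzero. This replaces the part of your outline you were least sure about with a two-line argument.
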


We will now show how Lemma \ref{Lem:Sections} implies Theorem \ref{Thm:bu_flags}.

\begin{proof}[Proof of Theorem \ref{Thm:bu_flags}]
Let $f: \overrightarrow{V}_{d+1,d+1}\rightarrow \mathbb{R}^d$ be antipodal.
Write $f$ as $(f_1,\ldots,f_d)$, where each $f_i: \overrightarrow{V}_{d+1,d+1}\rightarrow \mathbb{R}$ is an antipodal map.
In particular, each $f_i$ defines a section $s_i$ in $\vartheta_1^{d+1}$.
Let $s_0$ be the zero section in $\vartheta_1^{d+1}$.
Thus, $s_0,s_1,\ldots,s_d$ are $d+1$ sections in $\vartheta_1^{d+1}$ and thus by Lemma \ref{Lem:Sections} there is a flag $F$ on which the coincide.
As $s_0$ is the zero section, this means that $f_i(F)=0$ for all $i\in\{1,\ldots,d\}$ and thus $f(F)=0$, which is what we wanted to show.
\end{proof}

\subsection{Gnomonic projection}
\emph{Gnomonic projection} is a projection $\pi$ of the upper hemisphere $S^+$ of a (unit) sphere to its tangent space $T$ at the north pole.
It works as follows: for some point $p$ on $S^+$, let $\ell(p)$ be the line through $p$ and the origin.
The projection $\pi(p)$ of $p$ is then defined as the intersection of $\ell(p)$ and $T$.
Note that this is a bijection from the (open) upper hemisphere to the tangent space.
Gnomonic projection maps great circles to lines.
More generally, we say that a \emph{great $k$-circle} on a sphere $S^d$ is the intersection on $S^d$ with a $k+1$-dimensional linear subspace (i.e., a $(k+1)$-flat containing the origin).
In particular, gnomonic projection then maps great $k$-circles to $k$-flats.

Using gnomonic projection, we can show the following:

\begin{lemma}[Gnomonic projection]
\label{lem:gnomonic}
Let $k\leq d$.
Assume that any $m$ mass distributions in $\mathbb{R}^{d+1}$ can be simultaneously partitioned by a $k$-cone (or double wedge or $q$-fan) whose apex contains the origin.
Then any $m$ mass distributions in $\mathbb{R}^{d}$ can be simultaneously partitioned by a $k$-cone (or double wedge or $q$-fan).
\end{lemma}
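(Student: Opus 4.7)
The plan is to lift masses from $\mathbb{R}^d$ to radially invariant masses on $\mathbb{R}^{d+1}$, apply the hypothesis there, and then descend the resulting partitioner back to $\mathbb{R}^d$ by intersecting with a tangent hyperplane. Concretely, identify $\mathbb{R}^d$ with the affine hyperplane $T=\{x_{d+1}=1\}\subset\mathbb{R}^{d+1}$, viewed as the tangent plane of $S^d$ at its north pole, so that gnomonic projection becomes the map sending $p\in S^+$ to the unique intersection of $\mathbb{R}_{>0}\cdot p$ with $T$.

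Fix any smooth probability density $\rho$ on $(0,\infty)$ and, given input masses $\mu_1,\ldots,\mu_m$ on $\mathbb{R}^d\cong T$, define $\nu_i$ on $\mathbb{R}^{d+1}$ as the pushforward of the product $\mu_i\times\rho$ under the map $\Psi\colon(x,t)\mapsto t(x,1)$. Each $\nu_i$ is a bona fide mass distribution supported in the open upper half-space. The crucial observation is that for every $K\subset\mathbb{R}^{d+1}$ invariant under positive scalings one has $\Psi^{-1}(K)=(K\cap T)\times(0,\infty)$, which yields the clean identity $\nu_i(K)=\mu_i(K\cap T)$, and in particular $\nu_i(\mathbb{R}^{d+1})=\mu_i(\mathbb{R}^d)$.

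Next, apply the hypothesis to $\nu_1,\ldots,\nu_m$ to obtain a $k$-cone (resp.\ double wedge, $q$-fan) $\tilde P$ whose apex contains the origin and which partitions the $\nu_i$ as required. Because the apex, the bounding half-hyperplanes of a $q$-fan, the two cutting hyperplanes of a double wedge, and the defining subspaces $H_k$ and $H_k^\perp$ of a $k$-cone all pass through the origin, every region of the induced partition is invariant under positive scaling. The identity above therefore transfers the mass values of the pieces faithfully: $\nu_i(\text{region})=\mu_i(\text{region}\cap T)$. Setting $P=\tilde P\cap T$, I would then verify case by case that $P$ is genuinely a $k$-cone (resp.\ double wedge, $q$-fan) in $T\cong\mathbb{R}^d$: each linear subspace of $\mathbb{R}^{d+1}$ through the origin meets $T$ in an affine subspace of one lower dimension, so apices, bounding half-hyperplanes, and hyperplanes all land in the right dimensions. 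Combining this with the mass identity, $P$ partitions the $\mu_i$ in the same proportions in which $\tilde P$ partitioned the $\nu_i$.

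The main obstacle is the degenerate case in which the vertical direction $e_{d+1}$ happens to lie in the apex of $\tilde P$ (or in one of its bounding flats), which would make the intersection with $T$ lower dimensional than expected. I would dispose of this either by choosing the "north pole" direction generically with respect to a fixed candidate partitioner, or by a standard compactness/perturbation argument: replace $\nu_i$ by $\nu_i$ plus a tiny radial mass that forces any partitioning $\tilde P$ to avoid this degeneracy, apply the hypothesis to the perturbed family, and take a limit as the perturbation tends to zero, using the compactness of the space of $k$-cones (resp.\ double wedges, $q$-fans) through the origin on a bounded region. Beyond this technicality, the argument is a direct manipulation of the radial identity $\nu_i(K)=\mu_i(K\cap T)$.
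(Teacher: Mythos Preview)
Your approach is essentially the paper's: identify $\mathbb{R}^d$ with the tangent hyperplane $T=\{x_{d+1}=1\}$, lift the masses to $\mathbb{R}^{d+1}$, apply the hypothesis to obtain a partitioner through the origin, and then slice with $T$ to descend. The only substantive difference is in the lifting step. The paper pushes the $\mu_i$ onto the upper hemisphere of $S^d$ via inverse gnomonic projection and works with those spherical measures, whereas you spread each $\mu_i$ radially using an auxiliary density $\rho$ on $(0,\infty)$. Your choice is in fact the cleaner one: measures concentrated on $S^d\subset\mathbb{R}^{d+1}$ technically violate the paper's own requirement that lower-dimensional sets have measure zero, while your $\nu_i$ are genuine mass distributions, and the identity $\nu_i(K)=\mu_i(K\cap T)$ for positively-scaling-invariant $K$ drops out immediately. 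You also explicitly flag the degenerate configuration (apex contained in $\{x_{d+1}=0\}$, so that the slice with $T$ has the wrong dimension), which the paper's proof passes over; your generic-direction or perturbation fix is standard and adequate. In short, the two arguments are the same geometric idea, with your write-up being somewhat more careful about the technicalities.
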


\begin{proof}
We will only prove the statement for $k$-cones, as the proof is analogous for the other objects.
Let $\mu_1,\ldots,\mu_m$ be mass distributions in $\mathbb{R}^{d}$.
Use the inverse $\pi^{-1}$ of gnomonic projection to map $\mu_1,\ldots,\mu_m$ to the upper hemisphere of $S^d\subseteq\mathbb{R}^{d+1}$.
By our assumption there exists a $k$-cone $C$ whose apex contains the origin that simultaneously partitions $\pi^{-1}(\mu_1),\ldots,\pi^{-1}(\mu_{m})$.
Let $C_S$ be the intersection of $C$ with the upper hemisphere.
Consider now the apex of $C$, which has dimension $(d+1-k)\geq 1$.
The map $\pi$ maps this apex to a $(d-k)$ dimensional flat in $\mathbb{R}^d$.
Further, the $(d+1-k+1)$-dimensional half-hyperplanes whose union is $C$ all get mapped to $(d-k+1)$-dimensional half-hyperplanes in $\mathbb{R}^d$.
It follows that the image $\pi(C_S)$ of $C_S$ under the gnomonic projection is a $k$-cone in $\mathbb{R}^d$ which simultaneously bisects $\mu_1,\ldots,\mu_{m}$.
\end{proof}

In the following, we will only prove statements about $k$-cones, double wedges and fans whose apexes contain the origin.
The general results then follow from the above lemma.

\section{$k$-cones}\label{sec:cones}

In this section, we will use the Borsuk-Ulam theorem for flag manifolds to show the existence of bisections with $k$-cones.

\begin{theorem}
\label{Thm:cones_apex_point}
Let $\mu_1,\ldots,\mu_{d}$ be $d$ mass distributions in $\mathbb{R}^d$, let $p\in\mathbb{R}^d$ be a point and let $1\leq k\leq d$.
Then there exists a $k$-cone $C$ whose apex contains $p$ and that simultaneously bisects $\mu_1,\ldots,\mu_{d}$.
\end{theorem}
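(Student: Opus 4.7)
The plan is to translate so that $p$ is the origin and then parametrize $k$-cones with apex through the origin by a flag manifold with oriented lines in $\mathbb{R}^{d+1}$, so that the Borsuk--Ulam theorem for flag manifolds (Theorem \ref{Thm:bu_flags}) produces the desired bisection from the natural antipodal test map valued in $\mathbb{R}^d$.

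More concretely, I would embed $\mathbb{R}^d$ as the hyperplane $\{x_{d+1}=0\}$ in $\mathbb{R}^{d+1}$ and work with the space $\overrightarrow{\mathcal{F}}$ of flags $(V_1 \subset V_{k+1})$ in $\mathbb{R}^{d+1}$ with $V_1$ one-dimensional and oriented, $\dim V_{k+1} = k+1$, and $V_{k+1}$ containing the fixed line $\mathbb{R} e_{d+1}$. Such a flag yields a $k$-cone in $\mathbb{R}^d$ as follows. The $k$-dimensional subspace $H_k := V_{k+1} \cap \mathbb{R}^d$ serves as the ambient for the spherical part of the cone. Since the oriented unit vector $u$ of $V_1$ lies in $V_{k+1} = H_k \oplus \mathbb{R} e_{d+1}$, it decomposes uniquely as $u = \sin\alpha \cdot \hat v + \cos\alpha \cdot e_{d+1}$ with $\hat v \in H_k$ a unit vector and $\alpha \in [0, \pi]$. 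I would then set
\[ C(F) := \pi_{H_k}^{-1}\bigl(\{y \in H_k : \angle(y, \hat v) \leq \alpha\}\bigr), \]
a genuine $k$-cone whose apex $H_k^\perp$ (inside $\mathbb{R}^d$) is a linear subspace containing the origin. Reversing the orientation of $V_1$ replaces $(\hat v, \alpha)$ by $(-\hat v, \pi - \alpha)$ and therefore sends $C(F)$ to its complement, up to a measure-zero boundary; at the two exceptional flags $V_1 = \pm \mathbb{R} e_{d+1}$ the cone degenerates continuously to the empty set or to all of $\mathbb{R}^d$.

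Consequently the test map $f: \overrightarrow{\mathcal{F}} \to \mathbb{R}^d$ defined by $f_i(F) = \mu_i(C(F)) - \tfrac{1}{2}\mu_i(\mathbb{R}^d)$ is continuous and antipodal, and Theorem \ref{Thm:bu_flags} yields a flag $F^*$ with $f(F^*) = 0$; the associated cone $C(F^*)$ bisects each $\mu_i$ and has apex through the origin, as required. The step I expect to require the most care is verifying that $\overrightarrow{\mathcal{F}}$, cut out by the Schubert-type condition $V_{k+1} \supset \mathbb{R} e_{d+1}$, genuinely satisfies the hypotheses of Theorem \ref{Thm:bu_flags}. My intended approach is to identify $\overrightarrow{\mathcal{F}}$ with the oriented sphere bundle of $\gamma_k \oplus \varepsilon$ over $\mathrm{Gr}(k,d)$ (where $\gamma_k$ is the tautological $k$-plane bundle and $\varepsilon$ is trivial), whose associated line bundle carries exactly the characteristic-class input driving Lemma \ref{Lem:Sections}, so that the Borsuk--Ulam conclusion is available in this slightly non-standard flag-manifold setting.
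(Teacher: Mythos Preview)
Your configuration space is not a flag manifold in the sense required by Theorem~\ref{Thm:bu_flags}: the condition $V_{k+1}\supset\mathbb{R}e_{d+1}$ cuts out a proper Schubert-type subvariety of the flag manifold $\{V_1\subset V_{k+1}\subset\mathbb{R}^{d+1}\}$, and the theorem, as stated and proved in the paper, applies only to the full space of flags of a fixed signature. Your proposed fallback to Lemma~\ref{Lem:Sections} has the same problem: that lemma concerns sections of the canonical line bundle over the \emph{complete} flag manifold $\tilde V_{d+1,d+1}$, not over $P(\gamma_k\oplus\varepsilon)$ on $\mathrm{Gr}(k,d)$. So as written, neither tool is available, and the crucial step---that every antipodal map from your space to $\mathbb{R}^d$ has a zero---is unproved. (It is plausibly true, via showing $t^d\neq 0$ in $H^*(P(\gamma_k\oplus\varepsilon);\mathbb{Z}_2)$ using the projective-bundle formula, but that is a separate computation you would have to carry out and is not in the paper.)

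The paper sidesteps the whole issue with a trick you are missing. It works directly with the genuine flag manifold $\overrightarrow{\mathcal F}=\{(\overrightarrow{\ell},V_k):\overrightarrow{\ell}\subset V_k\subset\mathbb{R}^d\}$ in $\mathbb{R}^d$, to which Theorem~\ref{Thm:bu_flags} applies verbatim, yielding zeros for antipodal maps into $\mathbb{R}^{d-1}$. The angle $\alpha$ is not parametrized separately; instead, for each flag one takes the \emph{unique} $k$-cone with that axis and that $V_k$ which bisects the total mass $\mu_1+\cdots+\mu_d$. The test map then records only $\mu_i(C)-\mu_i(\overline C)$ for $i=1,\ldots,d-1$, lands in $\mathbb{R}^{d-1}$, and is antipodal because the complementary cone also bisects the total mass. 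A zero bisects $\mu_1,\ldots,\mu_{d-1}$, and since the total mass is bisected by construction, $\mu_d$ is bisected automatically. This avoids any lift to $\mathbb{R}^{d+1}$ and any Schubert condition.
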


\begin{proof}
Without loss of generality, let $p$ be the origin.
Let $\overrightarrow{\mathcal{F}}$ be the flag manifold with oriented lines defined by the flags $(0,\overrightarrow{\ell},V_k,\mathbb{R}^d)$, where $\overrightarrow{\ell}$ has dimension 1 and $V_k$ has dimension $k$.
Each $(0,\overrightarrow{\ell},V_k,\mathbb{R}^d)$ defines a unique $k$-cone that bisects the total mass $\mu_1+\ldots +\mu_d$ and whose projection to $V_k$ is a cone $C$ with central axis $\overrightarrow{\ell}$ and apex $p$.
For $i\in\{1,\ldots,d-1\}$, define $f_i:=\mu_i(C)-\mu_i(\overline{C})$, where $\overline{C}$ denotes the complement of $C$.
Then $f:=(f_1,\ldots,f_{d-1})$ is a map from $\overrightarrow{\mathcal{F}}$ to $\mathbb{R}^{d-1}$.
Further, as every $C$ bisects the total mass, so does the complement $\overline{C}$.
In particular, $\overline{C}$ is the unique $k$-cone that we get when switching the orientation of $\overrightarrow{\ell}$.
Thus, $f$ is antipodal, and by Theorem \ref{Thm:bu_flags} it has a zero.
Let $C_0$ be the $k$-cone defined by this zero.
By the definition of $f$ we have that $C_0$ simultaneously bisects $\mu_1,\ldots,\mu_{d-1}$.
By construction, $C_0$ also bisects the total mass, thus, it must also bisect $\mu_d$.
\end{proof}

Theorem \ref{Thm:cones_general} now follows by gnomonic projection.

\cones*

As noted in the introduction, these results are tight with respect to the number of masses that are bisected.
However, in some cases, we can enforce additional restrictions without sacrificing a mass.

\begin{theorem}
\label{Thm:cones_apex_line}
Let $\mu_1,\ldots,\mu_{d+1}$ be $d+1$ mass distributions in $\mathbb{R}^d$, where $d$ is odd and let $g$ be a line.
Then there exists a $d$-cone $C$ whose apex $a$ lies on $g$ that simultaneously bisects $\mu_1,\ldots,\mu_{d+1}$.
\end{theorem}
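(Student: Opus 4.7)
The plan is to adapt the argument of Theorem~\ref{Thm:cones_apex_point} by enlarging the configuration space to accommodate the one-dimensional freedom of the apex along $g$. First, by a gnomonic-projection reduction in the spirit of Lemma~\ref{lem:gnomonic}, it suffices to find a $d$-cone in $\mathbb{R}^{d+1}$ whose apex is a line through the origin contained in the $2$-plane $P\subset\mathbb{R}^{d+1}$ obtained as the linear span of the lift of $g$. I would parametrize such cones by triples $(\ell,\vec{v},\alpha)$, where $\ell\in\mathbb{RP}(P)$ is the apex line, $\vec{v}\in S(\ell^\perp)\subset\mathbb{R}^{d+1}$ is the oriented central axis, and $\alpha\in(0,\pi)$ is the half-angle. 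Exactly as in Theorem~\ref{Thm:cones_apex_point}, the total-mass bisection condition for $\mu_1+\cdots+\mu_{d+1}$ pins down $\alpha$ as a continuous function of $(\ell,\vec{v})$, so the effective configuration space becomes the unit sphere bundle $X=S(\gamma_P^\perp)$ of the rank-$d$ orthogonal-complement bundle $\gamma_P^\perp$ over $\mathbb{RP}(P)\cong\mathbb{RP}^1$, where $\gamma_P$ is the tautological line bundle. The natural $\mathbb{Z}_2$-action $(\ell,\vec{v})\mapsto(\ell,-\vec{v})$ on $X$ corresponds to replacing a cone by its complement.

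I would then study the antipodal test map
\[ f=(f_1,\ldots,f_d)\colon X\to\mathbb{R}^d,\qquad f_i(C)=\mu_i(C)-\mu_i(\overline{C}), \]
viewed as a section of the rank-$d$ vector bundle $d\cdot L$ over the quotient $X/\mathbb{Z}_2=\mathbb{P}(\gamma_P^\perp)$, where $L$ denotes the tautological line bundle of the projective bundle. Writing $x=w_1(L)$ and letting $w_P$ generate $H^1(\mathbb{RP}^1;\mathbb{Z}_2)$, the identity $w(\gamma_P^\perp)=1+w_P$ together with the projective bundle relation yields $x^d=w_P\cdot x^{d-1}$ in $H^d(\mathbb{P}(\gamma_P^\perp);\mathbb{Z}_2)$, and by Leray--Hirsch this element is a generator of $H^d\cong\mathbb{Z}_2$. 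Hence the mod-$2$ Euler class $w_d(d\cdot L)=x^d$ is nonzero, so $f$ must vanish somewhere, and the cone corresponding to such a zero simultaneously bisects all $d+1$ masses.

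The main obstacle is ensuring that the zero produced by this topological argument corresponds to a cone with a genuine finite apex on $g$, rather than to the distinguished point $\ell_\infty\in\mathbb{RP}(P)$ representing ``apex at infinity'' (the apex line parallel to the tangent plane of the gnomonic sphere, which descends only to a degenerate half-space in $\mathbb{R}^d$ and not to a proper $d$-cone with apex on $g$). Ruling this out amounts to excluding zeros of $f$ on the fiber $X_{\ell_\infty}\cong S^{d-1}$; since $f|_{X_{\ell_\infty}}\colon S^{d-1}\to\mathbb{R}^d$ is an antipodal map from a $(d-1)$-sphere to a $d$-dimensional target, and hence overdetermined, this should hold generically. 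I expect the hypothesis that $d$ is odd to enter precisely at this step, via a parity or degree argument that shows the local contribution of $X_{\ell_\infty}$ to the algebraic count of zeros vanishes exactly when $d$ is odd, for instance by analysing the limiting Ham--Sandwich configuration for $\mu_1,\ldots,\mu_d$ as the apex escapes to infinity along $g$ and computing the degree with which the finite-apex part of $X$ wraps over the target.
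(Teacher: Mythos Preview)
Your approach diverges substantially from the paper's, and the place where you say you get stuck is exactly where the argument is incomplete. Your mod-$2$ Euler class computation is correct and does force a zero of $f$ somewhere on $X=S(\gamma_P^\perp)$, but nothing in that computation prevents all the zeros from sitting in the single fibre $X_{\ell_\infty}$. Removing that fibre leaves a bundle over a contractible base, so the restricted Euler class vanishes; the nontriviality you detected on $\mathbb{P}(\gamma_P^\perp)$ is carried entirely by the fibre at infinity. Your genericity remark does not close this: even if $f|_{X_{\ell_\infty}}$ is generically nonvanishing, a perturbation-and-limit argument can push the guaranteed zero right back onto $\ell_\infty$. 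More fundamentally, your invariant is a mod-$2$ class, so it cannot possibly encode the parity of $d$; the fact that your Euler class computation goes through identically for $d$ even is a warning sign that this framework is not the right one.

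The paper bypasses the compactification issue altogether by working directly in $\mathbb{R}^d$ with the apex $a$ sliding along $g$. For each fixed $a$ one gets an antipodal map $f_a\colon S^{d-1}\to\mathbb{R}^d$; if no $f_a$ vanishes, normalise to maps $S^{d-1}\to S^{d-1}$, each of odd (hence nonzero) degree. The key observation is that $f_{+\infty}=-f_{-\infty}$, so $\deg(f_{+\infty})=(-1)^d\deg(f_{-\infty})$, and for $d$ odd this forces $\deg(f_{+\infty})\neq\deg(f_{-\infty})$, contradicting the homotopy given by moving $a$ along $g$. This is precisely the integer-degree refinement you were reaching for in your last paragraph, but it is much cleaner over the two-ended interval $[-\infty,+\infty]$ than over the one-point compactification $\mathbb{RP}^1$ that gnomonic projection hands you. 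If you want to salvage your bundle-theoretic picture, you would need to pass to an integral obstruction (twisted Euler class or relative degree) that can distinguish the two ends; the paper's argument is essentially that, stripped of bundle language.
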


\begin{proof}
Without loss of generality, let $g$ be the $x_d$-axis.
Place $a$ somewhere on $g$.
We will move $a$ along $g$ from $-\infty$ to $+\infty$.
Consider all directed lines through $a$.
Each such line $\overrightarrow{\ell}$ defines a unique $d$-cone $C$ bisecting the total mass.
For each $i\in\{1,\ldots,d\}$, define $f_i:=\mu_i(C)-\mu_i(\overline{C})$, where $\overline{C}$ again denotes the complement of $C$.
Thus, for each choice of $a$, we get a map $f_a=(f_1,\ldots,f_d): S^{d-1}\rightarrow\mathbb{R}^d$.
We claim that for some $a$ this map has a zero.
Assume for the sake of contradiction that none of the maps have a zero.
Then we can normalize them to get maps $f_a: S^{d-1}\rightarrow S^{d-1}$.
In particular, each such map has a degree.
Further, all of the maps are antipodal, implying that their degree is odd, and thus non-zero.
Finally, note that $f_{-\infty}=-f_{+\infty}$, and thus $\deg(f_{-\infty})=-\deg(f_{+\infty})$.
(Here we require that $d-1$ is even.)
In particular, as the degrees are non-zero, $f_{-\infty}$ and $f_{+\infty}$ have different degrees.
But moving $a$ along $g$ from $-\infty$ to $+\infty$ defines a homotopy from $f_{-\infty}$ to $f_{+\infty}$, which is a contradiction.
Thus, there exists some $a$ such that $f_a$ has a zero and analogous to above this zero defines a $d$-cone $C$ that simultaneously bisects $\mu_1,\ldots,\mu_{d+1}$.
\end{proof}

\section{Double Wedges}\label{sec:double_wedges}

Before proving Theorem \ref{Thm:projective_hs_points}, we will prove a more general statement about bisections of mass distributions with double wedges.
Let us first explain how bisections with double wedges can be regarded as Ham-Sandwich cuts after a projective transformation:
Let $\mu$ be a mass distribution in $\mathbb{R}^d$ and let $D=(h_1,h_2)$ be a double wedge that bisects $\mu$.
Use gnomonic projection to map $\mu$ and $D$ to the upper hemisphere of $S^d\subseteq\mathbb{R}^{d+1}$.
Now, antipodally copy $\mu$ and $D$ to the lower hemisphere.
Note that both $h_1$ and $h_2$ are oriented $(d-1)$-dimensional great circles on $S^d$, so we can extend them to oriented hyperplanes through the origin in $\mathbb{R}^{d+1}$, which we denote as $H_1$ and $H_2$, respectively.
Also, we will denote the defined measure on $S^d$ by $\mu_S$.
Note now that $\mu_S(S^d)=2\mu(\mathbb{R}^d)$ and that $(H_1,H_2)$ bisects $\mu_S$.
Further, the above is invariant under rotations of the sphere, thus we can rotate the sphere until $H_1$ is one of the two orientations of the hyperplane $x_d=0$.
Using gnomonic projection to map the upper hemisphere to $\mathbb{R}^d$, we get a projective transformation $\varphi$ of $\mathbb{R}^d$ with the property that $\varphi(h_1)$ is the sphere at infinity and that $\varphi(h_2)$ bisects $\varphi(\mu)$.
Thus, we get the following:

\begin{lemma}
\label{Lem:dw_projective}
Let $\mu_1,\ldots,\mu_k$ be mass distributions in $\mathbb{R}^d$ and let $D=(h_1,h_2)$ be a double wedge which simultaneously bisects $\mu_1,\ldots,\mu_k$.
Then there is a projective transformation $\varphi$ of $\mathbb{R}^d$ with the property that $\varphi(h_1)$ is the sphere at infinity and that $\varphi(h_2)$ simultaneously bisects $\varphi(\mu_1),\ldots,\varphi(\mu_k)$.
\end{lemma}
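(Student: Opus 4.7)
The plan is to follow the outline spelled out in the paragraph preceding the statement, expanding each step into a verification that the resulting map is a well-defined projective transformation with the advertised properties. The strategy is to pass to the sphere $S^d\subseteq\mathbb{R}^{d+1}$, exploit the fact that on the sphere the distinction between ordinary hyperplanes and the hyperplane at infinity disappears, perform a rotation, and then project back.

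First, I would apply the inverse gnomonic projection $\pi^{-1}$ to send the masses $\mu_i$ and the two hyperplanes $h_1,h_2$ to the open upper hemisphere $S^+$. Under $\pi^{-1}$, the hyperplanes $h_1,h_2$ become open hemispheres of oriented $(d-1)$-dimensional great circles, and I would close them up by antipodally duplicating everything to the lower hemisphere. This produces a measure $\mu_{i,S}$ on all of $S^d$ with $\mu_{i,S}(S^d)=2\mu_i(\mathbb{R}^d)$, and two oriented great $(d-1)$-circles which extend to oriented hyperplanes $H_1,H_2$ through the origin in $\mathbb{R}^{d+1}$. The antipodal symmetry of the construction, together with the hypothesis that $D$ bisects each $\mu_i$, shows immediately that the double wedge $(H_1,H_2)$ on $S^d$ bisects each $\mu_{i,S}$.

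Next, since the whole picture on $S^d$ is invariant under orthogonal transformations of $\mathbb{R}^{d+1}$, I would choose a rotation $\rho\in\mathrm{SO}(d+1)$ taking $H_1$ to the hyperplane $\{x_d=0\}$ (with a chosen orientation). Write $\tilde H_2=\rho(H_2)$ and $\tilde\mu_{i,S}=\rho_\ast(\mu_{i,S})$; the measure and bisection properties transport unchanged, so $\tilde H_2$ still bisects each $\tilde\mu_{i,S}$ on $S^d$. Now I would restrict attention to the upper hemisphere (the one bounded by $\rho(H_1)$) and apply gnomonic projection $\pi$ again to transport $\tilde\mu_{i,S}|_{S^+}$ and $\tilde H_2\cap S^+$ back to $\mathbb{R}^d$. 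The composition $\varphi:=\pi\circ\rho\circ\pi^{-1}$ is a projective transformation of $\mathbb{R}^d$: gnomonic projection and its inverse identify $\mathbb{R}^d$ with an affine chart of real projective space $\mathbb{RP}^d$, and the linear map $\rho$ on $\mathbb{R}^{d+1}$ descends to a projective automorphism of $\mathbb{RP}^d$ whose restriction to this chart is exactly $\varphi$.

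Finally I would check the two advertised properties. By the choice of $\rho$, the hyperplane $H_1$ is sent to the equator $\{x_d=0\}$, whose image under gnomonic projection is the hyperplane at infinity, giving $\varphi(h_1)$ = sphere at infinity. For $\varphi(h_2)$, it is the image of $\tilde H_2\cap S^+$ under $\pi$, which is an affine hyperplane in $\mathbb{R}^d$; and because $\tilde H_2$ bisects each $\tilde\mu_{i,S}$ symmetrically across the two hemispheres, its restriction to $S^+$ bisects $\tilde\mu_{i,S}|_{S^+}$, whence $\varphi(h_2)$ bisects $\varphi(\mu_i)=\pi_\ast(\tilde\mu_{i,S}|_{S^+})$. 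The only real care needed is tracking the antipodal identification (making sure that the bisection on the full sphere actually descends to a bisection on the hemisphere) and observing that no mass is lost at the equator $\rho(H_1)$, which is a lower-dimensional set and hence has measure zero by the definition of a mass distribution; this is the point I would expect to be most delicate to write out cleanly, but it is not a serious obstacle.
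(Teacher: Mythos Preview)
Your proposal is correct and follows essentially the same approach as the paper: the paper's argument (given in the paragraph immediately preceding the lemma) is exactly the gnomonic projection, antipodal duplication, rotation of $S^d$, and re-projection that you describe. You have simply fleshed out the verification steps—identifying $\varphi=\pi\circ\rho\circ\pi^{-1}$ as a projective automorphism, checking that the bisection on the full sphere descends to the hemisphere via antipodal symmetry, and noting the equator is null—more carefully than the paper does.
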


In the following we will now prove results about bisections of different families with different double wedges which still share one of the hyperplanes.

\begin{lemma}
\label{Lem:dw_masses_odd}
Let $\mu_1^1,\ldots,\mu_{d}^1,\mu_1^2,\ldots,\mu_{d}^2,\ldots,\mu_{d}^{d-1}$ be $d-1$ families each containing $d$ mass distributions in $\mathbb{R}^d$, where $d$ is odd.
Then there exists an oriented hyperplane $h_1$ and $d-1$ double wedges $D^i=(h_1,h_2^i)$, $i\in\{1,\ldots,d-1\}$, whose apexes all contain the origin and such that $D^i$ simultaneously bisects $\mu_1^i,\ldots,\mu_{d}^i$.
\end{lemma}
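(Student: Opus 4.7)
The plan is to apply the configuration-space/test-map paradigm with $(\mathbb{Z}_2)^d$-equivariant cohomology, reducing the lemma to showing that a concrete equivariant Euler class is nonzero in a quotient polynomial ring; the parity hypothesis on $d$ will enter through an odd binomial coefficient in the final computation.

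I would parameterize oriented hyperplanes through the origin by their unit normals in $S^{d-1}$ and let $X=(S^{d-1})^d$ be the configuration space of tuples $(h_1,h_2^1,\ldots,h_2^{d-1})$. Define the test map $T\colon X\to\mathbb{R}^{d(d-1)}$ coordinate-wise by $T_{i,j}=\mu_j^i(D^i)-\mu_j^i(\overline{D^i})$, with $D^i=(h_1,h_2^i)$; a zero of $T$ is exactly a solution of the lemma. The product antipodal $(\mathbb{Z}_2)^d$-action on $X$, with generators $g_0$ flipping $h_1$ and $g_1,\ldots,g_{d-1}$ flipping the respective $h_2^i$, makes $T$ equivariant, since flipping either $h_1$ or $h_2^i$ exchanges $D^i$ with $\overline{D^i}$. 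Consequently the target representation decomposes as $\bigoplus_{i=1}^{d-1} d\cdot\chi_i$, where $\chi_i$ is the one-dimensional character with $\chi_i(g_0)=\chi_i(g_i)=-1$ and $\chi_i(g_k)=1$ otherwise.

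Assuming for contradiction that $T$ has no zero, normalization produces a $(\mathbb{Z}_2)^d$-equivariant map $X\to S^{d(d-1)-1}$, forcing the equivariant Euler class of the target to vanish when pulled back to $H^{d(d-1)}_{(\mathbb{Z}_2)^d}(X;\mathbb{F}_2)$. Because the Borel construction of $X$ with the product action is $(\mathbb{RP}^{d-1})^d$, this cohomology ring is
\[
H^*_{(\mathbb{Z}_2)^d}(X;\mathbb{F}_2)\;\cong\;\mathbb{F}_2[s_0,s_1,\ldots,s_{d-1}]\,\big/\,(s_0^d,s_1^d,\ldots,s_{d-1}^d),
\]
where $s_k\in H^1$ is the Stiefel-Whitney class of the antipodal character on the $k$-th factor. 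Using $w_1(\chi_i)=s_0+s_i$, the equivariant Euler class of the target becomes $\xi=\prod_{i=1}^{d-1}(s_0+s_i)^d$, so the task reduces to showing $\xi\neq 0$ in this quotient.

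The only degree-$d(d-1)$ monomial surviving in the quotient is $s_0^{d-1}s_1^{d-1}\cdots s_{d-1}^{d-1}$. Expanding $\xi$ and picking a term $\binom{d}{j_i}s_0^{j_i}s_i^{d-j_i}$ from each factor $(s_0+s_i)^d$, any contribution to this monomial forces $d-j_i=d-1$, hence $j_i=1$ for every $i$; this automatically gives $\sum_i j_i=d-1$ and yields $s_0^{d-1}$ on the $s_0$-side. The resulting coefficient is $\binom{d}{1}^{d-1}=d^{d-1}$, which is odd precisely when $d$ is odd, so $\xi\neq 0$ and $T$ must have a zero, proving the lemma. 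The main obstacle is this coefficient computation combined with carefully identifying the equivariant action on the target; once these are in place the parity hypothesis enters transparently, since for even $d$ the coefficient $d^{d-1}$ vanishes modulo $2$ and the argument as stated breaks down.
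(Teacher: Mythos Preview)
Your argument is correct and genuinely different from the paper's. The paper fixes $h_1$ and, for each family, normalizes the resulting antipodal map $g_{h_1}\colon S^{d-1}\to S^{d-1}$, observes that it has odd degree, uses $g_{-h_1}=-g_{h_1}$ together with $d-1$ even to get $\deg(g_{-h_1})=-\deg(g_{h_1})$, and then builds auxiliary antipodal scalar functions $t_i\colon S^{d-1}\to\mathbb{R}$ (degree times distance to the zero set) to which it applies the classical Borsuk--Ulam theorem over the space of $h_1$'s. Your route instead packages everything into a single $(\mathbb{Z}_2)^d$-equivariant test map on $(S^{d-1})^d$ and reduces the question to the nonvanishing of the top Stiefel--Whitney class $\prod_{i=1}^{d-1}(s_0+s_i)^d$ in $H^*((\mathbb{RP}^{d-1})^d;\mathbb{F}_2)$, which you compute directly. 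The paper's argument is more elementary in that it only needs degree theory and classical Borsuk--Ulam, but the construction of the $t_i$ is somewhat ad~hoc; your Fadell--Husseini-style computation is cleaner and more systematic, and the coefficient $d^{d-1}\bmod 2$ makes the role of the parity hypothesis completely transparent. Both arguments break for even $d$ at the analogous point (for the paper, $\deg(-g)=\deg(g)$ when $d-1$ is odd; for you, $d^{d-1}\equiv 0\bmod 2$).
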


\begin{proof}
The space of pairs $(h_1,h_2)$ of oriented hyperplanes in $\mathbb{R}^d$ containing the origin is $S^{d-1}\times S^{d-1}$.
For some mass distribution $\mu$, we can thus define a function $f: S^{d-1}\times S^{d-1}\rightarrow \mathbb{R}$ by $f(h_1,h_2):=\mu(D)-\mu(\overline{D})$, where $D$ is the double wedge defined by $h_1$ and $h_2$.
Note that $f(-h_1,h_2)=f(h_1,-h_2)=-f(h_1,h_2)$ and that $f(h_1,h_2)=0$ if and only if $D$ bisects $\mu$.
Further, for each $h_1\in S^{d-1}$ we get a function $f_{h_1}:=f(h_1,\cdot): S^{d-1}\rightarrow \mathbb{R}$.

Let us now fix some $h_1$ and consider the family $\mu_1^1,\ldots,\mu_{d}^1$.
Assume that $\mu_1^1,\ldots,\mu_{d}^1$ cannot be simultaneously bisected by a double wedge defined by $h_1$ and some other hyperplane through the origin.
In particular, defining a function as above for each mass yields a map $g_{h_1}:S^{d-1}\rightarrow\mathbb{R}^d$ which has no zero.
Thus, after normalizing, we get a map $g_{h_1}:S^{d-1}\rightarrow S^{d-1}$.
In particular, this map has a degree.
As we have $g_{h_1}(-h_2)=-g_{h_1}(h_2)$, this degree is odd and thus non-zero.
Further, varying $h_1$ again, we note that $g_{-h_1}(h_2)=-g_{h_1}(h_2)$, and thus, as $d-1$ is even, we have $\deg(g_{-h_1})=-\deg(g_{h_1})$.
In particular, any path from $-h_1$ to $h_1$ defines a homotopy between two maps of different degree, which is a contradiction.
Thus, along every path from $-h_1$ to $h_1$ we encounter a hyperplane $h_1^*$ such that $g_{h_1^*}$ has a zero.
In particular, this partitions $S^{d-1}$ into regions where $g_{h_1}$ has a zero and where it does not.
Let $Z\subseteq S^{d-1}$ be the region where $g_{h_1}$ has a zero.
We note that all regions are antipodal (i.e., $g_{h_1}$ has a zero if and only if $g_{-h_1}$ does) and no connected component of $S^{d-1}\setminus Z$ contains two antipodal points.
Hence, we can define a map $t_1: S^{d-1}\rightarrow\mathbb{R}$ as follows: for each $h_1\in Z$, set $t_1(h_1)=0$.
Further, for each $h_1\in S^{d-1}\setminus Z$, set $t_1(h_1)=\deg(g_{h_1})\cdot d(h_1,Z)$, where $d(h_1,Z)$ denotes the distance from $h_1$ to $Z$.
Note that $t_1$ is continuous and $t_1(-h_1)=-t_1(h_1)$ and $t_1(h_1)=0$ if and only if $g_{h_1}$ has a zero.

We can do this for all families to get an antipodal map $t:=(t_1,\ldots,t_{d-1}):S^{d-1}\rightarrow\mathbb{R}^{d-1}$.
By the Borsuk-Ulam theorem, this map has a zero.
This zero gives us a hyperplane $h_1$ through the origin which, by construction, has the property that for each family of masses $\mu_1^i,\ldots,\mu_{d}^i$ there exists another hyperplane $h_2$ through the origin such that $(h_1,h_2)$ simultaneously bisects $\mu_1^i,\ldots,\mu_{d}^i$.
\end{proof}

Using the same lifting argument as for the proof of Theorem \ref{Thm:cones_general}, we get the following:

\begin{corollary}
\label{Cor:dw_masses_even}
Let $\mu_1^1,\ldots,\mu_{d+1}^1,\mu_1^2,\ldots,\mu_{d+1}^2,\ldots,\mu_{d+1}^d$ be $d$ families each containing $d+1$ mass distributions in $\mathbb{R}^d$, where $d$ is even.
Then there exists an oriented hyperplane $h_1$ and $d$ double wedges $D^i=(h_1,h_2^i)$, $i\in\{1,\ldots,d\}$, such that $D^i$ simultaneously bisects $\mu_1^i,\ldots,\mu_{d+1}^i$.
\end{corollary}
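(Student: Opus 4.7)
The plan is to deduce Corollary \ref{Cor:dw_masses_even} from Lemma \ref{Lem:dw_masses_odd} by exactly the same gnomonic-projection lifting that takes Theorem \ref{Thm:cones_apex_point} to Theorem \ref{Thm:cones_general} via Lemma \ref{lem:gnomonic}. The key observation is that when $d$ is even the ambient dimension $d+1$ is odd, and the numerology of Lemma \ref{Lem:dw_masses_odd} applied in $\mathbb{R}^{d+1}$ asks for $(d+1)-1 = d$ families of $d+1$ masses each, which is precisely the data we are given.

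First I would apply the inverse $\pi^{-1}$ of gnomonic projection from the upper hemisphere of $S^{d}\subseteq\mathbb{R}^{d+1}$ to $\mathbb{R}^d$ to every mass distribution $\mu_j^i$, obtaining lifted mass distributions $\tilde\mu_j^i$ supported in the upper hemisphere. Next, since $d+1$ is odd, Lemma \ref{Lem:dw_masses_odd} applies to the $d$ families of $d+1$ lifted masses and produces an oriented hyperplane $\tilde h_1$ through the origin in $\mathbb{R}^{d+1}$ together with hyperplanes $\tilde h_2^i$ through the origin so that each double wedge $\tilde D^i=(\tilde h_1,\tilde h_2^i)$ simultaneously bisects $\tilde\mu_1^i,\ldots,\tilde\mu_{d+1}^i$. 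Finally, I would apply gnomonic projection $\pi$ itself. Since gnomonic projection sends great $k$-circles to $k$-flats, each $\tilde h_j$ intersected with the upper hemisphere projects to an oriented hyperplane in $\mathbb{R}^d$, which we take to be $h_1$ and $h_2^i$. The apexes $\tilde h_1\cap \tilde h_2^i$ are $(d-1)$-flats through the origin in $\mathbb{R}^{d+1}$ and project to $(d-2)$-flats in $\mathbb{R}^d$; in particular, as required by the corollary, no constraint that they pass through the origin is imposed. The image double wedges $D^i=(h_1,h_2^i)$ bisect the original masses because $\pi$ is a bijection between the upper hemisphere and $\mathbb{R}^d$ that carries the lifted measure back to the original and carries $\tilde D^i$ to $D^i$.

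The main potential obstacle is that Lemma \ref{lem:gnomonic} is stated for a single partition rather than a collection of partitions sharing one hyperplane, so one should check that the proof of that lemma adapts verbatim: it does, because each individual double wedge behaves under gnomonic projection exactly as in Lemma \ref{lem:gnomonic}, and the sharing of $\tilde h_1$ across families is preserved by $\pi$. A minor technical point is that the lifted measures concentrate on a sphere and hence are not mass distributions on $\mathbb{R}^{d+1}$ in the strict sense of the paper, but this is the same issue that already arises in Lemma \ref{lem:gnomonic} and is handled there by the usual approximation/thickening argument, so no new ideas are needed.
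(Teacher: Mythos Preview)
Your proposal is correct and is exactly the approach the paper takes: the paper states the corollary with the one-line justification ``Using the same lifting argument as for the proof of Theorem \ref{Thm:cones_general},'' i.e., lift to $\mathbb{R}^{d+1}$ via inverse gnomonic projection, apply Lemma \ref{Lem:dw_masses_odd} in the odd dimension $d+1$ (which asks for $(d+1)-1=d$ families of $d+1$ masses, matching the given data), and project back. Your remarks about the shared hyperplane $\tilde h_1$ being preserved under projection and about the lifted measures not being mass distributions in the strict sense are both accurate, and the latter issue is indeed already present in (and implicitly handled the same way as in) Lemma \ref{lem:gnomonic}.
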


Note that for the argument $\deg(g_{-h_1})=-\deg(g_{h_1})$ we require that the considered sphere has even dimension.
This means, that if we want to prove Lemma \ref{Lem:dw_masses_odd} for even dimensions, we have to use different arguments.
In the following we try to do this, but at the expense that we will only be able to 'almost' bisect the masses.
More precisely, we say that a double wedge $D=(h_1,h_2)$ \emph{$\varepsilon$-bisects} a mass $\mu$ if $|\mu(D)-\mu(\overline{D})|<\varepsilon$.
Similarly we say that $D$ simultaneously $\varepsilon$-bisects $\mu_1,\ldots,\mu_k$ if it $\varepsilon$-bisects $\mu_i$ for every $i\in\{1,\ldots,k\}$.
In the following we will show Lemma \ref{Lem:dw_masses_odd} for even dimensions, with 'bisect' replaced by '$\varepsilon$-bisect'.
For this we first need an auxiliary lemma.

\begin{lemma}
\label{Lem:dw_masses_odd_point}
Let $\delta>0$ and let $\mu_1^1,\ldots,\mu_{d-1}^1,\mu_1^2,\ldots,\mu_{d-1}^2,\ldots,\mu_{d-1}^{d-2}$ be $d-2$ families each containing $d-1$ mass distributions in $\mathbb{R}^d$, where $d$ is odd.
Then there exists an oriented hyperplane $h_1$ containing the $x_d$-axis, a point $p$ at distance $\delta$ to the $x_d$-axis and $d-2$ double wedges $D^i=(h_1,h_2^i)$, $i\in\{1,\ldots,d-1\}$, whose apexes all contain the origin and such that $D^i$ simultaneously bisects $\mu_1^i,\ldots,\mu_{d-1}^i$ and $h_2^i$ contains $p$.
\end{lemma}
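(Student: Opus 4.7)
My plan is to mirror the degree-plus-Borsuk-Ulam strategy from the proof of Lemma \ref{Lem:dw_masses_odd}, but one dimension lower and using the extra freedom in $p$ to replace the sign-flip trick that fails in this parity. Parameterize: an oriented hyperplane $h_1$ containing the $x_d$-axis is given by a unit normal in $e_d^\perp$, so $h_1\in S^{d-2}$; write $p=\delta u+te_d$ with $u\in S^{d-2}$, $t\in\mathbb{R}$; and for any such $p$, an oriented hyperplane $h_2^i$ through $0$ and $p$ is given by a unit normal in $p^\perp$, so $h_2^i\in S^{d-2}$. For each family $i$ and each fixed $(h_1,u,t)$, define the antipodal map $g^i_{h_1,u,t}:S^{d-2}\to\mathbb{R}^{d-1}$ by $g^i(h_2^i)_j=\mu^i_j(D^i)-\mu^i_j(\overline{D^i})$, where $D^i=(h_1,h_2^i)$. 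If this has no zero, its normalization is an antipodal self-map of $S^{d-2}$ of odd, hence nonzero, degree.

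Since $d$ is odd, $d-2$ is also odd, so $\deg(-f)=\deg(f)$ for self-maps of $S^{d-2}$; simply flipping $h_1$ no longer reverses the degree and the direct argument of Lemma \ref{Lem:dw_masses_odd} breaks. I would recover a sign flip from the auxiliary parameter $t$: as $t\to\pm\infty$ the direction $p/\|p\|$ tends to $\pm e_d$, so in both limits the constraint on $h_2^i$ degenerates to containing the $x_d$-axis, but the orientation of the limiting line is reversed between $+\infty$ and $-\infty$. A careful bookkeeping on the limiting self-maps of $S^{d-2}$ should show that the two have opposite degrees, forcing some finite $t$ at which $g^i_{h_1,u,t}$ has a zero. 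Passing to the signed distance from $(h_1,u)$ to the locus where such a $t$ exists, one obtains, exactly as in Lemma \ref{Lem:dw_masses_odd}, a continuous antipodal detector $\tau_i(h_1,u)$ on $S^{d-2}\times S^{d-2}$ whose vanishing witnesses simultaneous bisection of family $i$ by some admissible $(t,h_2^i)$.

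With the $d-2$ detectors $\tau_1,\ldots,\tau_{d-2}$ in hand, they assemble into an equivariant map into $\mathbb{R}^{d-2}$; invoking the Borsuk-Ulam theorem for flag manifolds (Theorem \ref{Thm:bu_flags}) on a flag manifold whose flags record the oriented line $\langle e_d\rangle$, the oriented line $\mathbb{R} u$, and the hyperplane $h_1$ yields a common zero, providing simultaneous $h_1^*$ and $u^*$; a final continuity/compactness argument then selects a single $t^*$ (and hence a single $p$) that works across all $d-2$ families. The main obstacle I expect is the degree calculation at the limits $t\to\pm\infty$: one has to show that the two limit self-maps of the odd-dimensional sphere $S^{d-2}$ genuinely have opposite degrees and that this sign change propagates consistently to the detectors $\tau_i$. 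This is the step where the assumption that $d$ is odd (and hence that $p$ contributes an odd-dimensional extra parameter whose ``flip at infinity'' supplies the missing sign) should be used in an essential way.
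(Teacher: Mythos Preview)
Your approach has a structural gap that is more serious than the degree calculation you flag. Your detectors $\tau_i(h_1,u)$ record, for each family $i$, whether \emph{some} $t_i$ exists for which family $i$ is bisected by an admissible $(h_1,h_2^i)$. After Borsuk--Ulam produces a common $(h_1^*,u^*)$, you would only know that each family has its own $t_i$; nothing forces $t_1=\cdots=t_{d-2}$. The ``final continuity/compactness argument'' you allude to cannot conjure a common $t^*$ from this data: for generic families, the zero sets in the $t$-direction will be disjoint. Since the statement demands a single point $p$, this is a genuine hole, not a technicality.

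The paper avoids this problem entirely by \emph{not} treating $p$ as a free parameter. Instead it makes $p$ a continuous function of $h_1$ alone (a point on the positive side of $h_1$, on the upper hemisphere of $S^{d-1}$, at distance $\delta$ from the $x_d$-axis). With $p=p(h_1)$ fixed, $h_2$ ranges over \emph{all} oriented hyperplanes through the origin, i.e.\ $h_2\in S^{d-1}$, and the requirement ``$p\in h_2$'' is encoded as one additional real-valued coordinate $d_{h_1}(h_2)=\pm\,\mathrm{dist}(p,h_2)$, appended to the $d-1$ mass functions. This gives an antipodal map $g_{h_1}:S^{d-1}\to\mathbb{R}^d$, and crucially $d-1$ is \emph{even} (since $d$ is odd), so $\deg(-g_{h_1})=-\deg(g_{h_1})$ and the sign-flip in $h_1$ goes through exactly as in Lemma~\ref{Lem:dw_masses_odd}. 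One then builds the detector $t_i$ on $S^{d-2}$ and finishes with ordinary Borsuk--Ulam; no flag-manifold version, no limit $t\to\pm\infty$, and no common-$t$ problem. Your reduction of the $h_2$-sphere to $S^{d-2}$ by hard-wiring $p\in h_2$ is precisely what lands you on the wrong parity and forces the unworkable detour through $t$.
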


The proof is very similar to the proof of Lemma \ref{Lem:dw_masses_odd}.

\begin{proof}
The space of pairs $(h_1,h_2)$ of oriented hyperplanes in $\mathbb{R}^d$ with $h_1$ containing the $x_d$-axis and $h_2$ containing the origin is $S^{d-2}\times S^{d-1}$.
Let us again fix some $h_1$ and consider the family $\mu_1^1,\ldots,\mu_{d-1}^1$.
As above, we can define $d-1$ functions which give rise to a function $g_{h_1}:S^{d-1}\rightarrow\mathbb{R}^{d-1}$, which has a zero if and only if $\mu_1^1,\ldots,\mu_{d-1}^1$ can be simultaneously bisected by a double wedge using $h_1$.
Further, for each $h_1$ we can define in a continuous fashion a point $p$ which lies in the positive side of $h_1$ and on the upper hemisphere of $S^{d-1}\subseteq \mathbb{R}^d$, and which has distance $\delta$ to the $x_d$-axis.
Define now $d_{h_1}:=I_{p\in D}\cdot d(p,h_2)$, where $d(p,h_2)$ denotes the distance from $p$ to $h_2$ and $I_{p\in D}=1$ if $p$ lies in the double wedge $D=(h_1,h_2)$ and $I_{p\in D}=-1$ otherwise.
Note that $d_{h_1}$ is continuous, $d_{h_1}=0$ if and only if $p$ is on $h_2$, and $d_{h_1}(-h_2)=-d_{h_1}(h_2)$.
Thus, together with the $d-1$ functions defined by the masses, we get a map $g_{h_1}:S^{d-1}\rightarrow\mathbb{R}^{d}$, which has a zero if and only if $\mu_1^1,\ldots,\mu_{d-1}^1$ can be simultaneously bisected by a double wedge using $h_1$ and an $h_2$ passing through $p$.
Again, assuming this map has no zero, we get a map $g_{h_1}:S^{d-1}\rightarrow S^{d-1}$, which, because of the antipodality condition, has odd degree.
Again, we have $g_{-h_1}(h_2)=-g_{h_1}(h_2)$, and thus along every path from $-h_1$ to $h_1$ we encounter a hyperplane $h_1^*$ such that $g_{h_1^*}$ has a zero.
As above, this partitions the sphere into antipodal regions, the only difference being that this time we only consider the sphere $S^{d-2}$.
In particular, the Borsuk-Ulam theorem now gives us a hyperplane $h_1$ containing the $x_d$-axis which, by construction, has the property that for each family of masses $\mu_1^i,\ldots,\mu_{d-1}^i$ there exists another hyperplane $h_2^i$ through the origin such that $(h_1,h_2^i)$ simultaneously bisects $\mu_1^i,\ldots,\mu_{d-1}^i$.
Further, the hyperplane $h_1$ also defines a point $p$ at distance $\delta$ to the $x_d$-axis with the property that each $h_2^i$ contains $p$.
\end{proof}

After gnomonic projection, we thus get a double wedges $(h_1,h_2^i)$ that simultaneously bisect the masses $\mu_1^i,\ldots,\mu_d^i$ and such that $h_1$ contains the origin and the distance from $h_2^i$ is at most $\delta$.
If we now translate $h_2^i$ to contain the origin, by continuity we get that there is some $\varepsilon>0$ such that $(h_1,h_2^i)$ simultaneously $\varepsilon$-bisects the masses $\mu_1^i,\ldots,\mu_d^i$.
In particular, for every $\varepsilon>0$ we can choose $\delta>0$ in the above lemma such that after gnomonic projection we get the following:

\begin{corollary}
\label{Cor:eps_bisections}
Let $\varepsilon>0$.
\begin{enumerate}
\item  Let $\mu_1^1,\ldots,\mu_{d}^1,\mu_1^2,\ldots,\mu_{d}^2,\ldots,\mu_{d}^{d-1}$ be $d-1$ families each containing $d$ mass distributions in $\mathbb{R}^d$, where $d$ is even.
Then there exists an oriented hyperplane $h_1$ containing the origin and $d-1$ double wedges $D^i=(h_1,h_2^i)$, $i\in\{1,\ldots,d-1\}$, whose apexes all contain the origin and such that $D^i$ simultaneously $\varepsilon$-bisects $\mu_1^i,\ldots,\mu_{d}^i$.
\item Let $\mu_1^1,\ldots,\mu_{d+1}^1,\mu_1^2,\ldots,\mu_{d+1}^2,\ldots,\mu_{d+1}^d$ be $d$ families each containing $d+1$ mass distributions in $\mathbb{R}^d$, where $d$ is odd.
Then there exists an oriented hyperplane $h_1$ and $d$ double wedges $D^i=(h_1,h_2^i)$, $i\in\{1,\ldots,d\}$, such that $D^i$ simultaneously $\varepsilon$-bisects $\mu_1^i,\ldots,\mu_{d+1}^i$.
\end{enumerate}
\end{corollary}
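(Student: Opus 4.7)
The plan is to deduce both parts of the corollary from Lemma \ref{Lem:dw_masses_odd_point}. The core step is to use that lemma in an odd-dimensional ambient space to obtain double wedges that exactly bisect lifted masses, and then to push everything down by gnomonic projection and absorb the resulting $\delta$-offset into the allowed $\varepsilon$-error via a continuity argument.

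For part 1, where $d$ is even, I would lift the $d-1$ families of $d$ masses from $\mathbb{R}^d$ to the open upper hemisphere of $S^d\subseteq\mathbb{R}^{d+1}$ by inverse gnomonic projection, as in the proof of Lemma \ref{lem:gnomonic}, so that the new ambient dimension $d+1$ is odd. Lemma \ref{Lem:dw_masses_odd_point} then applies with matching parameters $(d+1)-2=d-1$ families of $(d+1)-1=d$ masses each, and for any chosen $\delta>0$ it delivers an oriented hyperplane $h_1\subseteq\mathbb{R}^{d+1}$ containing the $x_{d+1}$-axis, a point $p$ at distance $\delta$ from that axis, and hyperplanes $h_2^i$ through the origin, each containing $p$, such that every double wedge $(h_1,h_2^i)$ exactly bisects the $i$-th lifted family.

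Projecting gnomonically back to $\mathbb{R}^d$ turns $h_1$ into a hyperplane $\tilde h_1$ through the origin, since $h_1$ contained the projection axis. Each $h_2^i$ becomes a hyperplane $\tilde h_2^i$ through the image $\tilde p$ of $p$, whose distance to the origin depends continuously on $\delta$ and vanishes as $\delta\to 0$. Translating each $\tilde h_2^i$ parallel to itself until it passes through the origin changes the resulting double wedge only by a thin region whose $\mu_j^i$-mass is continuous in $\delta$ and tends to zero. Given $\varepsilon>0$, I would therefore pick $\delta$ small enough so that the translated double wedges simultaneously $\varepsilon$-bisect every mass in every family, yielding part 1.

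For part 2, where $d$ is odd, I would bootstrap off part 1 by applying it in ambient dimension $d+1$ (which is even). Lifting the $d$ families of $d+1$ masses from $\mathbb{R}^d$ to the upper hemisphere of $S^d\subseteq\mathbb{R}^{d+1}$ by inverse gnomonic projection, part 1 yields a common $h_1$ through the origin in $\mathbb{R}^{d+1}$ and $d$ double wedges $(h_1,h_2^i)$ whose apexes contain the origin and which $\varepsilon$-bisect each lifted family. Projecting back to $\mathbb{R}^d$ by gnomonic projection then produces $d$ double wedges in $\mathbb{R}^d$ with a common first hyperplane that still $\varepsilon$-bisect the original families, since inverse gnomonic projection is measure-preserving on the upper hemisphere and so transfers the $\varepsilon$-bisection inequality verbatim. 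The main delicacy, and the step requiring the most care, is ensuring the uniformity of the continuity estimate in part 1 across all the $h_2^i$ produced by Lemma \ref{Lem:dw_masses_odd_point}; this follows from the compactness of the unit sphere of hyperplane orientations together with the atomlessness of the mass distributions, both of which force the $\mu_j^i$-mass of the translation slab to tend to $0$ uniformly in the choice of $h_2^i$.
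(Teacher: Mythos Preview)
Your proposal is correct and follows essentially the same route as the paper: apply Lemma~\ref{Lem:dw_masses_odd_point} in the odd-dimensional lift, gnomonically project so that $h_1$ passes through the origin while each $h_2^i$ lands near the origin (at distance controlled by $\delta$), translate the $h_2^i$ to the origin and absorb the error by continuity, and then obtain part~2 from part~1 by one further gnomonic lift. You have in fact spelled out more detail than the paper does, in particular the uniformity of the $\delta\mapsto\varepsilon$ estimate over all hyperplane directions via compactness; the one phrasing to tighten is that gnomonic projection is not literally ``measure-preserving'' but, since the lifted measures are pushforwards, masses of corresponding regions agree exactly, which is what your argument actually uses.
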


By a standard argument (see e.g. \cite{Matousek}), a bisection partition result for mass distributions also implies the analogous result for point sets in general position.
Further, for point sets in general position, we can choose $\varepsilon$ small enough to get an actual bisection.
Thus, Theorem \ref{Thm:projective_hs_points} now follows from Lemma \ref{Lem:dw_projective}, Corollary \ref{Cor:dw_masses_even} and the second part of Corollary \ref{Cor:eps_bisections}.

\projectivehs*

Let us mention that this result is tight with respect to the number of families that are bisected:
Consider $d+1$ families each containing $d+1$ point sets in $\mathbb{R}^d$ where each point set consists of many points that are very close together.
Place these point sets in such a way that no hyperplane passes through $d+1$ of them.
Look at one family of $d+1$ point sets.
if $(h_1,h_2)$ are to simultaneously bisect the point sets in this family, each point set must be transversed by either $h_1$ or $h_2$, or both.
In particular, as $h_2$ can pass through at most $d$ point sets, $h_1$ must pass through at least one point set.
This is true for all $d+1$ families, which means that $h_1$ must pass through at least $d+1$ point sets in total, which cannot happen by our construction of the point sets.

For larger families, a similar argument shows that at most $\lfloor\frac{d}{k}\rfloor$ families each containing $d+k$ point sets can have a Ham-Sandwich cut after a common projective transformation, showing that Conjecture \ref{Con:projective_hs_masses}, if true, would be tight.
%


\section{Fans}
\label{sec:fans}

Similar to the previous sections, we will again first prove all results with the apex containing the origin.
The general results will then again follow from a lifting argument.
Our proofs are very similar to those in \cite{Barany2}.

The main idea is the following: consider the Stiefel manifold $V_{2}(\mathbb{R}^d)$ of all pairs $(x,y)$ of orthonormal vectors in $\mathbb{R}^d$.
Each such pair defines an oriented plane $h$ which contains $x$ and $y$.
For each pair $(x,y)$ we can define a $k$-fan through the origin whose apex is the orthogonal complement of $h$ and for which the first semi-hyperplane is spanned by the apex and the vector $x$.
The other semi-hyperplanes are then constructed by rotating the semi-hyperplanes around the apex in the direction of $y$ such that the resulting $k$-fan equipartitions the first given mass $\mu_1$.
This gives a unique $k$-fan with a rotational order on the wedges $W_1,\ldots,W_k$.
Further, there is a natural $\mathbb{Z}_k$-action sending $W_i$ to $W_{i+1}$ (and $W_k$ to $W_1$), giving $V_{2}(\mathbb{R}^d)$ the structure of a $\mathbb{Z}_k$-space.

For all other masses $\mu_2,\ldots,\mu_m$, we consider the map $f_i:=(1/k-\mu_i(W_1),\ldots,1/k-\mu_i(W_k)$.
This map is zero if and only if the $k$-fan equipartitions the mass $\mu_i$.
Let $Z$ be the target space of $f=(f_2,\ldots,f_m)$.
The $\mathbb{Z}_k$-action on the fans induces a $\mathbb{Z}_k$-action on $Z$.
Simultaneous equipartition of all the masses then follows from the non-existence of a $\mathbb{Z}_k$-equivariant map from $V_{2}(\mathbb{R}^d)$ to $Z\setminus\{0\}$.

In the following, we will investigate some conditions under which such a map cannot exist.
For this, we first introduce some terminology.

Recall the following indexes of $G$-spaces, due to Fadell and Husseini \cite{Fadell} as well as Volovikov \cite{Volovikov}:
Let $EG\rightarrow BG$ be the universal $G$-bundle.
The key property of the $G$-bundle is that if $X$ is a $CW$-complex on which $G$ acts freely (e.g. $V_{2,d}$) then for its associated $G$-bundle $X\rightarrow X/G$ there is a (homotopically) unique map $\alpha_X: X/G\rightarrow BG$ such that the bundle $X\rightarrow X/G$ is isomorphic to the bundle obtained by pulling back the bundle $EG\rightarrow BG$ along $\alpha_X$.

More generally, consider the Borel fibration $X\rightarrow X_G\rightarrow BG$, where $X_G=(EG\times X)/G$.
Define the \emph{cohomological index} $\text{Ind}_G(X)$ as the kernel of the homomorphism $p_X^*: H^*(BG;R)\rightarrow H^*(X_G;R)$.
The cohomological Leray-Serre spectral sequence of the Borel fibration converges to $H^*(X_G;R)$ (and in case $X$ is free to $H^*(X/G;R)$) and we have $E_2^{p,q}\cong H^p(BG;\mathcal{H}^q(X;R))$, where $\mathcal{H}^q(X;R)$ are local coefficients.
Further, $p_x^*$ is the composition $H^*(BG;R)\rightarrow E_2^{*,0}\rightarrow E_3^{*,0}\rightarrow\ldots\rightarrow E_{\infty}^{*,0}\subseteq H^*(X_G;R)$.
We now denote by $_r\text{Ind}_G(X)$ the kernel of $H^*(BG;R)\rightarrow E_{r+1}^{*,0}$.
Note that we have $_1\text{Ind}_G(X)\subseteq _2\text{Ind}_G(X)\subseteq\ldots$ and that the union of these ideals is $\text{Ind}_G(X)$.

Let now $0\neq\alpha\in\text{Ind}_G(X)$ and let $i_X(\alpha)$ be the smallest integer $r$ such that $\alpha\in _r\text{Ind}_G(X)$.
We define the \emph{numerical index} $i_G(X)$ as the minimum of $i_X(\alpha)$ taken over all non-zero $\alpha$ in $\text{Ind}_G(X)$.
If $\text{Ind}_G(X)=0$, we set $i_G(X)=\infty$.
In other words, $i_G(X)$ is the index of the first differential in the spectral sequence that kills an element of $H^*(X_G;R)$.

The indexes defined above can be used to show the non-existence of certain $G$-equivariant maps:

\begin{lemma}[\cite{Fadell, Volovikov}]
\label{lem:index}
Let $X$ and $Y$ be $G$-spaces and let $f:X\rightarrow Y$ be a $G$-equivariant map.
Then
\begin{enumerate}
\item $\text{Ind}_G(X)\supseteq\text{Ind}_G(Y)$ and
\item $i_G(X)\leq i_G(Y)$.
\end{enumerate}
\end{lemma}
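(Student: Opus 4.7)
The plan is to exploit functoriality of the Borel construction. A $G$-equivariant map $f:X\to Y$ induces a map $f_G=(\mathrm{id}_{EG}\times f)/G:X_G\to Y_G$ between Borel constructions, and because $f$ is $G$-equivariant this map commutes with the structure projections to $BG$, i.e.\ $p_X=p_Y\circ f_G$. This single observation will supply both statements.

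For part (1), I would work purely at the level of cohomology rings. Applying $(\cdot)^*$ to the relation $p_X=p_Y\circ f_G$ yields $p_X^*=f_G^*\circ p_Y^*$. Hence if $\alpha\in\text{Ind}_G(Y)=\ker p_Y^*$, then $p_X^*(\alpha)=f_G^*(p_Y^*(\alpha))=f_G^*(0)=0$, so $\alpha\in\text{Ind}_G(X)$. This proves $\text{Ind}_G(X)\supseteq\text{Ind}_G(Y)$ with no further input.

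For part (2), I would upgrade the picture to a morphism of fibrations: $f_G$ fits in a map of Borel fibrations $X\to X_G\to BG$ and $Y\to Y_G\to BG$, with fiber map $f$ and identity on $BG$. Naturality of the cohomological Leray--Serre spectral sequence then provides, for every $r\geq 2$, a morphism $E_r^{*,*}(Y)\to E_r^{*,*}(X)$ commuting with the differentials. On $E_2^{p,q}=H^p(BG;\mathcal{H}^q(\cdot;R))$ this morphism is induced on the base by the identity of $BG$ and on the coefficients by $f^*$ on $\mathcal{H}^q$; in particular on the bottom row $E_2^{p,0}\cong H^p(BG;R)$ it is the identity of $H^*(BG;R)$. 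Consequently the canonical edge maps $H^*(BG;R)\to E_{r+1}^{*,0}(X)$ factor through $H^*(BG;R)\to E_{r+1}^{*,0}(Y)\to E_{r+1}^{*,0}(X)$, so any class killed by the first $r$ differentials in the $Y$-sequence is also killed in the $X$-sequence. This gives ${}_r\text{Ind}_G(Y)\subseteq {}_r\text{Ind}_G(X)$ for every $r$, hence $i_X(\alpha)\leq i_Y(\alpha)$ for every $0\neq\alpha\in\text{Ind}_G(Y)$. Combining this with part (1), which guarantees that $\text{Ind}_G(X)$ contains all $\alpha$'s we minimise over, one concludes $i_G(X)\leq i_G(Y)$.

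The only subtle point — and what I expect to be the main obstacle to verify carefully — is the claim that the induced morphism on the $E_2^{*,0}$ row is the identity on $H^*(BG;R)$. This requires checking that the action of $\pi_1(BG)$ on $\mathcal{H}^0$ is trivial and that $f^*$ restricts to the identity on $\mathcal{H}^0$, which is immediate when $X$ and $Y$ are path-connected (as they are in all intended applications). Once this is in place, everything else is a direct chase through a morphism of spectral sequences and no further input is required.
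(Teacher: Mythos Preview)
The paper does not give its own proof of this lemma; it is stated with a citation to Fadell--Husseini and Volovikov and then used as a black box. So there is no proof in the paper to compare against.

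Your argument is the standard one and is correct. The functoriality of the Borel construction gives $p_X^*=f_G^*\circ p_Y^*$, which immediately yields part (1), and the naturality of the Leray--Serre spectral sequence for the map of fibrations over $\mathrm{id}_{BG}$ yields ${}_r\mathrm{Ind}_G(Y)\subseteq{}_r\mathrm{Ind}_G(X)$ for all $r$, from which part (2) follows by the minimisation argument you sketch. The caveat you flag about $E_2^{*,0}$ is real but harmless: even without assuming $X$ and $Y$ are path-connected, the edge map $H^*(BG;R)\to E_2^{*,0}$ is natural in the fibration, and the induced map on $BG$ is the identity, so the composite $H^*(BG;R)\to E_{r+1}^{*,0}(Y)\to E_{r+1}^{*,0}(X)$ still equals the edge map for $X$. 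Thus your argument goes through without the path-connectedness hypothesis, though in every application in the paper the spaces are indeed path-connected.
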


In order to prove that for some choices of $k$ there is no $\mathbb{Z}_k$-equivariant map from $V_{2}(\mathbb{R}^d)$ to $Z\setminus\{0\}$, it is thus sufficient to compute the $\mathbb{Z}_k$-indexes of $V_{2}(\mathbb{R}^d)$ and $Z\setminus\{0\}$.
This is what we will do in the following.

We first introduce two lemmas which are helpful for computing the indexes.

\begin{lemma}[\cite{Volovikov}, Proposition 2.5]
\label{lem:numerical_index}
Let $X$ and $Y$ be $G$-spaces.
\begin{enumerate}
\item If $\tilde{H}^i(X;R)=0$ for $i<n$ then $i_G(X)\geq n+1$.
\item If $H^i(Y;R)=0$ for $i>n-1$ and $i_G(Y)<\infty$ (or, equivalently, $\text{Ind}_G(Y)\neq 0$) then $i_G(Y)\leq n$.
\end{enumerate}
\end{lemma}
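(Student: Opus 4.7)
The plan is to read both bullets directly off the cohomological Leray--Serre spectral sequence of the Borel fibration $X\to X_G\to BG$ (respectively $Y\to Y_G\to BG$), whose $E_2$ page is $E_2^{p,q}=H^p(BG;\mathcal{H}^q(X;R))$ and whose bottom row $E_2^{*,0}=H^*(BG;R)$ carries the edge homomorphism equal to $p_X^*$. By construction, ${}_r\text{Ind}_G(X)$ is precisely the kernel of the composition
\[
H^*(BG;R)=E_2^{*,0}\twoheadrightarrow E_3^{*,0}\twoheadrightarrow\cdots\twoheadrightarrow E_{r+1}^{*,0},
\]
so in each part the goal reduces to tracking which differentials can hit, or leave from, the bottom row. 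The whole argument is bookkeeping inside this spectral sequence, exploiting the sparsity of $E_2$ implied by the cohomology vanishing hypotheses.

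For the first part I would note that $\tilde H^i(X;R)=0$ for $i<n$ forces $E_2^{p,q}=0$ for $1\le q\le n-1$ (the local system $\mathcal{H}^q(X;R)$ has stalks $H^q(X;R)=0$). Differentials leaving the bottom row land in $E_r^{p+r,1-r}$ with $1-r<0$ for $r\ge 2$, hence vanish. Differentials arriving at $E_r^{p,0}$ come from $E_r^{p-r,r-1}$, a subquotient of $E_2^{p-r,r-1}=0$ whenever $1\le r-1\le n-1$, i.e.\ for $2\le r\le n$. Consequently the bottom row is unchanged through page $n+1$: $E_{n+1}^{*,0}=E_2^{*,0}=H^*(BG;R)$, so the edge map $H^*(BG;R)\to E_{n+1}^{*,0}$ is an isomorphism and ${}_r\text{Ind}_G(X)=0$ for every $r\le n$. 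By definition of the numerical index this yields $i_G(X)\ge n+1$.

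For the second part, the hypothesis $H^i(Y;R)=0$ for $i>n-1$ gives $E_2^{p,q}=0$ for $q\ge n$. A differential $d_r\colon E_r^{p-r,r-1}\to E_r^{p,0}$ therefore has source zero whenever $r-1\ge n$, i.e.\ for $r\ge n+1$, so the bottom row stabilizes at page $n+1$: $E_{n+1}^{*,0}=E_\infty^{*,0}$. Pick any non-zero $\alpha\in\text{Ind}_G(Y)$, which exists since $\text{Ind}_G(Y)\ne 0$ by assumption. Then $\alpha$ maps to $0$ in $E_\infty^{*,0}=E_{n+1}^{*,0}$, i.e.\ $\alpha\in\ker\bigl(H^*(BG;R)\to E_{n+1}^{*,0}\bigr)={}_n\text{Ind}_G(Y)$, so $i_Y(\alpha)\le n$. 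Minimizing over such $\alpha$ yields $i_G(Y)\le n$.

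The main obstacle is almost purely notational: one must be comfortable with the convention that ${}_r\text{Ind}$ is defined via the map into $E_{r+1}^{*,0}$ (not $E_r^{*,0}$) and with the fact that $E_2^{p,q}$ uses local coefficients $\mathcal{H}^q$. The latter requires a brief sanity check, since a priori local coefficient cohomology can be non-zero even when the ordinary cohomology vanishes; however, vanishing of the stalks $H^q(X;R)$ kills the local system entirely, so $E_2^{p,q}=0$ follows without worrying about twisting. Beyond this, both inequalities are dual observations about the range of non-trivial differentials touching the bottom row, and no further topological input is needed.
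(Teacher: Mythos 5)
Your argument is correct, and it is the standard one: the paper itself does not prove this lemma (it is quoted from Volovikov's Proposition~2.5), but your bookkeeping in the Leray--Serre spectral sequence of the Borel fibration — vanishing of $E_2^{p,q}$ in the relevant range of $q$, only incoming differentials $d_r\colon E_r^{p-r,r-1}\to E_r^{p,0}$ mattering for the bottom row, and the identification of ${}_r\text{Ind}_G$ with the kernel of the edge map to $E_{r+1}^{*,0}$ — is exactly how the cited result is established and is consistent with the definitions set up in this section. No gaps.
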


In particular, if $Y$ has dimension at most $n-1$ and $G$ acts freely on $Y$, then $\text{Ind}_G(Y)$ is the kernel of the map $H^*(BG;R)\rightarrow H^*(Y/G)$ (\cite{Fadell}, Remark 3.15) and thus $H^i(BG;R)\subseteq\text{Ind}_G(Y)$ for all $i>n-1$.
Hence, assuming that for some $k>n-1$ we have $H^k(BG;R)\neq 0$, it follows that $i_G(Y)\leq n$.

Further, we will need two results about indexes of products of groups and of union of spaces.

\begin{lemma}[\cite{Fadell}, Corollary 3.4]
\label{lem:index_product}
Let $X$ be a $G_1\times G_2$-space, where $G_1\times G_2$ acts on $X$ by $(g_1,g_2)x=g_1x$.
Then $\text{Ind}_{G_1\times G_2}(X)=(\text{Ind}_{G_1}(X))\otimes H^*(BG_2)$.
\end{lemma}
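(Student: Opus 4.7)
The plan is to reduce the computation of the $(G_1\times G_2)$-index to a Künneth-style decomposition of the Borel construction that arises from the fact that $G_2$ acts trivially on $X$. Throughout I will work with field coefficients (as is standard in this framework) so that the Künneth formula applies without a Tor correction.

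First I would identify the Borel construction $X_{G_1\times G_2}$ explicitly. Taking $E(G_1\times G_2)=EG_1\times EG_2$ with the diagonal action, the equivalence classes of $(e_1,e_2,x)$ are generated by $(g_1 e_1,g_2 e_2,g_1 x)$, since $G_2$ fixes $X$. Quotienting first by $G_2$ (which only moves the $EG_2$-coordinate) yields $EG_1\times BG_2\times X$, and then quotienting by $G_1$ gives
\[
X_{G_1\times G_2}\;\cong\;X_{G_1}\times BG_2.
\]
Moreover $B(G_1\times G_2)=BG_1\times BG_2$, and under these identifications the canonical projection $p_X\colon X_{G_1\times G_2}\to B(G_1\times G_2)$ becomes $p_{X,G_1}\times\mathrm{id}_{BG_2}$, where $p_{X,G_1}\colon X_{G_1}\to BG_1$ is the classifying map for the $G_1$-bundle over $X/G_1$.

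Next I would pass to cohomology. By the Künneth theorem,
\[
H^*(B(G_1\times G_2);R)\;\cong\;H^*(BG_1;R)\otimes H^*(BG_2;R),
\]
\[
H^*(X_{G_1\times G_2};R)\;\cong\;H^*(X_{G_1};R)\otimes H^*(BG_2;R),
\]
and $p_X^*$ decomposes as $p_{X,G_1}^*\otimes\mathrm{id}_{H^*(BG_2;R)}$. The kernel of a tensor product of maps, one of which is the identity on a free module, is exactly the tensor product of the individual kernels, giving
\[
\mathrm{Ind}_{G_1\times G_2}(X)\;=\;\ker p_X^*\;=\;\ker(p_{X,G_1}^*)\otimes H^*(BG_2;R)\;=\;\mathrm{Ind}_{G_1}(X)\otimes H^*(BG_2;R),
\]
which is the claimed identity.

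The main obstacle is the Künneth step: one needs $H^*(BG_2;R)$ to be flat (equivalently free) as an $R$-module so that $\ker(f\otimes\mathrm{id})=\ker(f)\otimes H^*(BG_2;R)$. Over a field this is automatic; for general coefficient rings one would need an additional hypothesis (or to absorb a Tor term), which is why this identification is typically stated with field coefficients in mind. The rest is bookkeeping: checking that the decomposition $X_{G_1\times G_2}\cong X_{G_1}\times BG_2$ is natural enough that the map $p_X^*$ really splits as a tensor product, which follows from functoriality of the Borel construction with respect to the product projections $G_1\times G_2\to G_i$.
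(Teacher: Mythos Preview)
The paper does not supply a proof of this lemma; it is quoted directly from Fadell--Husseini \cite{Fadell}, Corollary~3.4, and used as a black box. So there is no ``paper's own proof'' to compare against.

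Your argument is the standard one and is correct under the hypothesis you state: with field coefficients (or more generally when $H^*(BG_2;R)$ is flat over $R$) the identification $X_{G_1\times G_2}\cong X_{G_1}\times BG_2$ together with K\"unneth gives $p_X^*=p_{X,G_1}^*\otimes\mathrm{id}$, and flatness then yields $\ker(p_X^*)=\ker(p_{X,G_1}^*)\otimes H^*(BG_2;R)$. You are right to flag the coefficient issue; in fact the application in this paper (proof of Theorem~\ref{Thm:fans_main}) ostensibly works over $\mathbb{Z}$, where $H^*(B\mathbb{Z}_k^P;\mathbb{Z})$ has torsion and is not flat. There the K\"unneth Tor terms happen to vanish because the primes involved are pairwise distinct, so the tensor decomposition still goes through, but your caution is well placed: the clean statement $\mathrm{Ind}_{G_1\times G_2}(X)=\mathrm{Ind}_{G_1}(X)\otimes H^*(BG_2)$ does require some hypothesis on the coefficients or on $G_2$.
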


\begin{lemma}[\cite{Volovikov}, Proposition 2.2]
\label{lem:index_union}
Let $X$ be the union of two closed or two open invariant subspaces $A$ and $B$.
Then $\text{Ind}_{G}(A)\cdot\text{Ind}_{G}(B)\subseteq\text{Ind}_{G}(X)$.
\end{lemma}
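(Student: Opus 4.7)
The plan is to deduce the statement directly from the relative cup product on the Borel construction, together with the long exact sequence of a pair. Write $X_G = (EG \times X)/G$ and let $A_G, B_G \subseteq X_G$ be the invariant Borel constructions of $A$ and $B$. Under the hypothesis that $A,B$ are both open or both closed, the pair $\{A_G, B_G\}$ is excisive, so $A_G \cup B_G = X_G$ and the usual relative cohomology machinery applies. Since the classifying map $X_G \to BG$ factors through both inclusions $A_G \hookrightarrow X_G$ and $B_G \hookrightarrow X_G$, we have $p_A^* = r_A^* \circ p_X^*$ and $p_B^* = r_B^* \circ p_X^*$, where $r_A^*, r_B^*$ are the restriction homomorphisms on cohomology.

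The main step is a diagram chase: take $\alpha \in \text{Ind}_G(A)$ and $\beta \in \text{Ind}_G(B)$. By definition $p_A^*(\alpha) = 0$, so by exactness of the pair $(X_G, A_G)$ there exists a lift $\tilde\alpha \in H^*(X_G, A_G;R)$ mapping to $p_X^*(\alpha)$ under the natural map $H^*(X_G, A_G) \to H^*(X_G)$. Symmetrically, $p_X^*(\beta)$ lifts to $\tilde\beta \in H^*(X_G, B_G; R)$. The relative cup product yields
\[
\tilde\alpha \smile \tilde\beta \;\in\; H^*(X_G,\, A_G \cup B_G;R) \;=\; H^*(X_G, X_G;R) \;=\; 0.
\]
Mapping back to absolute cohomology and using multiplicativity of $p_X^*$, we obtain
\[
p_X^*(\alpha \cdot \beta) \;=\; p_X^*(\alpha) \smile p_X^*(\beta) \;=\; 0,
\]
so $\alpha\cdot\beta \in \text{Ind}_G(X)$, as required. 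Since $\text{Ind}_G(A)$ and $\text{Ind}_G(B)$ are ideals of $H^*(BG;R)$, passing to the full product ideal is immediate once this is proved for products of elements.

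The only genuine issue is ensuring the relative cup product lands in $H^*(X_G, A_G \cup B_G)$, which is precisely why the hypothesis restricts to two open or two closed invariant subspaces: this is exactly the excisiveness condition needed for the diagonal approximation to produce a well-defined pairing $H^*(X_G, A_G) \otimes H^*(X_G, B_G) \to H^*(X_G, A_G \cup B_G)$. Once this is in place the rest is a formal consequence of the long exact sequences and the factorization of $p_A^*, p_B^*$ through $p_X^*$, so I expect no further technical difficulty beyond verifying excisiveness in each of the two cases.
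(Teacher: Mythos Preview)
The paper does not prove this lemma; it is quoted verbatim from Volovikov (Proposition~2.2) and used as a black box. Your argument is precisely the standard proof one finds in the original sources (Fadell--Husseini and Volovikov): lift $p_X^*(\alpha)$ and $p_X^*(\beta)$ to relative classes via the long exact sequences of $(X_G,A_G)$ and $(X_G,B_G)$, then apply the relative cup product into $H^*(X_G,A_G\cup B_G)=H^*(X_G,X_G)=0$. Your identification of the excisiveness hypothesis as the reason for the ``both open or both closed'' assumption is exactly right; the only small caveat is that for two \emph{closed} invariant subspaces one typically needs an additional mild assumption (e.g.\ a CW or paracompactness hypothesis, or that the pair is excisive) for the relative cup product to land in $H^*(X_G,A_G\cup B_G)$, which is implicit in Volovikov's setting and in all applications in this paper.
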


The index of $\mathbb{Z}_p$-actions on $V_{2}(\mathbb{R}^d)$ for odd primes $p$ was already studied by Makeev \cite{Makeev}.
As there does not seem to be an English version of his proof available, we include a proof here.
The ideas behind the proof presented here were communicated to the author by Roman Karasev \cite{Roman}.
We present a slightly different proof here (using a different index) which, as we will see, allows a generalization to all finite cyclic groups of odd order.

\begin{lemma}[Makeev, communicated by Karasev]
\label{lem:makeev}
Let $p$ be an odd prime.
Then $i_{\mathbb{Z}_p}(V_{2}(\mathbb{R}^d))=2d-2$.
\end{lemma}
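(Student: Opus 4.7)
\smallskip

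The plan is to analyze the Leray--Serre spectral sequence of the Borel fibration $V_2(\mathbb{R}^d)\to V_2(\mathbb{R}^d)_{\mathbb{Z}_p}\to B\mathbb{Z}_p$ with $\mathbb{F}_p$ coefficients. Since the $\mathbb{Z}_p$-action on $V_2(\mathbb{R}^d)$ is free, the Borel construction is homotopy equivalent to the compact manifold $M:=V_2(\mathbb{R}^d)/\mathbb{Z}_p$ of dimension $2d-3$. I would first record $H^*(V_2(\mathbb{R}^d);\mathbb{F}_p)$ via the Gysin sequence of the sphere bundle $S^{d-2}\to V_2(\mathbb{R}^d)\to S^{d-1}$, whose Euler class equals $\chi(S^{d-1})$ times a generator. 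For $d$ odd this is $2$, a unit mod $p$, so $V_2(\mathbb{R}^d)$ has the $\mathbb{F}_p$-cohomology of $S^{2d-3}$; for $d$ even it vanishes, giving $H^*(V_2(\mathbb{R}^d);\mathbb{F}_p)\cong\Lambda(x_{d-2},y_{d-1})$. Because $\mathbb{Z}_p\subset SO(2)$ is connected to the identity in $SO(d)$, the local system on $B\mathbb{Z}_p$ is trivial.

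For $d$ odd the conclusion is immediate: the $E_2$-page has non-zero rows only at $t=0$ and $t=2d-3$, so the only differential that can hit the bottom row is $d_{2d-2}$. It cannot be zero, otherwise every power of $\beta\in H^*(B\mathbb{Z}_p;\mathbb{F}_p)$ would survive to $E_\infty$, contradicting finite-dimensionality of $H^*(M;\mathbb{F}_p)$. Hence $i_{\mathbb{Z}_p}(V_2(\mathbb{R}^d))=2d-2$.

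For $d$ even I would bring in the auxiliary $S^1$-bundle $M\to\tilde{G}_2(\mathbb{R}^d)$, obtained from the principal $SO(2)$-bundle $V_2(\mathbb{R}^d)\to\tilde{G}_2(\mathbb{R}^d)$ by taking fiberwise $\mathbb{Z}_p$-quotients. Identifying this with the unit circle bundle of $L^{\otimes p}$, where $L$ is the tautological complex line bundle on $\tilde{G}_2(\mathbb{R}^d)$, its Euler class is $p\cdot e_0$, which vanishes mod $p$. Therefore $H^*(M;\mathbb{F}_p)\cong H^*(\tilde{G}_2(\mathbb{R}^d);\mathbb{F}_p)\otimes\Lambda(u)$ with $|u|=1$. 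The commuting square
\[
\begin{tikzcd}
M \arrow[r]\arrow[d] & B\mathbb{Z}_p \arrow[d] \\
\tilde{G}_2(\mathbb{R}^d) \arrow[r] & BSO(2)
\end{tikzcd}
\]
then identifies the edge homomorphism $\phi\colon H^*(B\mathbb{Z}_p;\mathbb{F}_p)\to H^*(M;\mathbb{F}_p)$ via $\phi(\beta)=e_0$ and $\phi(\alpha)=u$ (the latter using that $V_2(\mathbb{R}^d)\to M$ is a connected $p$-fold cover, so the induced map on $\pi_1$ is surjective and dually injective on $H^1$). An induction using the Gysin sequence of $V_2(\mathbb{R}^d)\to\tilde{G}_2(\mathbb{R}^d)$, together with the fact that $H^*(V_2(\mathbb{R}^d);\mathbb{F}_p)$ is concentrated in the four degrees $0,d-2,d-1,2d-3$, shows that cup product with $e_0$ is an isomorphism $H^{2k-2}(\tilde{G}_2)\to H^{2k}(\tilde{G}_2)$ throughout the range $k\leq d-2$, so that $e_0^{d-2}\neq 0$ while $e_0^{d-1}=0$ by dimension. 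Consequently $\ker\phi$ starts in degree $2d-2$ with $\beta^{d-1}$, giving $i_{\mathbb{Z}_p}(V_2(\mathbb{R}^d))\geq 2d-2$. For the matching upper bound, note that if either transgression $d_{d-1}(x)$ or $d_d(y)$ were non-zero, the class $\alpha\beta^{(d-2)/2}$ or $\beta^{d/2}$ would lie in $\ker\phi$, contradicting $e_0^{(d-2)/2}\neq 0$ and $e_0^{d/2}\neq 0$; thus the only differential available to kill $\beta^{d-1}$ (and it must be killed) is $d_{2d-2}(xy)$, yielding $i_{\mathbb{Z}_p}(V_2(\mathbb{R}^d))=2d-2$.

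The main obstacle is the cohomological input $e_0^{d-2}\neq 0$ in $H^{2d-4}(\tilde{G}_2(\mathbb{R}^d);\mathbb{F}_p)$ for $p$ odd; the cohomology ring of the oriented Grassmannian is delicate across parities of $d$ and choices of $p$, but the narrow band of non-zero degrees in $H^*(V_2(\mathbb{R}^d);\mathbb{F}_p)$ feeds directly into the Gysin sequence to give the required non-vanishing inductively.
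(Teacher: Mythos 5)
Your treatment of the case $d$ odd is correct and essentially the paper's argument (three non-trivial rows, only $d_{2d-2}$ can reach the bottom row, and it must be non-zero by finite-dimensionality of $H^*(M;\mathbb{F}_p)$). For $d$ even, however, the step you yourself flag as the main obstacle does not go through as written. You need $e_0^{\,j}\neq 0$ in $H^{2j}(\tilde{G}_2(\mathbb{R}^d);\mathbb{F}_p)$ for $j$ up to $d-2$ (in particular $e_0^{d/2}\neq 0$, to rule out $d_d(y)\neq 0$, and $e_0^{d-2}\neq 0$, to locate the bottom of $\ker\phi$ in degree $2d-2$), and you propose to get this from a Gysin-sequence induction showing that $\cup\, e_0\colon H^{2k-2}(\tilde{G}_2)\to H^{2k}(\tilde{G}_2)$ is an isomorphism for all $k\leq d-2$. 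That claim is false at the middle dimension: since $\tilde{G}_2(\mathbb{R}^d)$ is a closed $(2d-4)$-manifold with $H^{d-2}$ of rank $2$ when $d$ is even (e.g.\ $\tilde{G}_2(\mathbb{R}^4)=S^2\times S^2$), cup product with $e_0$ from the rank-one group $H^{d-4}$ cannot be onto $H^{d-2}$, and the exactness of the Gysin sequence only forces isomorphisms in the ranges where \emph{two consecutive} groups $H^{k-1}(V_2)$, $H^{k}(V_2)$ vanish, i.e.\ away from degrees $d-2$, $d-1$. So the induction carries you up to $e_0^{(d-2)/2}\neq 0$ (injectivity of $\cup\,e_0$ into the middle degree) but says nothing about crossing it: the Gysin sequence only shows that $\cup\,e_0\colon H^{d-2}\to H^{d}$ has a one-dimensional kernel, and does not by itself exclude that $e_0^{(d-2)/2}$ lies in that kernel. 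The statement $e_0^{d-2}\neq 0$ is true, but its proof needs a genuinely different input — e.g.\ the identification of $\tilde{G}_2(\mathbb{R}^d)$ with the complex quadric $Q_{d-2}$ under which $e_0$ is (up to sign) the hyperplane class, whose top power is twice the fundamental class and hence non-zero mod an odd prime; note that oddness of $p$ is essential here, so any correct argument must use it at this point.

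For comparison, the paper sidesteps this entire computation in the even case: it observes that the standard embedding $V_2(\mathbb{R}^{d-1})\hookrightarrow V_2(\mathbb{R}^d)$ is $\mathbb{Z}_p$-equivariant, so monotonicity of the numerical index under equivariant maps gives $i_{\mathbb{Z}_p}(V_2(\mathbb{R}^d))\geq i_{\mathbb{Z}_p}(V_2(\mathbb{R}^{d-1}))=2d-4$ from the already-settled odd case, which rules out the differentials $d_{d-1}$ and $d_d$ for all $d\neq 4$; the remaining case $d=4$ is handled by a short separate argument using that $V_2(\mathbb{R}^3)\to V_2(\mathbb{R}^4)\to S^3$ is null-homotopic. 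If you want to keep your Grassmannian route, you must either supply the quadric computation of $H^*(\tilde{G}_2(\mathbb{R}^d);\mathbb{F}_p)$ as an external input or replace the middle-crossing step by such a dimension-shift argument.
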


\begin{proof}
We work with cohomology over the integers.
Then, the cohomology groups of $V_{2}(\mathbb{R}^d)$ are as follows  (see \cite{Borel}, Prop. 10.1):
If $d$ is odd, we have $H^i(V_{2}(\mathbb{R}^d);\mathbb{Z})\cong\mathbb{Z}$ for $i\in\{0,2d-3\}$, $H^{d-1}(V_{2}(\mathbb{R}^d);\mathbb{Z})\cong\mathbb{Z}_2$, and all other cohomology groups are trivial.
If $d$ is even, we have $H^i(V_{2}(\mathbb{R}^d);\mathbb{Z})\cong\mathbb{Z}$ for $i\in\{0,d-2,d-1,2d-3\}$ and all other cohomology groups are trivial.
Further, from \cite{tomDieck}, III.2 we get that the cohomology groups of the space $B\mathbb{Z}_p$ are
\[   
H^i(B\mathbb{Z}_p;\mathbb{Z}) = 
     \begin{cases}
       0&\quad i\equiv 1\text{ mod }2\\
       \mathbb{Z}_p&\quad i\equiv 0\text{ mod }2, i\neq 0\\
       \mathbb{Z}&\quad i=0\\
     \end{cases}
\]

Recall that in the cohomological Leray-Serre we have $E_2^{p,q}\cong H^p(BG;\mathcal{H}^q(X;R))$, where $\mathcal{H}^q(X;R)$ are local coefficients.
As $X=V_{2,d}$ is path connected and simply connected, we have $E_2^{p,q}\cong H^p(B\mathbb{Z}_p;\mathcal{H}^q(V_{2}(\mathbb{R}^d);\mathbb{Z}))\cong H^p(B\mathbb{Z}_p;\mathbb{Z})\otimes H^q(V_{2}(\mathbb{R}^d);\mathbb{Z})$.

Let us first assume that $d$ is odd.
Then the second page of the spectral sequence only has 3 non-trivial rows: $E_2^{*,0}$, $E_2^{*,d-1}$ and $E_2^{*,2d-3}$.
In particular, the only potentially non-trivial differentials are $d_{d}$ and $d_{2d-2}$.
The differential $d_d$ must be trivial as $H^{d-1}(V_{2}(\mathbb{R}^d);\mathbb{Z})\cong\mathbb{Z}_2$, whereas $p$ is odd.
Thus, the only potentially non-trivial differential is $d_{2d-2}$, showing that $i_{\mathbb{Z}_p}(V_{2}(\mathbb{R}^d))\geq 2d-2$.
On the other hand, the $\mathbb{Z}_p$-action on $V_{2}(\mathbb{R}^d)$ is free and $V_{2}(\mathbb{R}^d)$ is a manifold of dimension $2d-3$, thus by Lemma \ref{lem:numerical_index} we get $i_{\mathbb{Z}_p}(V_{2}(\mathbb{R}^d))\leq 2d-2$, which finishes the proof for the case where $d$ is odd.

If $d$ is even, the second page of the spectral sequence has 4 non-trivial rows: $E_2^{*,0}$, $E_2^{*,d-2}$, $E_2^{*,d-1}$ and $E_2^{*,2d-3}$, implying that $i_{\mathbb{Z}_p}(V_{2}(\mathbb{R}^d))\in\{d-1, d, 2d-2\}$.
As the embedding $V_{2}(\mathbb{R}^{d-1})\hookrightarrow V_{2}(\mathbb{R}^{d})$ is $\mathbb{Z}_p$-equivariant, it follows from Lemma \ref{lem:index} that $i_{\mathbb{Z}_p}(V_{2}(\mathbb{R}^d))\geq i_{\mathbb{Z}_p}(V_{2}(\mathbb{R}^{d-1}))= 2d-4$, which proves the claim for all $d\neq 4$.

It remains to show that $i_{\mathbb{Z}_p}(V_{2}(\mathbb{R}^4))=6$.
From the above arguments we get $i_{\mathbb{Z}_p}(V_{2}(\mathbb{R}^d))\in\{4,6\}$.
If $i_{\mathbb{Z}_p}(V_{2}(\mathbb{R}^4))=4=i_{\mathbb{Z}_p}(V_{2}(\mathbb{R}^3))$, i.e., the differential $d_4$ is not trivial, we would get a non-trivial map $H^3(V_{2}(\mathbb{R}^4))\to H^3(V_{2}(\mathbb{R}^3))$ from the spectral sequence.
However, such a map cannot exist as the composition $V_{2}(\mathbb{R}^3)\to V_{2}(\mathbb{R}^4)\to S^3$, where the second map sends $(x,y)$ to $x$, is null-homotopic.
\end{proof}

We remark here that we could also have shown the above using cohomology over $\mathbb{Z}_p$.
Also, we could have computed the ideal-valued cohomological index $\text{Ind}_{\mathbb{Z}_p}$ instead of the numerical index $i_{\mathbb{Z}_p}$.
For the case where $d$ is odd, this was done by Jelic \cite{Jelic}.
Her computations also go through analogously for $d$ even (using the embedding $V_{2}(\mathbb{R}^{d-1})\hookrightarrow V_{2}(\mathbb{R}^{d})$) if we consider $_{2d-2}\text{Ind}_{\mathbb{Z}_p}$ instead of $\text{Ind}_{\mathbb{Z}_p}$.

Note that in the above proof we only used the fact that $p$ is odd, but never the fact that $p$ is prime.
Thus, what we have actually shown is the following:
\begin{corollary}
\label{cor:makeev_general}
Let $k$ be an odd integer.
Then $i_{\mathbb{Z}_k}(V_{2}(\mathbb{R}^d))=2d-2$.
\end{corollary}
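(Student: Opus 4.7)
The plan is to verify that the proof of Lemma \ref{lem:makeev} goes through verbatim when the odd prime $p$ is replaced by an arbitrary odd integer $k$; the author has flagged this, and the task is just to check that nothing along the way genuinely required $k$ to be prime.

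First I would record the cohomology ring $H^*(B\mathbb{Z}_k;\mathbb{Z})$, which by the standard cell decomposition of the infinite lens space $S^\infty/\mathbb{Z}_k$ satisfies $H^0=\mathbb{Z}$, $H^{2i}\cong\mathbb{Z}_k$ for $i\geq 1$, and $H^{\mathrm{odd}}=0$. This is qualitatively identical to the prime case. I would also note that the $\mathbb{Z}_k$-action on $V_2(\mathbb{R}^d)$ (by rotation of the orthonormal 2-frame by $2\pi/k$) is free, and that the inclusion $V_2(\mathbb{R}^{d-1})\hookrightarrow V_2(\mathbb{R}^d)$ is $\mathbb{Z}_k$-equivariant, since the rotation acts within the 2-plane spanned by the frame and not in the ambient coordinates.

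Next I would set up the Leray--Serre spectral sequence of the Borel fibration $V_2(\mathbb{R}^d)\to (V_2(\mathbb{R}^d))_{\mathbb{Z}_k}\to B\mathbb{Z}_k$. Since $V_2(\mathbb{R}^d)$ is simply connected, coefficients are untwisted and $E_2^{p,q}\cong H^p(B\mathbb{Z}_k;\mathbb{Z})\otimes H^q(V_2(\mathbb{R}^d);\mathbb{Z})\,\oplus\,\mathrm{Tor}$ contributions. For $d$ odd, the only non-zero rows are $q\in\{0,d-1,2d-3\}$, so the only candidate differentials hitting the bottom row are $d_d$ and $d_{2d-2}$. The row $q=d-1$ has coefficients in $H^{d-1}(V_2(\mathbb{R}^d);\mathbb{Z})\cong\mathbb{Z}_2$, tensored and Tor'd against $\mathbb{Z}_k$; since $\gcd(2,k)=1$, both $\mathbb{Z}_k\otimes\mathbb{Z}_2$ and $\mathrm{Tor}(\mathbb{Z}_k,\mathbb{Z}_2)$ vanish, and the relevant entries are zero. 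Hence $d_d=0$ for purely parity reasons, so $i_{\mathbb{Z}_k}(V_2(\mathbb{R}^d))\geq 2d-2$. The matching upper bound comes from Lemma \ref{lem:numerical_index}(2) applied to the free $\mathbb{Z}_k$-action on the $(2d-3)$-manifold $V_2(\mathbb{R}^d)$, using $H^{2d-2}(B\mathbb{Z}_k;\mathbb{Z})=\mathbb{Z}_k\neq 0$ to ensure $\mathrm{Ind}_{\mathbb{Z}_k}\neq 0$.

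For $d$ even, the non-zero rows are $q\in\{0,d-2,d-1,2d-3\}$, so the candidate index lies in $\{d-1,d,2d-2\}$. Applying Lemma \ref{lem:index} to the equivariant embedding $V_2(\mathbb{R}^{d-1})\hookrightarrow V_2(\mathbb{R}^d)$ together with the odd-$d$ case (which now applies to the odd integer $d-1$) yields $i_{\mathbb{Z}_k}(V_2(\mathbb{R}^d))\geq i_{\mathbb{Z}_k}(V_2(\mathbb{R}^{d-1}))=2d-4$. For even $d\geq 6$ this forces the index to be $2d-2$. For the exceptional case $d=4$, where $2d-4=4$ is compatible with both remaining candidates $4$ and $6$, I would reuse the null-homotopy argument from Lemma \ref{lem:makeev}: a non-trivial $d_4$ would produce a non-trivial map $H^3(V_2(\mathbb{R}^4))\to H^3(V_2(\mathbb{R}^3))$, contradicting the fact that the composition $V_2(\mathbb{R}^3)\to V_2(\mathbb{R}^4)\to S^3$, $(x,y)\mapsto x$, is null-homotopic. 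This argument is purely topological and insensitive to $k$.

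The main obstacle, such as it is, lies in justifying that the vanishing of $d_d$ in the odd-$d$ case really only used $k$ odd. Every other step is literally the same as in Lemma \ref{lem:makeev}, so the argument collapses to checking that $\mathbb{Z}_k$ and $\mathbb{Z}_2$ have trivial tensor and Tor whenever $k$ is odd, which is elementary. Consequently no new ideas beyond those already contained in Lemma \ref{lem:makeev} are needed, and the corollary follows.
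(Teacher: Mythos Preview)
Your proposal is correct and follows exactly the paper's approach: the paper's entire proof of the corollary is the one-line observation that the argument for Lemma~\ref{lem:makeev} only used that $p$ is odd, and you have simply spelled out why each step (the cohomology of $B\mathbb{Z}_k$, freeness of the action, the vanishing forced by $\gcd(2,k)=1$, the equivariant embedding, and the $d=4$ null-homotopy argument) is insensitive to primality.
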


Before we give a more general result, for the sake of completeness we first show a proof of Makeev's result on partitions with $p$-fans.

\begin{theorem}
\label{Thm:MakeevOrigin}
Let $p$ be an odd prime and let $2d-3\geq m(p-1)$.
Then any $m+1$ mass distributions $\mathbb{R}^d$ can be simultaneously $(\frac{1}{p}\ldots,\frac{1}{p})$-partitioned by a $p$-fan through the origin.
\end{theorem}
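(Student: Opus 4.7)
The plan is to set up a standard configuration space / test map scheme with $V_2(\mathbb{R}^d)$ as the configuration space, as already sketched at the beginning of the section. Each $(x,y)\in V_2(\mathbb{R}^d)$ determines a unique $p$-fan through the origin equipartitioning $\mu_1$, together with a cyclic ordering $W_1,\ldots,W_p$ of its wedges; the $\mathbb{Z}_p$-action on $V_2(\mathbb{R}^d)$ is chosen so that it shifts the labels. For each of the remaining masses $\mu_2,\ldots,\mu_{m+1}$ define
\[ f_i(x,y)\;=\;\bigl(\tfrac1p-\mu_i(W_1),\ldots,\tfrac1p-\mu_i(W_p)\bigr)\in W, \]
where $W\subset\mathbb{R}^p$ denotes the sum-zero hyperplane (the standard $(p-1)$-dimensional real representation of $\mathbb{Z}_p$); the coordinates sum to zero after normalizing $\mu_i(\mathbb{R}^d)=1$. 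Thus $f=(f_2,\ldots,f_{m+1}):V_2(\mathbb{R}^d)\to W^m$ is $\mathbb{Z}_p$-equivariant, and a zero of $f$ gives a $p$-fan simultaneously equipartitioning all $m+1$ masses. It therefore suffices to show that no $\mathbb{Z}_p$-equivariant map $V_2(\mathbb{R}^d)\to W^m\setminus\{0\}$ exists.

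Suppose such a map existed. By Lemma \ref{lem:index} we would have
\[ i_{\mathbb{Z}_p}\bigl(V_2(\mathbb{R}^d)\bigr)\;\leq\; i_{\mathbb{Z}_p}\bigl(W^m\setminus\{0\}\bigr). \]
The left-hand side equals $2d-2$ by Lemma \ref{lem:makeev}. For the right-hand side, note that $\mathbb{Z}_p$ acts freely on $W\setminus\{0\}$ (its only fixed vector is $0$), hence freely on $W^m\setminus\{0\}$, and that $W^m\setminus\{0\}$ is homotopy equivalent to a sphere of dimension $m(p-1)-1$. Since the action is free and $p$ is odd, $H^{m(p-1)}(B\mathbb{Z}_p;\mathbb{Z})\cong\mathbb{Z}_p$ is nonzero (the exponent $m(p-1)$ is even), so $\mathrm{Ind}_{\mathbb{Z}_p}(W^m\setminus\{0\})$ is nontrivial. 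Lemma \ref{lem:numerical_index}(2) then yields $i_{\mathbb{Z}_p}(W^m\setminus\{0\})\leq m(p-1)$. Combining, we obtain $2d-2\leq m(p-1)$, contradicting the hypothesis $2d-3\geq m(p-1)$.

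The genuinely nontrivial input is the computation $i_{\mathbb{Z}_p}(V_2(\mathbb{R}^d))=2d-2$, which has already been carried out in Lemma \ref{lem:makeev}. The remaining work is the bookkeeping that $(x,y)\mapsto$ $p$-fan is well-defined, continuous, and $\mathbb{Z}_p$-equivariant in the sense described in the introduction to the section, plus the easy upper bound $i_{\mathbb{Z}_p}(W^m\setminus\{0\})\leq m(p-1)$ obtained from freeness of the action on a sphere of the correct dimension. So the only step requiring real content is already packaged into Lemma \ref{lem:makeev}; the proof of Theorem \ref{Thm:MakeevOrigin} itself is essentially an index comparison.
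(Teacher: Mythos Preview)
Your proposal is correct and follows essentially the same approach as the paper: set up the configuration space/test map scheme on $V_2(\mathbb{R}^d)$, reduce to the nonexistence of a $\mathbb{Z}_p$-equivariant map into $W^m\setminus\{0\}\simeq S^{m(p-1)-1}$, and compare numerical indexes via Lemmas~\ref{lem:index}, \ref{lem:numerical_index}, and \ref{lem:makeev}. The only cosmetic difference is that the paper builds the fan so as to equipartition the total mass (and then tests $\mu_1,\ldots,\mu_m$), whereas you equipartition $\mu_1$ and test $\mu_2,\ldots,\mu_{m+1}$; either choice yields $m$ test functions and the same index comparison.
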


Makeev's result now follows from gnomonic projection (Lemma \ref{lem:gnomonic}).

\begin{corollary}[Makeev, see \cite{KarasevSurvey} Theorem 57]
Let $p$ be an odd prime and let $2d-1\geq m(p-1)$.
Then any $m+1$ mass distributions $\mathbb{R}^d$ can be simultaneously $(\frac{1}{p}\ldots,\frac{1}{p})$-partitioned by a $p$-fan.
\end{corollary}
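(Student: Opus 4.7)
The plan is to derive the corollary directly from Theorem \ref{Thm:MakeevOrigin} by lifting to one higher dimension and invoking the gnomonic projection lemma (Lemma \ref{lem:gnomonic}). The key observation is purely arithmetic: the hypothesis $2d-1 \geq m(p-1)$ for the corollary in $\mathbb{R}^d$ is exactly the hypothesis $2d'-3 \geq m(p-1)$ of Theorem \ref{Thm:MakeevOrigin} applied in dimension $d' = d+1$. So the corollary's slightly weaker dimensional requirement is accounted for by the one dimension ``spent'' in the lifting step, not by any new topological input.

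Concretely, I would proceed as follows. Given $m+1$ mass distributions $\mu_1,\ldots,\mu_{m+1}$ in $\mathbb{R}^d$ with $2d-1 \geq m(p-1)$, I would first use the inverse gnomonic projection $\pi^{-1}$ to push these masses onto the open upper hemisphere of $S^d \subset \mathbb{R}^{d+1}$, producing $m+1$ mass distributions on $\mathbb{R}^{d+1}$ (supported on that hemisphere). Since $2(d+1) - 3 = 2d - 1 \geq m(p-1)$, Theorem \ref{Thm:MakeevOrigin} applied in $\mathbb{R}^{d+1}$ produces a $p$-fan $F$ \emph{through the origin} in $\mathbb{R}^{d+1}$ that simultaneously $(\tfrac{1}{p},\ldots,\tfrac{1}{p})$-partitions the lifted masses $\pi^{-1}(\mu_1),\ldots,\pi^{-1}(\mu_{m+1})$.

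Next I would apply the gnomonic projection argument, exactly as in Lemma \ref{lem:gnomonic}. The apex of $F$ is a $(d-1)$-dimensional linear subspace of $\mathbb{R}^{d+1}$ through the origin, and the $p$ semi-hyperplanes defining $F$ are $d$-dimensional half-flats through this apex; intersecting with the upper hemisphere and pushing forward via $\pi$ sends the apex to a $(d-2)$-flat in $\mathbb{R}^d$ and the semi-hyperplanes to $(d-1)$-dimensional half-hyperplanes emanating from it, i.e.\ a genuine $p$-fan $F'$ in $\mathbb{R}^d$ (whose apex is no longer constrained to contain any particular point). Because $\pi$ is a bijection between the open upper hemisphere and $\mathbb{R}^d$ carrying the measures $\pi^{-1}(\mu_i)$ back to $\mu_i$, and it sends each wedge of $F$ to the corresponding wedge of $F'$, the partition property is preserved: $F'$ simultaneously $(\tfrac{1}{p},\ldots,\tfrac{1}{p})$-partitions $\mu_1,\ldots,\mu_{m+1}$.

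There is essentially no obstacle here beyond bookkeeping: all the topology (the index computation for $V_2(\mathbb{R}^{d+1})$ and the resulting non-existence of the relevant $\mathbb{Z}_p$-equivariant map) is already packaged inside Theorem \ref{Thm:MakeevOrigin}, and the lifting/descent mechanism is already packaged inside Lemma \ref{lem:gnomonic}. The only thing worth double-checking is the correct accounting of dimensions in the gnomonic step -- that a $p$-fan through the origin in $\mathbb{R}^{d+1}$ genuinely projects to a $p$-fan (with $(d-2)$-dimensional apex and $p$ half-hyperplanes) in $\mathbb{R}^d$ -- and this is exactly the content of Lemma \ref{lem:gnomonic} for the case of $q$-fans with $q = p$.
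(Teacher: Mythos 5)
Your proposal is correct and matches the paper exactly: the paper derives this corollary from Theorem \ref{Thm:MakeevOrigin} by the same one-dimension lift and descent via Lemma \ref{lem:gnomonic}, with the same arithmetic $2(d+1)-3 = 2d-1 \geq m(p-1)$ accounting for the weaker hypothesis. Your dimension bookkeeping for the apex and semi-hyperplanes under gnomonic projection is also the same as in the paper's proof of that lemma.
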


\begin{proof}[Proof of Theorem \ref{Thm:MakeevOrigin}]
Assume without loss of generality that for each mass distribution $\mu_i$ we have $\mu_i(\mathbb{R^d})=1$.
Let $m$ be such that $2d-1\geq m(p-1)$.
Consider the Stiefel manifold $V_{2}(\mathbb{R}^d)$ of all pairs $(x,y)$ of orthonormal vectors in $\mathbb{R}^d$.
To each $(x,y)\in V_2(\mathbb{R}^d)$ we assign a $p$-fan $F(x,y)$ as follows:
Let $h$ by the linear subspace spanned by $(x,y)$ and let $\pi:\mathbb{R}^d\rightarrow h$ be the canonical projection.
The apex of the $p$-fan $F(x,y)$ is then $\pi^{-1}(0)$.
Further, note that $(x,y)$ defines an orientation on $h$, so we can consider a ray on $h$ rotating in clockwise direction.
Start this rotation at $x$, and let $r_1$ be the (unique) ray such that the area between $x$ and $r_1$ is the projection of a wedge $W_1$ which contains exactly a $\frac{1}{p}$-fraction of the total mass.
Analogously, let $r_i$ be the (unique) ray such that the area between $r_{i-1}$ and $r_i$ is the projection of a wedge $W_i$ which contains exactly a $\frac{1}{p}$-fraction of the total mass.
This construction thus continuously defines a $p$-fan $F(x,y)$ through the origin for each $(x,y)\in V_2(\mathbb{R}^d)$.
Further note that there is a natural $\mathbb{Z}_p$ action on $V_{2}(\mathbb{R}^d)$, defined by $(W_1,W_2,\ldots,W_p)\mapsto (W_p,W_1,\ldots,W_{p-1})$, i.e., by turning by one sector.

For a mass distribution $\mu_i$, $i\in\{1,\ldots,m\}$ we introduce a test map $f_i(x,y):V_{2}(\mathbb{R}^d)\rightarrow\mathbb{R}^p$ by
$$f_i(x,y):=(\mu_i(W_1)-\frac{1}{p},\mu_i(W_2)-\frac{1}{p},\ldots,\mu_i(W_p)-\frac{1}{p}).$$
Note that the image of $f_i$ is contained in the hyperplane $Z=\{y\in\mathbb{R^p}:y_1+y_2+\ldots+y_p=0\}$ of dimension $p-1$ and that $f_i(x,y)=0$ implies that $F(x,y)$ $(\frac{1}{p}\ldots,\frac{1}{p})$-partitions $\mu_i$.
In particular, if $f_i(x,y)=0$ for all $i\in\{1,\ldots,m\}$, then $F(x,y)$ simultaneously equipartitions $\mu_1,\ldots,\mu_m$, and thus, as $F(x,y)$ equipartitions the total mass by construction, it also equipartitions $\mu_{m+1}$.
We thus want to show that all test maps have a common zero.
To this end, we note that the $\mathbb{Z}_p$ action on $V_{2}(\mathbb{R}^d)$ induces a $\mathbb{Z}_p$ action $\nu$ on $Z$ by $\nu(y_1,y_2,\ldots,y_p)=(y_p,y_1,\ldots,y_{p-1})$.
Further, as $p$ is prime, this action is free on $Z\setminus\{0\}$ and thus also on $Z^m\setminus\{0\}$.
Thus, if we assume that the test maps do not have a common zero, they induce a $\mathbb{Z}_p$-map $f:V_{2}(\mathbb{R}^d)\rightarrow Z^m\setminus\{0\}$.
We will now show that there is no such map.

To this end, we first note that the dimension of $Z^m$ is $m(p-1)$ and that, after normalizing, $f$ induces a map $f': V_{2}(\mathbb{R}^d)\rightarrow S^{m(p-1)-1}$.
As $\mathbb{Z}_p$ acts freely on $Z^m\setminus\{0\}$, it also does so on $S^{m(p-1)-1}$, thus we get from Lemma \ref{lem:numerical_index} that $i_{\mathbb{Z}_p}(S^{m(p-1)-1})\leq\text{dim}(S^{m(p-1)-1})+1=m(p-1)$.
On the other hand, from Lemma \ref{lem:makeev} we have $i_{\mathbb{Z}_p}(V_{2}(\mathbb{R}^d))=2d-2$.
By Lemma \ref{lem:index}, $f'$ (and thus also $f$) can only exist if $2d-2\leq m(p-1)$.
But $m$ was chosen such that $2d-3\geq m(p-1)$, so $f'$ and $f$ cannot exist.
\end{proof}

Using the techniques from \cite{Barany2}, this can be extended to more general partitions.

\begin{theorem}
Let $p$ be an odd prime, let $(a_1,\ldots,a_q)\in\mathbb{N}^q$ with $q<p$ and $a_1+\ldots +a_q=p$
\begin{itemize}
\item Let $2d-2\geq m(p-1)$.
Then any $m+1$ mass distributions $\mathbb{R}^d$ can be simultaneously $(\frac{a_1}{p}\ldots,\frac{a_q}{p})$-partitioned by a $q$-fan through the origin.
\item Let $2d\geq m(p-1)$.
Then any $m+1$ mass distributions $\mathbb{R}^d$ can be simultaneously $(\frac{a_1}{p}\ldots,\frac{a_q}{p})$-partitioned by a $q$-fan.
\end{itemize}
\end{theorem}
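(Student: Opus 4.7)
The plan is to generalize the proof of Theorem \ref{Thm:MakeevOrigin}, keeping the parametrization of candidate $p$-fans through the origin by $V_2(\mathbb{R}^d)$ but replacing the test map by one that simultaneously encodes all $p$ possible starting positions of the cyclic grouping. Writing $F(x,y)$ for the $p$-fan with sectors $W_1,\ldots,W_p$ equipartitioning the total mass, for each $j\in\mathbb{Z}_p$ and each mass $\mu_i$ I would define
\[
g_j^{(i)}(x,y)=\Bigl(\mu_i(W^{(j)}_1)-\tfrac{a_1}{p},\,\ldots,\,\mu_i(W^{(j)}_{q-1})-\tfrac{a_{q-1}}{p}\Bigr)\in\mathbb{R}^{q-1},
\]
where $W^{(j)}_l=\bigcup_{k=j+\sum_{l'<l}a_{l'}+1}^{j+\sum_{l'\leq l}a_{l'}} W_k$ (indices mod $p$) is the $l$-th wedge of the $q$-fan obtained by grouping starting at sector $j+1$. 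Assembling these produces a map $\tilde G\colon V_2(\mathbb{R}^d)\to T:=(\mathbb{R}^{m(q-1)})^p$ which is $\mathbb{Z}_p$-equivariant, where $\mathbb{Z}_p$ acts on $T$ by cyclic permutation of the $p$ blocks (the cyclic rotation of sectors shifts each starting position by one).

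The grouping at position $j$ achieves the desired partition of $\mu_1,\ldots,\mu_{m+1}$ iff the $j$-th block of $\tilde G(x,y)$ vanishes, so if the theorem fails then $\tilde G$ avoids $B_1:=\bigcup_j\{y_j=0\}$. The key additional observation is that $\tilde G$ can only hit the diagonal $\Delta=\{(y,\ldots,y)\}$ of $T$ at zero: if all blocks of $\tilde G(x,y)$ coincide, comparing consecutive positions forces $\mu_i(W_k)=\mu_i(W_{k+a_l})$ for every $k$ and every $l\in\{1,\ldots,q-1\}$; since $p$ is prime and $0<a_l<p$, each $a_l$ generates $\mathbb{Z}_p$, so $\mu_i(W_k)$ is constant in $k$ and hence equal to $1/p$, which makes every block of $\tilde G(x,y)$ vanish. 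Consequently, assuming the theorem fails, $\tilde G$ maps into $T\setminus(B_1\cup\Delta)$, which radially deformation-retracts $\mathbb{Z}_p$-equivariantly to $Y:=(S^{m(q-1)-1})^p\setminus\Delta$, on which the cyclic $\mathbb{Z}_p$-action is free.

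To derive a contradiction via Lemma \ref{lem:index}, combined with $i_{\mathbb{Z}_p}(V_2(\mathbb{R}^d))=2d-2$ from Corollary \ref{cor:makeev_general} and the hypothesis $2d-2\geq m(p-1)$, it suffices to prove $i_{\mathbb{Z}_p}(Y)\leq m(p-1)-1$. This is the main obstacle. The obvious equivariant projection $Y\to(\mathbb{R}^{m(q-1)}\otimes Z)\setminus\{0\}\simeq S^{m(q-1)(p-1)-1}$ obtained by subtracting the diagonal component only yields the much weaker bound $m(q-1)(p-1)$, so a more delicate argument is needed. I expect to use the Leray--Serre spectral sequence of the Borel fibration $Y\to Y_{\mathbb{Z}_p}\to B\mathbb{Z}_p$, combined with the Thom--Gysin sequence relating $H^*(Y)$ to $H^*((S^{n-1})^p)$ and $H^*(\Delta)$ and the explicit cyclic permutation action on these rings (where $n=m(q-1)$), to exhibit a transgression differential that kills an element in $H^{\leq m(p-1)-1}(B\mathbb{Z}_p;\mathbb{Z}_p)$; this is in the spirit of the computations in \cite{Barany2} for the planar setting.

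Finally, the second bullet follows from the first by applying Lemma \ref{lem:gnomonic}: a $q$-fan through the origin in $\mathbb{R}^{d+1}$ projects gnomonically to a $q$-fan (no longer required to pass through the origin) in $\mathbb{R}^d$, and the through-origin hypothesis $2(d+1)-2\geq m(p-1)$ coincides with the stated bound $2d\geq m(p-1)$.
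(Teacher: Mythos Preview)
Your proposal leaves the essential step unproved. You correctly identify that everything hinges on the bound $i_{\mathbb{Z}_p}(Y)\le m(p-1)-1$, but then only gesture at a spectral-sequence computation; since $\dim Y=p\bigl(m(q-1)-1\bigr)$ is typically much larger than $m(p-1)-1$, this is not routine, and without it there is no proof. There is also a flaw in the reduction to $Y$: block-wise radial normalisation sends $T\setminus B_1$ onto $(S^{n-1})^p$, but it does not carry $T\setminus(B_1\cup\Delta)$ into $(S^{n-1})^p\setminus\Delta$; for instance $(v,2v,\ldots,pv)$ is off the diagonal yet normalises to a point on the spherical diagonal. So even the passage to $Y$ is not justified as stated.

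The paper takes a shorter route that avoids both issues. It keeps the original test map $f\colon V_2(\mathbb{R}^d)\to Z^m$ from Theorem~\ref{Thm:MakeevOrigin} unchanged; note that your $\tilde G$ is just $f$ followed by an injective linear map, since each block $g_j^{(i)}$ is a linear function of $\bigl(\mu_i(W_1)-\tfrac1p,\ldots,\mu_i(W_p)-\tfrac1p\bigr)$. The solution set inside $Z^m$ is then the $\mathbb{Z}_p$-orbit of the linear subspace $L^m$, where $L\subset Z$ is cut out by the partial-sum equations, and this is precisely the preimage of your $B_1$. The entire remaining content is outsourced to Lemma~6.1 of \cite{Barany2}, which says that the complement $Z^m\setminus\bigcup_j(\nu^jL)^m$ admits a $\mathbb{Z}_p$-equivariant map to an $\bigl(m(p-1)-2\bigr)$-dimensional manifold with free $\mathbb{Z}_p$-action; Lemma~\ref{lem:numerical_index} then gives the required index bound at once, and one concludes by Lemmas~\ref{lem:index} and~\ref{lem:makeev}. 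Your diagonal observation is correct but not needed; the relevant constraint is already captured by staying in $Z^m$. Your deduction of the second bullet from the first via gnomonic projection (Lemma~\ref{lem:gnomonic}) is correct.
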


\begin{proof}
The second part again follows from gnomonic projection, so we only show the first part.
We take the same configuration space and test maps as in the previous proof, only that we now have more possible solutions to exclude from $Z^m$.
In particular, let $L_i:=\{y\in\mathbb{R}^{p}: y_1+\ldots y_{a_1}=0, y_{a_1+1}+\ldots y_{a_1+a_2}=0,\ldots,y_{a_1+\ldots+a_{q-1}+1}+\ldots+y_p=0\}$.
Further let $\mathcal{L}_i:=\{L_i, \nu(L_i), \nu^2(L_i),\ldots,\nu^{p-1}(L_i)\}$.
We now want to show that there is no $\mathbb{Z}_p$-equivariant map $f:V_{2}(\mathbb{R}^d)\rightarrow Z^m\setminus\bigcup_{1\leq i\leq k}\mathcal{L}_i$.
By Lemma 6.1 in \cite{Barany2}, $f$ would induce a map $f': V_{2}(\mathbb{R}^d)\rightarrow M$, where $M$ is a $((p-1)k-2)$-dimensional manifold on which $\mathbb{Z}_p$ acts freely.
The non-existence of such a map now follows again from Lemmas \ref{lem:index}, \ref{lem:numerical_index} and \ref{lem:makeev}.
\end{proof}

The main result can now be proved using similar techniques.

\fans*

\begin{proof}[Proof of Theorem \ref{Thm:fans_main}]
The second part again follows from gnomonic projection, so we only show the first part.
Using the same configuration space and test map as above, we get a $\mathbb{Z}_k$-map $f: V_{2}(\mathbb{R}^d)\rightarrow S^{m(k-1)-1}$.
The issue is now that $\mathbb{Z}_k$ acts fixed-point-free, but not freely on $S^{m(k-1)-1}$: if, for example, for every $i$ and $j$ we have that $\mu_i(W_{p_1-j})=\mu_i(W_{2p_1-j})=\ldots=\mu_i(W_{k-j})$, the the subgroup of $\mathbb{Z}_k$ that is isomorphic to $\mathbb{Z}_{p_1}$ will leave the corresponding point on $S^{m(k-1)-1}$ invariant.
We call the subspace of all points of $S^{m(k-1)-1}$ which are of this form \emph{$p_1$-invariant}, and analogously for $p_2,\ldots,p_n$.
More generally, for any subset $P\subset\{p_1,\ldots,p_n\}=:\Pi$, we call a subspace of $S^{m(k-1)-1}$ \emph{$P$-invariant} if it is $p_i$-invariant for every $p_i\in P$.
Note that the action is fixed-point-free, so there is no $\Pi$-invariant subspace.
Finally, we call a subspace of $S^{m(k-1)-1}$ \emph{strictly $P$-invariant} if it is $P$-invariant but not $P'$-invariant for any $P'\supset P$.
We denote this subspace by $S_P$.

For each subset $P\subseteq\{p_1,\ldots,p_n\}$, we define a corresponding subgroup $\mathbb{Z}_k^P$ of $\mathbb{Z}_k$ as $\mathbb{Z}_k^P:=\bigtimes_{p_i\in P}\mathbb{Z}_{p_i}\cong \mathbb{Z}_a$, for $a=\prod_{p_i\in P}p_i$.
From the above definitions we have that a strictly $P$-invariant subspace $S_P$ is invariant under the action of the subgroup $\mathbb{Z}_k^P$.
On the other hand, the subgroup $\mathbb{Z}_k^{\bar{P}}$ (where $\bar{P}=\Pi\setminus P$) acts freely on $S_P$.
Note that as $p_1,\ldots,p_n$ are distinct primes we have that $\mathbb{Z}_k^P\times\mathbb{Z}_k^{\bar{P}}\cong\mathbb{Z}_k$.

We now have $S^{m(k-1)-1}=\bigcup_{P\subset\Pi}S_P$.
We will compute the cohomological index $\text{Ind}_{\mathbb{Z}_k}(S_P)$ for each $P\subset\Pi$.
By Lemma \ref{lem:index_product} and the above observation that $S_P$ is invariant under the action of $\mathbb{Z}_k^P$ we have $\text{Ind}_{\mathbb{Z}_k}(S_P)=(\text{Ind}_{\mathbb{Z}_k^{\bar{P}}}(S_P))\otimes H^*(B\mathbb{Z}_k^P)$.
It is known that for a finite cyclic group $\mathbb{Z}_l$ the cohomology ring of $B\mathbb{Z}_l$ is given by$H^*(B\mathbb{Z}_l)\cong \faktor{\mathbb{Z}[\alpha]}{(l\alpha)}$ (see e.g.\ Example 3.41 in \cite{Hatcher}, combined with the universal coefficients theorem).
In our case we further get
$$H^*(B\mathbb{Z}_k)\cong\bigotimes_{i=1}^n H^*(B\mathbb{Z}_{p_i})\cong\bigotimes_{i=1}^n \faktor{\mathbb{Z}[\alpha_i]}{(p_i\alpha_i)}\cong\faktor{\mathbb{Z}[\alpha_1,\ldots,\alpha_n]}{(p_1\alpha_1,\ldots,p_n\alpha_n)}\cong\faktor{\mathbb{Z}[\alpha]}{(k\alpha)},$$
where for the last isomorphism we can for example set $\alpha=\alpha_1+\ldots+\alpha_n$.
Similarly, the ring $H^*(B\mathbb{Z}_k^P)$ is generated by the $\alpha_i$'s that correspond to the $p_i$'s in $P$.

In order to compute $\text{Ind}_{\mathbb{Z}_k^{\bar{P}}}(S_P)$, we note that $\mathbb{Z}_k^{\bar{P}}$ acts freely on $S_P$ and thus, by the comment after Lemma \ref{lem:index}, we have $H^i(B\mathbb{Z}_k^{\bar{P}})\subseteq\text{Ind}_{\mathbb{Z}_k^{\bar{P}}}(S_P)$ for all $i>dim(S_P)\leq m(k-1)-1$.
We thus get that $\text{Ind}_{\mathbb{Z}_k}(S_P)$ contains, among others, all polynomials whose monomials all have degree larger than $m(k-1)-1$.
In particular, it contains polynomials that are not zero divisors in $H^*(B\mathbb{Z}_k)$, such as the polynomial $\alpha_1^{mk}+\ldots+\alpha_n^{mk}$.
It follows that $\prod_{P\subset\Pi}\text{Ind}_{\mathbb{Z}_k}(S_P)\neq 0$, and thus, by Lemma \ref{lem:index_union} $\text{Ind}_{\mathbb{Z}_k}(S^{m(k-1)-1})\neq 0$.
Lemma \ref{lem:index} now implies $i_{\mathbb{Z}_p}(S^{m(k-1)-1})\leq m(k-1)$.
On the other hand, from Corollary \ref{cor:makeev_general} we have $i_{\mathbb{Z}_k}(V_{2}(\mathbb{R}^d))=2d-2$.
By Lemma \ref{lem:index}, $f$ can only exist if $2d-2\leq m(p-1)$.
But $m$ was chosen such that $2d-3\geq m(p-1)$, so $f$ cannot exist.
\end{proof}

Instead of $k$-fans, we can apply the same idea to fans of $k$ double wedges.
More precisely, consider again the Stiefel manifold $V_{2}(\mathbb{R}^d)$ of all pairs $(x,y)$ of orthonormal vectors in $\mathbb{R}^d$.
To each $(x,y)\in V_2(\mathbb{R}^d)$ we assign a fan of $k$ double wedges $D(x,y)$ as follows:
We again consider the oriented linear subspace $h$ spanned by $(x,y)$.
Draw a line $\ell_1$ through $x$.
Then rotate another line $\ell_2$ in clockwise direction until the double wedge defined by $\ell_1$ and $\ell_2$ contains exactly a $\frac{1}{k}$-fraction of the first mass.
This can be continued, defining $k$ double wedges with a common apex, each of which contains exactly a $\frac{1}{k}$-fraction of the first mass.
As for $k$-fans, this construction continuously defines a fan of $k$ double wedges $D(x,y)$ through the origin for each $(x,y)\in V_2(\mathbb{R}^d)$.
Further, turning by two sectors defines a $\mathbb{Z}_k$-action on $V_{2}(\mathbb{R}^d)$ (turning $D(x,y)$ by one sector $k$ times, we retrieve $D(-x,-y)$, which is why we turn by two sectors; note that $k$ is odd).
The same arguments as above now give us the following:

\begin{corollary}
Let $k=p_1p_2\cdots p_n$ be a product of pairwise distinct odd primes.
\begin{itemize}
\item Let $2d-3\geq m(k-1)$.
Then any $m+1$ mass distributions $\mathbb{R}^d$ can be simultaneously $(\frac{1}{k}\ldots,\frac{1}{k})$-partitioned by a fan of $k$ double wedges through the origin.
\item Let $2d-1\geq m(k-1)$.
Then any $m+1$ mass distributions $\mathbb{R}^d$ can be simultaneously $(\frac{1}{k}\ldots,\frac{1}{k})$-partitioned by a fan of $k$ double wedges.
\end{itemize}
\end{corollary}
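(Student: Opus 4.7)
My plan is to mirror the proof of Theorem \ref{Thm:fans_main}, using the configuration space and the $\mathbb{Z}_k$-action described in the paragraph preceding the statement. As usual, the second item will follow from the first by gnomonic projection (Lemma \ref{lem:gnomonic}), so I will focus on the first.

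After normalizing each $\mu_i$ to total measure $1$, I will note that $D(x,y) = (D_1,\ldots,D_k)$ is by construction already a $(\tfrac{1}{k},\ldots,\tfrac{1}{k})$-partition of $\mu_1$. For the remaining $m$ masses I will introduce test maps
$$f_i(x,y) := \bigl(\mu_i(D_1) - \tfrac{1}{k},\ldots,\mu_i(D_k) - \tfrac{1}{k}\bigr), \qquad i = 2,\ldots,m+1,$$
which land in the hyperplane $Z = \{y \in \mathbb{R}^k : \sum_j y_j = 0\} \cong \mathbb{R}^{k-1}$, since the $k$ double wedges of $D(x,y)$ tile $\mathbb{R}^d$. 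A common zero of $f_2,\ldots,f_{m+1}$ is precisely a simultaneous equipartition of all $m+1$ masses, so it will suffice to rule out a $\mathbb{Z}_k$-equivariant map $f : V_2(\mathbb{R}^d) \to Z^m \setminus \{0\}$.

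The key observation is that since $k$ is odd we have $\gcd(2,k) = 1$, so the shift-by-two action on the coordinates of each $Z$-factor generates the same cyclic subgroup of $S_k$ as the standard shift-by-one used in Theorem \ref{Thm:fans_main}. Hence the induced $\mathbb{Z}_k$-representation on $Z^m$ is isomorphic to the one analyzed there, with the identical stratification into $P$-invariant subspaces $S_P$. The index computation from that proof will therefore transfer verbatim via Lemmas \ref{lem:index_product} and \ref{lem:index_union} to give $i_{\mathbb{Z}_k}(S^{m(k-1)-1}) \leq m(k-1)$. Combined with $i_{\mathbb{Z}_k}(V_2(\mathbb{R}^d)) = 2d-2$ from Corollary \ref{cor:makeev_general} and with Lemma \ref{lem:index}, the hypothesis $2d - 3 \geq m(k-1)$, equivalently $2d-2 > m(k-1)$, will rule out $f$.

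The main conceptual point I will need to verify is that "rotation by two sectors" genuinely defines a $\mathbb{Z}_k$-action on $V_2(\mathbb{R}^d)$: the naive shift-by-one applied $k$ times returns $D(-x,-y)$ rather than $D(x,y)$, so it has order $2k$, and one must square it to obtain an honest $\mathbb{Z}_k$-action, which is well-defined precisely because $k$ is odd. Beyond this bookkeeping, no new index-theoretic or spectral-sequence input beyond what was already developed for $k$-fans should be required.
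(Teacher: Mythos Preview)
Your proposal is correct and follows the paper's approach exactly. The paper's own proof consists of the single sentence ``The same arguments as above now give us the following'', together with the paragraph preceding the corollary that sets up $D(x,y)$ and the turn-by-two-sectors $\mathbb{Z}_k$-action; your write-up simply spells out those ``same arguments'', including the observation that shift-by-two on the $Z$-coordinates generates the same cyclic permutation group as shift-by-one because $k$ is odd, so that the index computation of Theorem~\ref{Thm:fans_main} carries over verbatim.
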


Given a fan of $k$ double wedges, we can perform a projective transformation which sends one of the hyperplanes to the hyperplane at infinity.
The apex, i.e., the intersection of all the hyperplanes, is then also sent to infinity.
In particular, the other hyperplanes are parallel after this projective transformation.
This proves the following:

\begin{corollary}
Let $k=p_1p_2\cdots p_n$ be a product of pairwise distinct odd primes.
Let $2d-1\geq m(k-1)$ and let $\mu_1,\ldots,\mu_{m+1}$ be $m+1$ mass distributions in $\mathbb{R}^d$.
Then there exists a projective transformation $\varphi$ such that $\varphi(\mu_1),\ldots,\varphi(\mu_{m+1})$ can be simultaneously $(\frac{1}{k}\ldots,\frac{1}{k})$-partitioned by $k-1$ parallel hyperplanes.
\end{corollary}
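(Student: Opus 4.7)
The plan is to invoke the immediately preceding corollary (on fans of $k$ double wedges whose apex contains the origin extended by gnomonic projection to arbitrary apex) and then apply a carefully chosen projective transformation. The previous corollary yields a fan of $k$ double wedges $D_1,\ldots,D_k$ with a common apex $a$ (a $(d-2)$-dimensional flat) such that each $D_j$ contains exactly a $\tfrac{1}{k}$-fraction of each mass $\mu_i$. The fan is defined by $k$ hyperplanes $h_1,\ldots,h_k$ all containing $a$, with $D_j$ being the double wedge bounded by $h_j$ and $h_{j+1}$ (indices mod $k$).

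Next, I would pick any one of these hyperplanes, say $h_k$, and let $\varphi$ be a projective transformation of $\mathbb{R}^d$ that sends $h_k$ to the hyperplane at infinity; such a transformation exists (for instance via gnomonic projection, analogous to the construction preceding Lemma \ref{Lem:dw_projective}). The apex $a\subset h_k$ is therefore pushed entirely to infinity. Since each of the remaining hyperplanes $h_1,\ldots,h_{k-1}$ contains $a$, their images $\varphi(h_1),\ldots,\varphi(h_{k-1})$ all share the same $(d-2)$-flat at infinity, which is exactly the condition for them to be pairwise parallel in the affine chart.

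The final step is to identify the images of the double wedges. Since removing $h_k$ splits each double wedge $D_j$ into two half-wedges lying on opposite sides of $h_k$, and $\varphi$ identifies these two sides into a single affine slab, the image $\varphi(D_j)$ is precisely the slab bounded by $\varphi(h_j)$ and $\varphi(h_{j+1})$ (or the unbounded half-space on the "outside" for $j=k-1$ and $j=k$). The $k-1$ parallel hyperplanes $\varphi(h_1),\ldots,\varphi(h_{k-1})$ thus partition $\mathbb{R}^d$ into exactly $k$ slabs, which are the images $\varphi(D_1),\ldots,\varphi(D_k)$. Since the pushforward measures satisfy $\varphi(\mu_i)(\varphi(D_j))=\mu_i(D_j)=\tfrac{1}{k}\mu_i(\mathbb{R}^d)=\tfrac{1}{k}\varphi(\mu_i)(\mathbb{R}^d)$ (up to the measure-zero set $h_k$, on which $\mu_i$ vanishes by the definition of a mass distribution), the resulting slab partition is exactly a $(\tfrac{1}{k},\ldots,\tfrac{1}{k})$-partition of the transformed masses.

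The only subtle point that needs attention is verifying that the combinatorial structure of the fan of double wedges really does translate to a slab structure under $\varphi$; this amounts to checking that a pair of opposite cones at the common apex $a$, when $a$ is sent to infinity, becomes the strip of space strictly between two parallel hyperplanes. This is straightforward once one chooses coordinates in which $h_k=\{x_d=0\}$ and the apex lies in the plane $x_d=0$, but it is the one step in the argument where I would be most careful to write down cleanly.
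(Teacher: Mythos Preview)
Your proposal is correct and follows exactly the same route as the paper: invoke the preceding corollary to obtain a fan of $k$ double wedges sharing a common apex, then apply a projective transformation sending one of the $k$ hyperplanes (and hence the apex) to infinity, so that the remaining $k-1$ hyperplanes become parallel. The paper's own argument is a terse three-sentence version of what you wrote; your additional verification that the double wedges become the $k$ slabs cut out by the parallel hyperplanes is a detail the paper leaves implicit.
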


%
%
%

\bibliographystyle{plainurl} %
\bibliography{refs}

\end{document}